\newtheorem{definition}{Definition}[section]
\newtheorem{lemma}{Lemma}
\newtheorem{theorem}[lemma]{Theorem}
\newtheorem{proposition}[lemma]{Proposition}
\newtheorem{corollary}[lemma]{Corollary}
\def\R{\mathbb{R}}
\def\N{\mathbb{N}}
\title{Analysis of Neural Activation in Time-dependent Membrane Capacitance Models\footnotemark[1]}
\author{Mat\'ias Courdurier\footnotemark[4] \and Leonel E. Medina \footnotemark[5] \and Esteban Paduro\footnotemark[2] \footnotemark[3]}
\date{January 2025}
\begin{document}
\maketitle

\footnotetext[1]{This work has been partially supported by ANID Millennium Science Initiative Program through Millennium Nucleus for Applied Control and Inverse Problems NCN19-161, the grants ANID FONDECYT postdoctorado 3240512, ANID FONDECYT Regular 1240200, ANID Exploracion 13220082 and the Centro de Modelamiento Matemático (CMM) BASAL fund FB210005 for center of excellence from ANID-Chile.
}
\footnotetext[2]{Corresponding author: Esteban Paduro. Email: {\tt esteban.paduro@uc.cl}}

\footnotetext[3]{Instituto de Ingenier\'ia Matem\'atica y Computacional, Facultad de Matem\'aticas, Pontificia Universidad Cat\'olica de Chile,  Avda. Vicu\~na Mackenna 4860, Macul, Santiago, Chile.
}

\footnotetext[4]{Departamento de Matem\'atica, Facultad de Matem\'aticas, Pontificia Universidad Cat\'olica de Chile. Avda. Vicu\~na Mackenna 4860, Santiago, Chile.
}

\footnotetext[5]{Ingenier\'ia Biom\'edica, Facultad de Ingenier\'ia, Universidad de Santiago de Chile, Avda. V\'ictor Jara 3659, Santiago, Chile. 
}
\begin{abstract}
Most models of neurons incorporate a capacitor to account for the marked capacitive behavior exhibited by the cell membrane. However, such capacitance is widely considered constant, thereby neglecting the possible effects of time-dependent membrane capacitance on neural excitability. This study presents a modified formulation of a neuron model with time-dependent membrane capacitance and shows that action potentials can be elicited for certain capacitance dynamics. Our main results can be summarized as: (a) it is necessary to have significant and abrupt variations in the capacitance to generate action potentials; (b) certain simple and explicitly constructed capacitance profiles with strong variations do generate action potentials; (c) forcing abrupt changes in the capacitance too frequently may result in no action potentials. These findings can have great implications for the design of ultrasound-based or other neuromodulation strategies acting through transiently altering the membrane capacitance of neurons.
\end{abstract}

\vspace{0.2 cm}

{\bf Keywords:} FitzHugh-Nagumo equation, time-dependent membrane capacitance, neurostimulation, LIFUS

\vspace{0.2 cm} {\bf AMS subject classifications 2020:} 37N25, 92-10.
\section{Introduction}

The cell membrane is a biological structure that separates and protects the interior of a cell from the surroundings, coordinating extracellular signals with intracellular responses and vice versa \cite{Rossy2014}. Due to its molecular composition, \textit{i.e.}, two layers of phospholipids with opposing hydrophilic surfaces, the cell membrane exhibits a marked capacitive electrical behavior. In consequence, most circuit models of neurons and other biological cells, \textit{e.g.}, FitzHugh-Nagumo (FHN), and Hodgkin-Huxley (HH), include a capacitor to account for the membrane capacitance. In these models, the membrane capacitance has been largely considered constant. However, diverse phenomena suggest that a more accurate description might be required. For instance, the frequency of electrical stimuli alters the membrane capacitance~\cite{Howell2015a},  membrane capacitance can vary due to molecular forces \cite{Monajjemi2015}, the measurement of membrane capacitance depends on the used method~\cite{Golowasch2009}, and certain membranes can exhibit a memcapacitance behavior \cite{Najem2019}. Further, it has been proposed that Low-Intensity Focused Ultrasound (LIFUS)  alters membrane capacitance through an electromechanical coupling that would result in neural excitation \cite{lemaireMorphoSONICMorphologicallyStructured2021}. In the proposed model, ultrasound induces cavitations in the cell membrane, thereby causing temporal variations in the membrane capacitance.

In this work, we introduce a modified FHN neuron model that includes a time-dependent membrane capacitance~\eqref{system_variable_capacitance}. Using this approach, we study neural activation in response to the dynamics of membrane capacitance, a phenomenon largely ignored in computational neuroscience. The model equation is:
\begin{equation}\label{system_variable_capacitance}
\left\{\begin{array}{rl}
    \frac{d}{dt}(C(t) v) & = v - v^3/3 - w,\\
    \frac{dw}{dt} &= \varepsilon(v-\gamma w + \beta),\\
    v(0) &= v_0,\quad w(0) = w_0,
\end{array}\right.
\end{equation}
where $\varepsilon$, $\gamma$, $\beta$ are positive parameters, $C(t)$ is a bounded positive-valued function describing the membrane capacitance, $v(t)$ is the membrane potential, and $w(t)$ is the membrane recovery variable.

We aim to determine how to generate action potentials through temporal changes in the membrane capacitance, demonstrating that (i) the capacitance variations must be sufficiently large, (ii) the variations must occur abruptly, and (iii) the variations should not occur too frequently. As a result, our work provides a guide on the design of capacitance functions that elicit neural activation.

\begin{figure}
    \centering
    \includegraphics[width=0.9\linewidth]{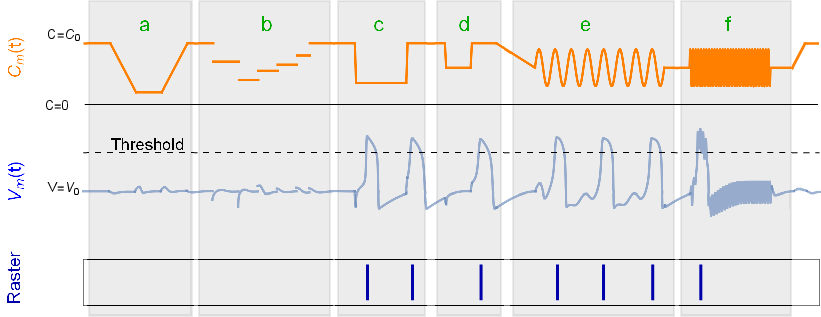}

    \caption{\textbf{Action potentials due to time-dependent membrane capacitance}. (a) Despite reaching very small values (\textit{i.e.}, large downward variation), slow changes in capacitance do not elicit action potentials. (b) Discontinuities in capacitance are also insufficient to trigger action potentials if the jumps are not large enough. (c) Both downward and upward abrupt changes can elicit activation after membrane hyperpolarization and depolarization, respectively.  (d) The capacitance exhibits two abrupt changes. The downward jump is too small to elicit an action potential, albeit hyperpolarizing the membrane, but the upward jump triggers an action potential.  (e) Oscillatory capacitance elicits persistent activation. In this case, capacitance changes are significant enough but not too frequent. (f) The oscillatory capacitance of higher frequency does not elicit persistent activation because the repetitions are too fast. We observe a single onset action potential due to transient effects.  }
    \label{fig_intro}
\end{figure}

In this paper, we develop different tools to analyze the membrane voltage response triggered by changes in capacitance (Figure~\ref{fig_intro}), providing a mathematical characterization of such response under specific capacitance dynamics. We remark that in this work, we consider a modification of the FHN equation, but many of the results presented here can be readily extended to other neuron models like the HH system, as we show with numerical experiments.

\subsection{Biological motivation} \label{subsection_model_capacitance}

Our work is inspired by recent demonstrations of neural excitation using LIFUS~\cite{lemaireMorphoSONICMorphologicallyStructured2021, lemaireUnderstandingUltrasoundNeuromodulation2019}, a technique that has received great attention recently due to its potential to reach deep neural tissue safely and noninvasively~\cite{Darmani2022,murphy2024}. Ultrasound neuromodulation presumes an electromechanical coupling between the ultrasonic source and the cell membrane. Such coupling would induce membrane cavitations that, due to structural changes in the lipid bilayer, would result in a transient reduction in capacitance, which may increase cell excitability. In addition, ultrasound-induced membrane hyperpolarization has also been reported, a phenomenon that would not directly involve capacitance changes but mechanosensitive ion channels~\cite{yuUltrasoundInducedMembraneHyperpolarization2023}. Nonetheless, the exact mechanisms of ultrasound-based cell activation or inhibition are only partially understood, and several hypotheses have been postulated to explain its effects on membranes~\cite{blackmoreUltrasoundNeuromodulationReview2019}. In our work, we explore a paradigm in which we decouple possible electromechanical mechanisms and assume that the membrane capacitance follows a time-dependent function that, notwithstanding, could be generated by an external (mechanical) force. This paradigm may have biological implications as the membrane capacitance of certain neurons appears to naturally vary with circadian rhythms, a phenomenon that would modulate synaptic coupling \cite{severinDailyOscillationsNeuronal2024}. 

Recent LIFUS models showed abrupt ultrasound-induced changes in membrane capacitance, depending on the type of axon (myelinated vs. unmyelinated), fiber morphology, and ultrasound source parameters~\cite{lemaireMorphoSONICMorphologicallyStructured2021}. Extending what was observed for unmyelinated axons subjected to LIFUS, we present mathematical results for slightly more general functions, which we call admissible capacitances, including capacitances with abrupt variations and piecewise-constant capacitance functions. Importantly, for our analysis, because of well-posedness results, we can assume that capacitances with abrupt variations can be reduced to piecewise constant capacitances.

\begin{definition}[Admissible capacitances]\label{defi_admissible_capacitances}
    A capacitance function $C:[0,T] \to (0,\infty)$ is said to be admissible if it is a piecewise right-continuous function, with only isolated jump discontinuities and uniformly bounded above and below by strictly positive constants. 
\end{definition}

\subsection{Main results}
Here, we use the notion of a generalized solution to deal with discontinuous capacitance functions (see Definition \ref{defi_generalized_solution}). As well, we use the convention that a solution of \eqref{system_variable_capacitance} contains an action potential in the interval $[a,b]$ if $v(s) \geq \Theta_{\text{ap}}$ for some $s\in [a,b]$ (for a meaningful threshold value $\Theta_{\text{ap}}>0$). We discuss this definition in more detail in Section \ref{section_well_posedness}. For simplicity, we present all our results for times in a closed interval $[0,T]$, but many can be directly extended to $(0,\infty)$. Throughout this paper, we consider that the parameters of the equation satisfy $\varepsilon,\beta, \gamma >0$ and $T>0$.

Our first result establishes that temporal capacitance variations must be both sufficiently large and abrupt to generate action potentials.

\begin{theorem}[Necessary conditions]\label{thm_cap_no_ap}
Let $C:[0,T]\to \R$ be an admissible (possibly discontinuous) capacitance function.
\begin{enumerate}
\item[(a)] There is a unique $(v,w)$ piecewise continuous and bounded generalized solution of  \eqref{system_variable_capacitance}.
\item[(b)] Assume that system \eqref{system_variable_capacitance} with constant capacitance has a unique equilibrium $(v_*,w_*)\in\R^2$.  Let $\Theta > v_*$ be a given threshold, and suppose one of the following conditions holds:
\begin{itemize}
    \item[(i)] there exists a constant $C^* > 0$ for which the equilibrium of system \eqref{system_variable_capacitance} is stable and $\sup_{t\in[0,T]}|C(t) - C^*| \leq \varepsilon(C^*)$;
    \item[(ii)] or, the capacitance is $C^1$, $v_*<-1$, and
\begin{equation}
\sup_{s\in[0,T]}|C'(s)| <
\frac{1}{2}\frac{\min\left\{(v_*^2-1)/(2|v_*|), \Theta- v_*\right\}\min\left\{v_*^2 -1, 2\varepsilon\gamma\inf\limits_{s\in[0,T]}C(s)\right\} }{ |v_*| + \min\{(v_*^2-1)/(2|v_*|), \Theta- v_*\}}.
\end{equation}
\end{itemize}
Then, if the system \eqref{system_variable_capacitance} is initialized at $(v_*,w_*)$ (or close enough), the solution will satisfy $v(t)<\Theta$ for all $t\in [0,T]$.
\end{enumerate}
\end{theorem}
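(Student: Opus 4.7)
Since $C$ is admissible, it has only finitely many jump discontinuities $0 \leq t_1 < \cdots < t_N \leq T$ in $[0,T]$. On each open sub-interval $(t_i, t_{i+1})$ the capacitance is continuous and bounded away from zero, so I would pass to the variable $p := Cv$ and rewrite the system as the classical non-autonomous ODE
\begin{align*}
\dot p &= p/C(t) - (p/C(t))^3/3 - w,\\
\dot w &= \varepsilon(p/C(t) - \gamma w + \beta),
\end{align*}
whose right-hand side is continuous in $t$ and locally Lipschitz in $(p,w)$. Picard--Lindel\"of gives local existence and uniqueness on each sub-interval; an a priori bound coming from the dissipativity of $v - v^3/3$ for $|v|$ large prevents blow-up. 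At each jump $t_i$ the generalized-solution convention of Definition~\ref{defi_generalized_solution} prescribes the post-jump value, and since $\inf C > 0$ this jump in $v$ is bounded. Concatenating over the sub-intervals yields the unique piecewise continuous bounded generalized solution.

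\textbf{Part (b)(i).} The equilibrium $(v_*, w_*)$ does not depend on $C$, since its defining equations $v - v^3/3 - w = 0$ and $v - \gamma w + \beta = 0$ involve no capacitance. Stability for the constant-$C^*$ system provides a quadratic Lyapunov function $L$ on a neighborhood $U$ of the origin of the translated system $(u,z) = (v-v_*, w-w_*)$ with $\dot L|_{C \equiv C^*} \leq -\kappa L$. The $C$-dependent perturbations to $\dot L$ depend linearly on $C(t) - C^*$ and on $C'(t)$ entering via $\frac{d}{dt}(Cv)$; choosing $\varepsilon(C^*)$ small enough absorbs them into the decay, so $U$ remains forward invariant. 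Taking $U \subset \{v < \Theta\}$ completes the argument.

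\textbf{Part (b)(ii).} I would translate to $(u,z) = (v - v_*, w - w_*)$ to obtain
$$C \dot u = -(v_*^2-1) u - v_* u^2 - u^3/3 - z - C'(v_* + u), \qquad \dot z = \varepsilon(u - \gamma z),$$
and then propose the Lyapunov function $V = \tfrac12 C u^2 + \tfrac{1}{2\varepsilon} z^2$, deliberately chosen so that the $uz$ cross-terms generated by the two equations cancel upon differentiation, yielding
$$\dot V = -\tfrac12 C' u^2 - (v_*^2-1)u^2 - v_* u^3 - u^4/3 - \gamma z^2 - C' v_* u.$$
Whenever $|u| \leq \delta$ with $\delta \leq (v_*^2-1)/(2|v_*|)$, the cubic $-v_* u^3$ is absorbed by half of the restoring $-(v_*^2-1)u^2$; the remaining half, together with $-\gamma z^2$ and the uniform lower bound $\inf C$ appearing in the weighting of $u^2$ inside $V$, controls the forcing contributions $-\tfrac12 C' u^2 - C' v_* u$ precisely under the stated hypothesis on $\sup|C'|$. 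A suitable sub-level set of $V$ is then forward invariant and contained in $\{|u| \leq \delta\} \subset \{v < \Theta\}$, giving the claim for initial data at (or near) $(v_*, w_*)$.

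\textbf{Main obstacle.} The nontrivial step is the balancing in (b)(ii): reaching the precise constant in the statement requires optimizing Young-type inequalities without wasteful intermediate losses, and the two minima in the hypothesis encode which dissipative mechanism is binding --- the restoring coefficient $v_*^2-1$ in the $u$-direction versus $2\varepsilon\gamma\inf C$ coming from the $z$-decay weighted by the uniform lower bound on $C$ inherent in the Lyapunov weighting. Everything else (the two regimes of part (a), the perturbative argument in (b)(i)) is by comparison routine.
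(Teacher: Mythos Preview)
Your treatment of part~(a) matches the paper's: pass to the charge variable $u=Cv$, use standard ODE theory on each continuity interval (the paper invokes Carath\'eodory rather than Picard--Lindel\"of, but on admissible capacitances this is cosmetic), establish a priori bounds via dissipativity, and concatenate across jumps.

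For part~(b)(i) there is a genuine gap. You compute $\dot L$ in the $(v,w)$ variables, and note that the perturbation involves $C'(t)$ ``entering via $\frac{d}{dt}(Cv)$''. But case~(i) allows $C$ to be discontinuous, so $C'$ carries Dirac masses and $v$ itself jumps at those instants; your Lyapunov inequality does not survive this. The paper avoids the issue by a different route: it proves an $L^1$--$L^\infty$ continuous-dependence estimate for the \emph{charge} equation (Corollary~\ref{continuous_dependence_charge_equation}, a Gronwall argument in which only $\|1/C_1-1/C_2\|_{L^1}$ appears, never $C'$), and then transfers the bound back to $v$ via $v=u/C$. Stability of the constant-$C^*$ system is used only to keep the reference solution near $(v_*,w_*)$; the perturbed solution is then controlled by comparison, not by a perturbed Lyapunov function.

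For part~(b)(ii) your strategy is the paper's, but your Lyapunov function is off by a factor of $C(t)$: the paper takes
\[
E(t)=\tfrac{1}{2}u^2+\tfrac{1}{2\varepsilon C(t)}z^2,
\]
with the time-dependent weight on $z^2$ rather than on $u^2$. This matters for the sharp constant. With the paper's $E$, both the dissipation and the forcing pick up a common factor $1/C$, which cancels; moreover $|u|\le\sqrt{2E}$ with a \emph{time-independent} coefficient, so the sublevel set $\{E\le M^2\}$ translates cleanly into $|u|\le\sqrt{2}\,M$ and the cubic-absorption condition $|u|\le(v_*^2-1)/(2|v_*|)$ becomes a single inequality on $M$. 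With your $V=\tfrac{1}{2}Cu^2+\tfrac{1}{2\varepsilon}z^2$, the bound $|u|\le\sqrt{2V/C(t)}$ depends on $t$, and the dissipation-versus-$V$ comparison forces a $\sup C$ where the stated constant has $\inf C$. Your diagnosis of the obstacle is right, but the specific weighting you propose will not produce the constant in the theorem; placing the $1/C$ on the $z^2$ term does.
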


In Lemma \ref{properties_equilibrium}, we specify precise conditions on the parameters of equation \eqref{system_variable_capacitance} that provide uniqueness, estimates, and stability of its equilibrium point.

We now proceed to analyze two particular cases of time-dependent capacitance. Together with Theorem \ref{thm_cap_no_ap}, these two cases provide a reasonably comprehensive picture of action potential generation through temporal variations in capacitance. The first case study characterizes the generation of action potentials by piecewise constant capacitance functions (Lemma 2 and Theorem 3), which is then extended (as a consequence of Theorem 4) to approximately piecewise constant capacitance functions.

After the first case study, it is tempting to conclude that jumps in the capacitance function translate into action potentials, but the non-linear behavior of the equation is more complicated. In fact, the second case study serves as a cautionary tale, providing further insights and demonstrating that periodic jumps in the capacitance function that are too frequent will not generate action potentials. We conclude that action potentials will be generated only if there is a waiting period between jumps in the capacitance function, as suggested in Theorem 3.

\begin{lemma}[Capacitance jumps correspond to Dirac's forcing ]\label{theorem_dirac_forcing}
Let $C: [0,T] \to \R$ be a piecewise constant admissible capacitance function. Then, system \eqref{system_variable_capacitance} has a unique generalized solution $(v,w)$, which can be obtained as the generalized solution of the following ODE with Dirac's forcing
\begin{equation}\label{system_dirac_forcing}
\left\{
\begin{array}{rl}
    \frac{d}{dt} v  &= \frac{1}{C(t)}\left(v - v^3/3 -w\right) + v(t^-)\sum_{s\in \Lambda} \frac{C(s^-) - C(s)}{C(s)} \delta_{s}, \\
    \frac{d}{dt} w &= \varepsilon (v  -\gamma w + \beta),
\end{array}
\right.
\end{equation}
where $\Lambda$ is the set of discontinuities of $C(t)$.
In other words, at each time $s\in \Lambda$ the effect of the forcing is multiplying the function $v(s)$ by $\frac{C(s^-)}{C(s^+)}$.    
\end{lemma}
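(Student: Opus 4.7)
The strategy is to exploit that between the isolated discontinuities of $C$ the system reduces to a classical constant-capacitance FitzHugh-Nagumo system, and then to determine the correct matching rule for $v$ across each jump directly from the definition of a generalized solution (Definition \ref{defi_generalized_solution}). Since on $[0,T]$ the admissibility of $C$ forces the jumps to be isolated, I would first enumerate them as $0 \le s_1 < s_2 < \cdots < s_N \le T$, with $C$ taking a constant value $C_j$ on each $[s_j,s_{j+1})$ and bounded away from zero overall.

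\textbf{Construction and jump rule.} On each open interval $(s_j,s_{j+1})$, dividing the first line of \eqref{system_variable_capacitance} by $C_j$ turns the system into a standard autonomous ODE with locally Lipschitz right-hand side, so Picard-Lindel\"of delivers a unique $C^1$ solution for any prescribed value at $s_j^+$; global existence on the interval follows from the usual boundedness argument for FitzHugh-Nagumo. To cross each jump, I use that the generalized-solution identity forces
\begin{equation*}
C(t)v(t) \;=\; C(0)v(0) + \int_0^t \bigl(v(\tau) - v(\tau)^3/3 - w(\tau)\bigr)\,d\tau,
\end{equation*}
whose right-hand side is continuous in $t$. Taking left and right limits at $t=s_{j+1}$ therefore yields $C(s_{j+1}^-)\,v(s_{j+1}^-) = C(s_{j+1})\,v(s_{j+1})$, i.e.\ the multiplicative rule $v(s_{j+1}) = \frac{C(s_{j+1}^-)}{C(s_{j+1})}\,v(s_{j+1}^-)$. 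Since the equation for $w$ has a Lipschitz right-hand side and no atomic forcing, $w$ is continuous at each $s_{j+1}$. Iterating the pair (solve the ODE on $[s_j,s_{j+1})$, apply the jump rule at $s_{j+1}$, reinitialize) over the finitely many subintervals produces the unique piecewise continuous bounded generalized solution $(v,w)$.

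\textbf{Identification with the Dirac forcing and main subtlety.} Between jumps $C$ is locally constant and the non-atomic part of \eqref{system_dirac_forcing} coincides with \eqref{system_variable_capacitance} rewritten as $\dot v = \frac{1}{C(t)}\bigl(v - v^3/3 - w\bigr)$. Across a jump $s\in\Lambda$, integrating \eqref{system_dirac_forcing} over $(s-\delta,s+\delta)$ and letting $\delta\downarrow 0$ gives
\begin{equation*}
v(s^+) - v(s^-) \;=\; v(s^-)\,\frac{C(s^-)-C(s)}{C(s)},
\end{equation*}
which rearranges to $v(s^+) = \frac{C(s^-)}{C(s)}\,v(s^-)$ (using right-continuity $C(s)=C(s^+)$), matching the rule derived in the previous step. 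By uniqueness on each subinterval, the two constructions agree. The point that I expect to require the most care is the interpretation of the atomic coefficient $v(t^-)\delta_{s}$ at a time when $v$ itself jumps: the coefficient must be read as the well-defined left-limit $v(s^-)$, so that the forcing is an ordinary finite Borel measure supported on $\Lambda$ and its action across $s$ is unambiguous. Once this convention is pinned down, the equivalence reduces to the direct calculation above.
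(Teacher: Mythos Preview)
Your proposal is correct and follows essentially the same route as the paper: both arguments hinge on the observation that $t\mapsto C(t)v(t)$ is continuous (from the integral form of the generalized solution), which forces the multiplicative jump rule $v(s)=\frac{C(s^-)}{C(s)}v(s^-)$, and then verify that the Dirac-forced system \eqref{system_dirac_forcing} reproduces exactly this jump together with the classical equation between jumps. The only cosmetic difference is that the paper obtains existence and uniqueness by invoking its earlier well-posedness result for the charge equation (Proposition~\ref{prop_global_boundedness}) and then undoing the change of variables $(u,w)=(Cv,w)$, whereas you construct the solution interval-by-interval via Picard--Lindel\"of and patch across jumps; the content is the same.
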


Lemma \ref{theorem_dirac_forcing} allows us to construct capacitance functions that generate action potentials since it explicitly describes the behavior of the voltage at the points of discontinuity of the capacitance.

\begin{theorem}\label{cor_piecewise_constant_input}
Suppose the parameters $\varepsilon$, $\beta$, $\gamma$ of the equation satisfy $\beta^2/\gamma^2 > \frac{4}{9}(1-1/\gamma)^3$, $\beta < \sqrt{3}$ and let $(v_*,w_*)$ be the unique equilibrium of system \eqref{system_variable_capacitance} for constant capacitance.  Given $0< \tau_1<\tau_2$ and $C_1, C_2, C_3 >0$ consider the following piecewise constant admissible capacitance $C:[0,\infty)\to\R$, 
\begin{equation}\label{capacitance_corollary}
    C(t) = \left\{\begin{array}{rl}
    C_1 &, \quad 0\leq t <\tau_1,\\
    C_2 &, \quad \tau_1\leq t < \tau_2, \\
    C_3 &, \quad t\geq \tau_2.
    \end{array}
    \right.
\end{equation}
Let $(v,w)$ be the solution of system \eqref{system_variable_capacitance} with capacitance \eqref{capacitance_corollary} and initial condition $(v_*,w_*)$. Then, depending on the values of the parameters, we can achieve different thresholds for the solution of \eqref{system_variable_capacitance}:
\begin{enumerate}[(i)]
    \item Let $\Theta=0$. Given $\mu>0$, there exists $\delta(\beta,\gamma, \varepsilon,\mu)>0$ such that if $C_1/C_2 <\delta$ and $C_2=C_3$, then $v(t^*) >\Theta$ (and $w_*<w(t^*) < 0$)  for some $\tau_1<t^*<\tau_1+\mu$.
    \item Given $\Theta\in(0,\sqrt{3}]$, there exits $\delta_1(\beta,\gamma,\varepsilon)$, $\delta_2(\beta,\gamma,\Theta) >0$ such that if $C_1/C_2<\delta_1$, $C_2 = C_3$, and $C_2 \varepsilon < \delta_2$, then $\exists t^*>\tau_1$ such that $v'(s)>0$ and $w^*<w(s)<0, \forall s\in[\tau_1,t^*]$, with $v(t^*)=\Theta$. Moreover $\delta_2\to\infty$ if $\Theta\to 0$.
    \item Given any $\Theta>0$, there exits $\delta(\beta,\gamma, \varepsilon)$, $M(\beta,\gamma, \varepsilon,\Theta) >0$ and $\tau_* > \tau_1$ such that if $\tau_2=\tau_*$, $C_1/C_2 < \delta$, and $C_3/C_2 >M$ then $v(\tau_2)>\Theta$ and $w_*< w(\tau_2) < 0$.
\end{enumerate}
\end{theorem}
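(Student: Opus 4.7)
The backbone of all three items is Lemma~\ref{theorem_dirac_forcing}, which replaces each discontinuity of $C$ by the closed-form multiplicative reset $v(s^+)=v(s^-)\,C(s^-)/C(s^+)$, while $w$ stays continuous. Combined with continuous dependence on initial data on each interval of continuity, this turns the problem into a finite composition of smooth FHN flows separated by explicit resets of $v$. I would begin by invoking Lemma~\ref{properties_equilibrium} under the standing hypotheses $\beta^2/\gamma^2>\tfrac{4}{9}(1-1/\gamma)^3$ and $\beta<\sqrt{3}$ to record that $v_*<-1$ and $w_*=(v_*+\beta)/\gamma\in(-2/3,0)$; these signs are used throughout.

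For part (i), Lemma~\ref{theorem_dirac_forcing} gives $v(\tau_1^+)=v_*(C_1/C_2)\to 0$ as $C_1/C_2\to 0$, while $w(\tau_1)=w_*$. At the limiting state $(0,w_*)$ the vector field yields $\dot v=-w_*/C_2>0$, so the limiting trajectory immediately crosses $\Theta=0$. By continuous dependence on initial data on $[\tau_1,\tau_1+\mu]$, choosing $\delta$ small enough forces the perturbed trajectory to cross $0$ inside the window at some $t^*\in(\tau_1,\tau_1+\mu)$. Continuity of $w$ and the sign $\dot w|_{\tau_1^+}=\varepsilon(v(\tau_1^+)-v_*)>0$ keep $w(t^*)\in(w_*,0)$ for $\delta$ small.

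For part (ii), I would rescale time by $\tau=t/C_2$, casting the system in slow-fast form with small parameter $\varepsilon C_2<\delta_2$. The frozen-$w$ fast equation $dv/d\tau=v-v^3/3-w_*$ has three zeros $v_*<v_2(w_*)<v_3(w_*)$ with $v_2(w_*)\in(-1,0)$ and $v_3(w_*)>\sqrt{3}\geq\Theta$, because $w_*\in(-2/3,0)$. Choosing $\delta_1<|v_2(w_*)/v_*|$ forces $v(\tau_1^+)>v_2(w_*)$, so the fast flow sends $v$ strictly upward toward $v_3(w_*)$, crossing any prescribed $\Theta\in(0,\sqrt{3}]$. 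A Gronwall-type estimate bounds the drift of $w$ away from $w_*$ by a constant times $\varepsilon C_2$ times the ascent time; choosing $\delta_2$ accordingly preserves $v>v_2(w(t))$ (hence $v'>0$) and $w\in(w_*,0)$ until $v=\Theta$. The divergence $\delta_2\to\infty$ as $\Theta\to 0$ reflects that the ascent becomes arbitrarily short, so no smallness of $\varepsilon C_2$ is needed.

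Part (iii) is the most delicate and combines (i)/(ii) with a second calibrated jump. I would first use the analysis of (i)/(ii) to bring $(v,w)$ from the post-$\tau_1$ state to an intermediate point $(V_0,w_1)$ with $V_0>0$ and $w_1\in(w_*,0)$ whose values depend only on $(\beta,\gamma,\varepsilon)$; the desired $\tau_*$ is then defined as the first post-$\tau_1$ time at which $v=V_0$, and its existence is guaranteed by the monotone rise proven above. The reset at $\tau_2=\tau_*$ then maps $v$ to $V_0\cdot C_2/C_3$, and I would choose $M=M(\beta,\gamma,\varepsilon,\Theta)$ so that the prescribed constraint on $C_3/C_2$ makes this product exceed $\Theta$, while $w(\tau_2)=w_1\in(w_*,0)$ persists by continuity. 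The main obstacle will be the quantitative bookkeeping: $\delta$ must be chosen independently of $\Theta$, which forces a uniform positive lower bound for $V_0$ obtained via invariant-region arguments for the smooth FHN flow in the strip $\{w_*<w<0\}$; once that uniform $V_0$ is secured, $M$ absorbs the amplification needed to reach the arbitrary threshold $\Theta$ through the explicit reset formula of Lemma~\ref{theorem_dirac_forcing}.
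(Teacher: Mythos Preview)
Your overall strategy is sound and reaches the same conclusions as the paper, but the execution differs. The paper proves each item via explicit trapping rectangles: for (i), it builds $R=[-\eta,0]\times[w_*,w_*/2]$ and shows, through direct sign analysis of the vector field together with Cauchy's mean value theorem on $(v,w)$, that the trajectory must exit through the right wall $\{0\}\times[w_*,w_*/2]$ before time $\tau_1+\mu$; for (ii), the same mechanism on $R=[0,\Theta]\times[w_*,w_*/4]$ yields the explicit constant $\delta_2=(w_*/4)^2/\bigl(\Theta(\Theta-\gamma w_*+\beta)\bigr)$. Your route replaces these constructions by continuous dependence on initial data (for (i)) and a slow--fast singular perturbation picture (for (ii)). Both are valid; the paper's approach delivers fully explicit constants with elementary calculus, while yours is conceptually cleaner but leaves the quantitative bounds implicit (the Gronwall estimate you invoke would have to be spelled out to actually produce $\delta_2$). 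For (iii) the two arguments coincide.

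Two corrections. First, Lemma~\ref{properties_equilibrium} does \emph{not} give $v_*<-1$ from the stated hypotheses; that requires $3\beta+2\gamma>3$, which is not assumed. What follows is only $v_*\in(-\sqrt{3},-\beta)$ and $w_*\in[-2/3,0)$. Your arguments in fact only use $v_*<0$ and $w_*<0$, so nothing breaks, but your labeling of $v_*$ as the leftmost fast-nullcline root in (ii) is not guaranteed. Second, in (iii) you correctly compute the reset $v(\tau_2)=V_0\,C_2/C_3$, but then the stated inequality $C_3/C_2>M$ makes this product \emph{small}, not large. The paper's own proof (its Lemma for item (iii)) takes $C_2/C_3$ large, so the direction of the inequality in the theorem statement appears to be a misprint; you should flag or silently correct it rather than claim the constraint ``makes this product exceed $\Theta$''.
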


In summary, Theorem \ref{cor_piecewise_constant_input} states that (i) a single properly chosen jump in the capacitance can raise the voltage above the threshold $v=0$. If the equation parameters are appropriate (ii), then with a single capacitance jump, the voltage can exceed any threshold in $(0,\sqrt{3}]$, therefore ensuring an action potential. Finally, (iii) with two properly chosen capacitance jumps (and fewer conditions in the parameters), the voltage can exceed any threshold, eliciting an action potential.

Although the family of capacitances considered in Lemma \ref{theorem_dirac_forcing} and Theorem \ref{cor_piecewise_constant_input} might seem too constraining, the following continuous dependence result shows that capacitances that are close enough to the ones prescribed above will generate similar responses.

\begin{theorem}[Continuity $L^1 \to L^1$]\label{thm_continuity_L1L1}
Let $C_1$, $C_2: [0,T] \to \R$ be two admissible capacitance functions, and let $(v_1,w_1)$, $(v_2,w_2)$ be the corresponding solutions of system  \eqref{system_variable_capacitance} in a time interval $[0,T]$. Suppose there exists constants $\alpha_1$, $\alpha_2$, $\alpha_3$ $>0$ so that $\alpha_1 \leq C_i(t) \leq \alpha_2$, $t\in[0,T]$, and $ \left|(v_i(0),w_i(0) )\right|_\infty\leq \alpha_3$ for $i = 1,2$. Then we have the following estimate
$$\|v_1 - v_2\|_{L^1}\leq M \left( \left|(v_1(0), w_1(0)) - (v_2(0),w_2(0))\right|_\infty + \|C_1 - C_2\|_{L^1} \right),$$
for some constant $M = M(T, \alpha_1 ,\alpha_2, \alpha_3, \beta, \gamma) > 0$.
\end{theorem}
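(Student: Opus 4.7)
The strategy is to pass from the possibly discontinuous $v_i$ to absolutely continuous auxiliary variables $u_i := C_i v_i$, run a Gr\"onwall estimate on the difference system, and recover $L^1$ control of $\tilde v = v_1 - v_2$ at the end. In the $(u_i, w_i)$ variables the system reads
\begin{align*}
u_i' &= v_i - v_i^3/3 - w_i, \\
w_i' &= \varepsilon(v_i - \gamma w_i + \beta),
\end{align*}
with $v_i = u_i / C_i$, and both $u_i, w_i$ are absolutely continuous on $[0,T]$. Writing $(\tilde u, \tilde w, \tilde v) = (u_1-u_2, w_1-w_2, v_1-v_2)$, the algebraic identity
\[
\tilde v \;=\; \frac{\tilde u}{C_2} \;-\; \frac{u_1(C_1-C_2)}{C_1 C_2}
\]
converts the comparison into a linear problem for $(\tilde u, \tilde w)$ with a forcing proportional to $|C_1 - C_2|$.

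The first step is a uniform a priori bound. Working with $V_i := u_i^2 + w_i^2/\varepsilon$ one computes
\[
V_i' \;=\; 2 C_i v_i^2 - \tfrac{2}{3}C_i v_i^4 + 2 v_i w_i (1 - C_i) - 2\gamma w_i^2 + 2\beta w_i.
\]
Bounding the cross term by $(1+\alpha_2)(v_i^2+w_i^2)$ and absorbing the resulting positive multiple of $v_i^2$ into the quartic dissipation via Young's inequality $a\,v_i^2 \le \tfrac{2\alpha_1}{3}v_i^4 + K_0$ yields $V_i' \le K_1 + K_2 V_i$. Gr\"onwall then provides a uniform bound $|u_i|, |w_i|, |v_i| \le K = K(T, \alpha_1, \alpha_2, \alpha_3, \beta, \gamma, \varepsilon)$ on $[0,T]$.

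With these bounds, the mean-value factorization gives $|v_1^3 - v_2^3| \le 3K^2 |\tilde v|$, so subtracting the two $(u,w)$ systems and substituting the pointwise bound $|\tilde v| \le |\tilde u|/\alpha_1 + (K/\alpha_1^2)|C_1 - C_2|$ yields
\[
\tfrac{d}{dt}\bigl(|\tilde u| + |\tilde w|\bigr) \;\le\; A\bigl(|\tilde u| + |\tilde w|\bigr) + B\,|C_1(t) - C_2(t)|,
\]
for constants $A, B$ depending on the stated data. Gr\"onwall then produces
\[
|\tilde u(t)| + |\tilde w(t)| \;\le\; e^{AT}\bigl(|\tilde u(0)| + |\tilde w(0)|\bigr) + B e^{AT}\|C_1 - C_2\|_{L^1},
\]
and integrating $|\tilde v| \le |\tilde u|/\alpha_1 + (K/\alpha_1^2)|C_1 - C_2|$ over $[0,T]$ closes the argument, with $|\tilde u(0)|$ controlled via the generalized-solution initialization $u_i(0) = C_i(0) v_i(0)$ together with the a priori bounds on $C_i$ and $v_i(0), w_i(0)$, all of which are absorbed into the final constant $M$.

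The main technical obstacle is the a priori bound. Because $C(t)$ is merely bounded and may be discontinuous, no autonomous Lyapunov function is available, and the quartic dissipation coefficient is only floored by $\alpha_1$ rather than by $C(t)$ itself. The Young-type absorption of the destabilizing $v_i^2$ term into the $v_i^4$ term is the critical step that makes the estimate robust to time-dependence and jumps of $C$, at the price of an exponential-in-$T$ constant in the final $L^1$ bound.
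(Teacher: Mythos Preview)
Your approach is essentially the same as the paper's: pass to the charge variables $u_i = C_i v_i$, establish a uniform a priori bound, run Gr\"onwall on the $(u,w)$-difference, and recover the $L^1$ estimate on $v_1 - v_2$ from the algebraic identity relating $\tilde v$ to $\tilde u$ and $C_1 - C_2$. The only variation is that you obtain the a priori bound by an energy computation with Young's inequality, whereas the paper invokes the invariant-rectangle argument of Proposition~\ref{prop_global_boundedness}; the latter yields a $T$-independent bound on $(u_i,w_i)$, but since the Gr\"onwall step introduces an $e^{AT}$ factor regardless, this makes no practical difference for the final estimate.
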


Theorem \ref{thm_continuity_L1L1} complements the results in Theorem \ref{thm_cap_no_ap}, which can be read as a continuity result for the voltage when near the equilibrium voltage, established with uniform norms. Theorem \ref{thm_continuity_L1L1} provides an adequate notion of approximation in the case of piecewise constant capacitances. For instance, we can consider smooth capacitances that are close in an $L^1$ sense to piecewise constant capacitances, and they will result in $L^1$-similar solutions of equation  \eqref{system_variable_capacitance}. Another example, if a piecewise constant capacitance $C_1$ generates an action potential at time $t = t_0$, and $C_2(t) = C_1(t-\eta)$, for some $\eta>0$, we expect the two solutions to be shifted in time between them. Due to the discontinuity in the capacitance, the solutions will be far from each other in uniform norms, even for $\eta$ very small. At the same time, Theorem \ref{thm_continuity_L1L1} captures that the solutions will still be similar in $L^1$ norm.

The following result precisely describes the second case study. This theorem shows additional and more complicated situations beyond those characterized in Theorem \ref{thm_cap_no_ap}, for which there are no action potentials.

\begin{theorem}[Rapidly oscillating capacitances do not generate action potentials]\label{thm_high_frequency} Suppose $\beta^2/\gamma^2 >\frac{4}{9}(1-1/\gamma)^3$ and let $C_T:[0,\infty) \to \R^+$ be a T-periodic admissible capacitance function. Denote $A_k = \frac{1}{T}\int_0^T\frac{1}{C_T(s)^k}ds$, and let $(v_1, w_1)$ be the solution of
\begin{equation}\label{averaged_system_equilibrium_intro}
\left\{
\begin{array}{rl}
0 &= v_1  - \frac{1}{3}\frac{A_3}{A_1^3} v_1^3 -w_1\\
0 &= v_1  - \gamma w_1 + \beta.
\end{array}\right.
\end{equation}
Suppose that system \eqref{system_variable_capacitance} with constant capacitance $C_* = \left(\frac{1}{T}\int_0^T\frac{1}{C_T(s)}ds\right)^{-1}=A_1^{-1}$ has a unique equilibrium which is also stable. Then there exists $\hat{\delta} = \hat{\delta}(\varepsilon,\gamma, \beta, A_1, A_3) >0$  and $M = M(\varepsilon,\gamma, \beta, A_1, A_3)>0$ such that for all $\delta \in (0,\hat{\delta})$ there exists $\hat{\tau} = \hat{\tau}(\varepsilon,\gamma, \beta, A_1, A_3,\delta)>0$, such that for all $0<\tau\leq\hat{\tau}$, if we denote by $(v,w)$ the solution of \eqref{system_variable_capacitance} with the $\tau$-periodic capacitance $C_\tau(t) = C_T(\frac{T}{\tau} t)$, whose initial condition satisfy
$$\left|(C_\tau (0) v(0),w(0)) - (C_* v_1,w_1)\right|_\infty \leq M \delta,$$
then for all $t>0$ the solution $(v,w)$ satisfy
$$\left|(C_\tau (t) v(t),w(t)) - (C_* v_1,w_1)\right|_\infty \leq \delta.$$
In particular, since $v_1<0$, choosing $\delta<C_*|v_1|$ we can guarantee that $v(t)<0,\forall t>0$.
\end{theorem}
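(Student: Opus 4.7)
The plan is to recast the problem in the variable $u=C_\tau(t)v$, which converts \eqref{system_variable_capacitance} into an ODE with a rapidly oscillating but otherwise smooth right-hand side, and then to apply classical periodic averaging. Since $\tfrac{d}{dt}(Cv)=v-v^3/3-w$, this change of variable absorbs every derivative of $C_\tau$ (critical because $C_\tau$ is only admissible) and produces
\begin{equation*}
\dot u = \frac{u}{C_\tau(t)}-\frac{u^3}{3C_\tau(t)^3}-w,\qquad \dot w = \varepsilon\!\left(\frac{u}{C_\tau(t)}-\gamma w+\beta\right),
\end{equation*}
which I write as $\dot X=F(X,t/\tau)$ with $F$ of period $T$ in its second argument. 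Averaging $F$ over one period replaces $C_\tau^{-k}$ by $A_k$ for $k\in\{1,3\}$, producing the autonomous averaged system $\dot{\bar X}=\bar F(\bar X)$. The equilibria of $\bar F$ in $(u,w)$ correspond, via $v_1=A_1 u$, exactly to the solutions of \eqref{averaged_system_equilibrium_intro}, and I denote the averaged equilibrium by $X_*=(C_* v_1,w_1)$.

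I would next verify that $X_*$ is asymptotically stable for $\bar F$. Uniqueness of $v_1$ reduces to the sign of the discriminant of the cubic $\gamma (A_3/A_1^3) v_1^3/3+(1-\gamma)v_1+\beta=0$; combining the hypothesis $\beta^2/\gamma^2>\tfrac{4}{9}(1-1/\gamma)^3$ with Jensen's inequality $A_3\geq A_1^3$ (applied to $x\mapsto x^3$ against the probability measure $T^{-1}dt$) forces a unique real root. Linearizing $\bar F$ at $X_*$ produces Routh--Hurwitz conditions of the form $A_3\bar u_*^2>A_1-\varepsilon\gamma$ and $A_1(1-\gamma)+\gamma A_3\bar u_*^2>0$; these are implied by the corresponding inequalities for the assumed-stable $C_*$-constant equilibrium, using the algebraic reduction $a|v_1|^2=3(\gamma-1)/\gamma+3\beta/(\gamma|v_1|)$ (with $a=A_3/A_1^3\geq 1$) obtained directly from the cubic. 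Hence there exist a neighborhood $U$ of $X_*$ and a quadratic Lyapunov function $V(X)=(X-X_*)^\top P(X-X_*)$, $P\succ 0$, with $\nabla V\cdot\bar F\leq -cV$ on $U$.

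To promote this Lyapunov inequality from the averaged to the oscillating dynamics uniformly in $t\in[0,\infty)$, I would introduce the near-identity transformation $X=Y+\tau h(Y,t/\tau)$, where $h(Y,s)$ is the zero-mean $T$-periodic primitive in $s$ of $F(Y,s)-\bar F(Y)$. Substituting and using $\partial_s h=F-\bar F$, the full system becomes $\dot Y=\bar F(Y)+\tau R(Y,t/\tau;\tau)$ with $R$ bounded on compact subsets, whence
\begin{equation*}
\dot V(Y)\leq -cV(Y)+K\tau\sqrt{V(Y)},
\end{equation*}
so every sublevel set $\{V\leq\rho^2\}$ with $\rho\geq K\tau/c$ is positively invariant. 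Choosing $\hat\delta$ so that $\{V\leq\hat\delta^2\}\subset U$ and then $\hat\tau(\delta)$ small enough that both the change-of-variable error $|X-Y|=O(\tau)$ and $K\tau/c$ lie below $\delta/2$, the bound $|X(0)-X_*|_\infty\leq M\delta$ propagates to $|X(t)-X_*|_\infty\leq\delta$ for all $t\geq 0$; since $\beta>0$ and the equilibrium relations force $v_1<0$, the choice $\delta<C_*|v_1|$ keeps $u(t)<0$ and hence $v(t)<0$ throughout. The principal obstacle is precisely this uniform-in-time character of the conclusion: classical Bogoliubov averaging only yields $|X-\bar X|=O(\tau)$ on windows of length $O(1/\tau)$, so the argument must genuinely exploit the Lyapunov contraction coming from the stability of $X_*$ to prevent the $O(\tau)$ averaging error from accumulating across time.
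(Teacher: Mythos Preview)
Your proof is correct and follows essentially the same route as the paper: pass to the charge variable $u=C_\tau v$, average to obtain an autonomous system, verify that the averaged equilibrium is exponentially stable (using Jensen's inequality $A_3\geq A_1^3$ to reduce to the assumed stability of the constant-$C_*$ system), and then invoke periodic averaging for the uniform-in-time conclusion. The only cosmetic differences are that the paper cites Khalil's averaging theorem as a black box rather than sketching the near-identity/Lyapunov mechanism, and it compares the two stability conditions via a polynomial-evaluation trick rather than your direct algebraic comparison $a|v_1|^2\geq v_*^2$; the skeleton is identical.
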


In summary, Theorem \ref{thm_cap_no_ap} indicates that significant and abrupt capacitance variations are necessary for the generation of action potentials. In Theorem \ref{cor_piecewise_constant_input} and Theorem \ref{thm_continuity_L1L1}, we construct capacitances with jumps that generate action potentials shortly after such changes; nevertheless, Theorem \ref{thm_high_frequency} shows that if the jumps occur too frequently, then the solution may oscillate near a pseudo-equilibrium $(v_1,w_1)$, without exhibiting action potentials.

The final result of this section is a technical lemma that clarifies fundamental facts about the equilibrium of system \eqref{system_variable_capacitance}, helping to elucidate the parameter requirements in the assumptions of the previous theorems.
\begin{lemma}\label{properties_equilibrium}
System \eqref{system_variable_capacitance} for a constant capacitance has a unique equilibrium $(v_*,w_*)$ if and only if $\beta^2/\gamma^2 > \frac{4}{9}(1-1/\gamma)^3$. Moreover
\begin{enumerate}[(i)]
\item The equilibrium satisfies $v_*<0$. \label{lemma_item_i}
\item If $3\beta + 2\gamma > 3$ then $v_* < -1$. Also, if $3\beta + 2\gamma = 3$ then $v_* = -1$. \label{lemma_item_ii}
\item If $\beta<\sqrt{3}$ then $v_* < -\beta$ and $w_*<0$. Also, if $\beta=\sqrt{3}$ then $v_* = -\sqrt{3}$ and $w_*=0$.\label{lemma_item_iii}
\item The exponentially stability of $(v_*,w_*)$ for \eqref{system_variable_capacitance} with constant capacitance $C$, is equivalent to $1-v_*^2 < \min\{C \varepsilon \gamma, 1/\gamma\}$.\label{lemma_item_iv}
\item $v_* \leq -1$ is equivalent to the the equilibrium of system  \eqref{system_variable_capacitance} being stable for any constant capacitance $C$.\label{lemma_item_v}
\end{enumerate}
\end{lemma}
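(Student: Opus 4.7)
The plan is to reduce the equilibrium to a single cubic equation and then base every statement on a sign analysis of that cubic together with a routine Jacobian computation. Setting time derivatives to zero, the second equation gives $w = (v+\beta)/\gamma$, and substituting into the first yields
\[ f(v) := \tfrac{\gamma}{3}v^3 - (\gamma-1)v + \beta = 0, \]
with $f(v) \to \pm\infty$ as $v \to \pm\infty$, so at least one real root always exists. The derivative $f'(v) = \gamma v^2 - (\gamma-1)$ is nonnegative when $\gamma \leq 1$, giving monotonicity and a unique root for free. When $\gamma > 1$, $f$ has a local maximum at $v_- = -\sqrt{1-1/\gamma}$ and a local minimum at $v_+ = \sqrt{1-1/\gamma}$ with
\[ f(v_\pm) = \beta \mp \tfrac{2\gamma}{3}(1-1/\gamma)^{3/2}. \]
Since $\beta>0$, $f(v_-)>0$ always; uniqueness of the root is therefore equivalent to $f(v_+)>0$, which rearranges precisely to $\beta^2/\gamma^2 > \tfrac{4}{9}(1-1/\gamma)^3$. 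Under this hypothesis $f$ is negative to the left of the unique root $v_*$ and positive to the right, so the sign of $f(a)$ decides on which side of $v_*$ the point $a$ lies.

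Items (i)--(iii) then follow from three direct evaluations. $f(0)=\beta>0$ gives $v_*<0$. Next, $f(-1) = (2\gamma + 3\beta - 3)/3$ is positive, zero, or negative exactly when $3\beta + 2\gamma$ is greater than, equal to, or less than $3$, yielding both the inequality and the equality case in (ii). For (iii), a short computation gives $f(-\beta) = \tfrac{\gamma\beta}{3}(3-\beta^2)$, positive for $\beta<\sqrt{3}$ and zero at $\beta=\sqrt{3}$, establishing $v_* < -\beta$ (respectively $v_* = -\sqrt{3}$); the sign of $w_* = (v_*+\beta)/\gamma$ is then immediate from the inequality $v_* \lessgtr -\beta$.

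For (iv), the Jacobian at $(v_*,w_*)$ for constant capacitance $C$ is
\[ J = \begin{pmatrix} (1-v_*^2)/C & -1/C \\ \varepsilon & -\varepsilon\gamma \end{pmatrix}, \]
with $\tr J = (1-v_*^2)/C - \varepsilon\gamma$ and $\det J = (\varepsilon/C)\bigl(1 - \gamma(1-v_*^2)\bigr)$. The Routh--Hurwitz criterion for $2\times 2$ matrices gives exponential stability iff $\tr J < 0$ and $\det J > 0$, which simplify to $1-v_*^2 < C\varepsilon\gamma$ and $1-v_*^2 < 1/\gamma$, i.e., $1-v_*^2 < \min\{C\varepsilon\gamma,\,1/\gamma\}$. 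Item (v) is then immediate from (iv): if $v_* \leq -1$ then $1-v_*^2 \leq 0$ and both inequalities hold for every $C>0$; conversely, if $v_* \in (-1,0)$ then $1-v_*^2 > 0$ and taking $C$ small enough forces $\tr J > 0$, destabilizing the equilibrium. The argument is essentially bookkeeping; the only care point I anticipate is cleanly connecting the discriminant-type condition to the sign of $f(v_+)$ (including the degenerate subcase $\gamma \leq 1$, where $(1-1/\gamma)^3 \leq 0$ and the condition is automatic), after which (i)--(iii) are three arithmetic checks and (iv)--(v) reduce to Routh--Hurwitz.
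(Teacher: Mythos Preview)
Your proof is correct and follows essentially the same approach as the paper: reduce the equilibrium to a single cubic in $v_*$, obtain the uniqueness condition from the cubic's discriminant (you do this via the sign of $f$ at its local minimum, the paper quotes the depressed-cubic criterion directly), read off (i)--(iii) by evaluating the cubic at $0$, $-1$, $-\beta$, and derive (iv)--(v) from the trace--determinant (Routh--Hurwitz) condition on the Jacobian. The only cosmetic difference is your normalization of the cubic and your explicit handling of the $\gamma\leq 1$ case.
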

\begin{proof}
The equilibrium of the system \eqref{system_variable_capacitance} satisfies 
\begin{equation}\label{equation_equilibrium}
\left\{\begin{array}{rl}
    0 & = v_* - v_*^3/3 - w_*,\\
    0 &= v_*-\gamma w_* + \beta,
\end{array}\right.
\end{equation}
solving for $w_*$ and substituting we get that $v_*$ solves the cubic equation $f(v_*)=v_*^3- 3(1-1/\gamma) v_* + \frac{3\beta}{\gamma} =0$. The depressed cubic $z^3 + pz + q = 0$ has a unique real root if and only if $4p^3+27q^2 >0$. In our case, this condition becomes $\beta^2/\gamma^2 > \frac{4}{9}(1-1/\gamma)^3$. Since $f(0)>0$ and $\lim_{v\to -\infty} f(v) = -\infty$, the intermediate value theorem implies that $v_*<0$. To check the item (\ref{lemma_item_ii}), we use that if $3\beta + 2\gamma > 3$, then $f(-1) = (3 \beta+ 2 \gamma - 3)/\gamma>0$, the equality case if immediate. To check item (\ref{lemma_item_iii}) we note that if $\beta <\sqrt{3}$, then $f(-\beta) = \beta(3-\beta^2) >0$, which implies $v_* < -\beta$ and therefore $w_* = (v_*+\beta)/\gamma < 0$.
To study the stability of the equilibrium, we let $C$ be a constant capacitance, and we linearize \eqref{system_variable_capacitance} near the equilibrium $(v_*,w_*)$, which leads to 
\begin{equation}\label{linearization_system_variable_capacitance}
    \frac{d}{dt} \binom{V}{W} = \left(\begin{array}{cc}
         \frac{1}{C}(1-v_*^2)& -1/C \\
         \varepsilon & - \varepsilon \gamma 
    \end{array}\right)\binom{V}{W} = A \binom{V}{W},
\end{equation}
the exponential stability is equivalent to the conditions $\text{tr}(A)<0$ and $\det(A)>0$, which can be combined into $1-v_*^2 < \min\{C \varepsilon \gamma, 1/\gamma\}$, which gives us item \eqref{lemma_item_iv}. The condition is satisfied for any value of $C$ if $v_*<-1$. Item \eqref{lemma_item_v} is an immediate consequence of item \eqref{lemma_item_iv}.

\end{proof}

\subsection{Extension to the HH neuron model}\label{section_extension_HH}

Several results obtained in this paper can be readily adapted to the HH model with time-dependent capacitance described in Appendix \ref{appendix_hh_equations}. 

Theorems \ref{thm_cap_no_ap} and \ref{thm_continuity_L1L1} are based on well-posedness arguments; more specifically, they only require global boundedness and Gronwall's type estimates, which can be obtained for the HH system using a similar analysis.

From the continuity of the electric charge $Q(t)=C(t)V(t)$ for the HH system, Lemma \ref{theorem_dirac_forcing} can also be adapted, concluding that discontinuities in the capacitance $C(t)$ imply jumps in the voltage, satisfying $v(t^+) = v(t^-)C(t^-)/C(t^+)$. Therefore, a jump in the capacitance scales the value of the voltage by a positive factor.
If the system is at its resting potential, which is negative, a jump with $C(t^-)/C(t^+) >1$ implies that the voltage decreases, corresponding to hyperpolarization, while $C(t^-)/C(t^+) <1$ implies that the voltage increases, corresponding to depolarization. Because of this, most of Theorem \ref{cor_piecewise_constant_input} can be quickly extended to the HH system.

The extension of Theorem \ref{thm_high_frequency} to the HH model would be the hardest, even though the general strategy should work (average the equation for the electric charge, study the averaged system stability, and report back using the general averaging theorem). In the case of the HH system, we do not have a closed formula for the averaged equations. Therefore, it is difficult to find a simple condition, such as the stability of a constant capacity system, to conclude that the solution of the full system will oscillate around a particular value of the voltage. Nonetheless, numerical experiments suggest that Theorem \ref{thm_high_frequency} should be valid under suitable assumptions.

\subsection{Organization of the paper}

Section \ref{section_well_posedness} establishes the appropriate formulation to study the problem as a well-posedness question, which leads to the proof of Theorem \ref{thm_cap_no_ap} and Theorem \ref{thm_continuity_L1L1}. Section \ref{subsection_dirac_forcing} analyzes the first study case and presents the proof of Lemma \ref{theorem_dirac_forcing} and Theorem \ref{cor_piecewise_constant_input}. Section \ref{sec_high_frequency} analyzes the second study case and contains the proof of Theorem \ref{thm_high_frequency}. Lastly, Section \ref{sec_numerical_experiments} contains numerical experiments that illustrate the results of the paper.

\section{Well-posedness}\label{section_well_posedness}

We start our analysis by studying the well-posedness of equation \eqref{system_variable_capacitance} for admissible capacitances. In this article we use the notation $|(x,y)|_\infty = \max\{|x|,|y|\}$, where $x,y\in\R$. While $\|f\|_{L^1}=\int_0^T|f(x)|dx$  and $\|f\|_{L^\infty}=\inf\{M:|f(x)|\leq M, \text{ a.e. in } [0,T] \}$.

The well-posedness results for discontinuous coefficients are based on the work of Caratheodory \cite{coddingtonTheoryOrdinaryDifferential2012}. An essential component is the global boundedness of the FHN system, which was studied for the autonomous~\cite{Rauch1978}, and nonautonomous  \cite{cerpaApproximationStabilityResults2023a,cerpaImpactHighFrequencybased2024} cases, with appropriate extensions. The $L^1$ continuous dependence is not usually studied for ODEs, for which uniform norms are preferred, albeit informative for our case studies. Related continuous dependence results for ODEs have been presented elsewhere~\cite{branickyContinuityODESolutions1994}. When studying the more regular equation for the electric charge, the theory of switching systems is useful \cite{branickyAnalyzingContinuousSwitching1994}, but since understanding the behavior of the voltage variable is crucial for the application, we must establish the results in variables $(v,w)$.

It is convenient to consider the notion of a generalized solution of an ODE when studying problems with discontinuous capacitance functions.

\begin{definition}[Generalized Solution]\label{defi_generalized_solution}
    Let $I = [t_0, t_1]$ and $t_0<s_1<s_2< \cdots < s_n<t_1$. Let $f: I \times \R^n \to \R^n$, $B: I \to (0,\infty)$, and $g_i:\R\to\R$, $i=1,\ldots, N$,  continuous. Let $x_0\in \R^n$. We say that a piecewise right-continuous function, with at most jump-type discontinuities $y: I \to \R$, is a generalized solution to the initial value problem 
    \begin{equation}\label{eqn_generalized_problem}
    \frac{d}{dt}(B x)(t) = f(t,x(t)) + \sum_{i=1}^N g_i(x(t^-)) \delta_{s_i}(t), \quad x(t_0) = x_0,
    \end{equation}
    if 
    $$B(t)y(t) = x_0 + \int_{t_0}^t f( \tau,y(\tau)) d\tau +\sum_{i=1}^N g_i(y(s_i^-)) H(t-s_i), \quad \forall t\in [t_0, t_1].$$
    Here $\delta_s(t)=\delta(t-s)$ and $\delta$ denotes the Dirac's delta distribution, $x(t^-) = \lim_{s\to t^-}x(s)$, and $H(t)$ is the Heaviside function given by $H(t)=1$, $t\geq 0$, $H(t)=0, t<0$.
\end{definition}

This notion of generalized solution captures the physical intuition that the charge should be continuous, and therefore, whichever notion of discontinuous solution we intend to use for the system of the potential must agree with this fact. From the point of view of the stability of the system, this notion is appropriate since it allows us to establish continuity of the solution map with respect to the capacitance function while allowing jump-type discontinuities.

A crucial concept when talking about neural activation is the action potential, which we can define for any threshold $\Theta_{\text{ap}} \in (0,\sqrt{3})$. We use the convention that a solution of \eqref{system_variable_capacitance} contains an action potential in the interval $[a,b]$ if $v(t) \geq \Theta_{ap}$ for some $t\in [a,b]$. This definition makes it very convenient to guarantee that a solution does not contain action potential since we only need to check that it stays below the threshold $\Theta_{ap}$. To understand this, we think that for FHN solutions, action potentials correspond to wide orbits around the equilibria in the phase space, rapidly rising in the $v$ variable. The concept of a threshold gives us n convenient criteria to discriminate between solutions that include such orbits.

\subsection{The equation for the electric charge: global existence and uniqueness}

To study equation \eqref{system_variable_capacitance}, it is fundamental first to understand the equation for the electric charge, which is a slightly more regular problem. 

\begin{definition}[Equation for the electric charge]
Let $\varepsilon$, $\beta$, $\gamma >0$. The equation for the electric charge is obtained by the change of variables $(u,w) = (C v, w)$ in equation \eqref{system_variable_capacitance}, which leads to the following initial value problem.
\begin{align}
    &\left\{\begin{array}{rl}
    \frac{d}{dt}u  & = \frac{1}{C(t)}u - \frac{1}{3C(t)^3}u^3 - w,\\
    \frac{d}{dt}w &= \varepsilon( \frac{1}{C(t)}u-\gamma w + \beta),\\
\end{array}\right.\label{regularized_problem}\\
    &u(0) = u_0  = C(0)v_0, \quad w(0) = w_0.\label{regularized_problem_initial_condition}
\end{align}

\end{definition}

We first prove global existence and global boundedness for equation \eqref{regularized_problem}, which will imply the corresponding result for the equation for the voltage \eqref{system_variable_capacitance}. This is done by establishing the local existence results and then finding appropriate invariant sets to extend the solution for all time $t>0$. 

\begin{lemma}[Local existence of solutions]\label{carateodory_solution}
Let $C: I\to (0,\infty)$ be a bounded, measurable function with positive lower and upper bounds in some interval $I =[\tau-a,\tau+a]$. Given $(p,q)\in \R^2$, there exists $0 < \beta \leq a$ so that the initial value problem \eqref{regularized_problem} with initial condition $(u(\tau),w(\tau)) = (p,q)$ has a unique absolutely continuous generalized solution in the interval $I_0 = [\tau-\beta,\tau+\beta]$.
\end{lemma}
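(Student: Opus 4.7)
The plan is to rewrite the initial value problem as an integral equation and solve it by a standard Banach fixed-point (Picard) argument, taking advantage of the fact that $C$ is bounded away from $0$ and $\infty$ to obtain bounded measurable coefficients. Since the equation contains no Dirac forcing terms (we are in the setting $N=0$ of Definition \ref{defi_generalized_solution}), a generalized solution is precisely an absolutely continuous function $(u,w):I_0\to\R^2$ satisfying
\[ u(t) = p + \int_\tau^t \Bigl[\tfrac{u(s)}{C(s)} - \tfrac{u(s)^3}{3C(s)^3} - w(s)\Bigr]\,ds, \quad w(t) = q + \int_\tau^t \varepsilon\Bigl[\tfrac{u(s)}{C(s)} - \gamma w(s) + \beta\Bigr]\,ds. \]

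First I would set $M_1 = \|1/C\|_{L^\infty(I)}$ and $M_3 = \|1/C^3\|_{L^\infty(I)}$, both finite by the assumption that $C$ is bounded above and below by strictly positive constants. Fix any $R>0$ and, for $0<h\leq a$ to be determined, let $I_0=[\tau-h,\tau+h]$ and
\[ \mathcal{B} = \{(u,w)\in C(I_0,\R^2): \|(u,w)-(p,q)\|_{L^\infty(I_0)} \leq R\}, \]
a closed subset of a Banach space under the sup norm. On $\mathcal{B}$ each component is uniformly bounded by $\max(|p|,|q|)+R$.

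Second, I would define the Picard map $F:\mathcal{B}\to C(I_0,\R^2)$ sending $(u,w)$ to the pair of right-hand sides above. For $(u,w)\in\mathcal{B}$ the integrands are bounded in absolute value by some constant $K$ depending only on $|p|,|q|,R,M_1,M_3,\varepsilon,\gamma,\beta$, so $\|F(u,w)-(p,q)\|_{L^\infty(I_0)}\leq Kh$, and choosing $h\leq R/K$ yields $F(\mathcal{B})\subseteq\mathcal{B}$. For the contraction estimate, using $|u_1^3 - u_2^3|\leq 3(\max(|p|,|q|)+R)^2|u_1-u_2|$ together with the boundedness of the coefficients, I obtain
\[ \|F(u_1,w_1) - F(u_2,w_2)\|_{L^\infty(I_0)} \leq Lh\,\|(u_1,w_1)-(u_2,w_2)\|_{L^\infty(I_0)} \]
for an explicit constant $L$ independent of $h$. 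Shrinking $h$ further so that $Lh<1$ makes $F$ a strict contraction on $\mathcal{B}$.

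Third, Banach's fixed-point theorem produces a unique $(u,w)\in\mathcal{B}$ that solves the integral equation on $I_0$. It is automatically absolutely continuous because each component equals a constant plus the integral of a bounded measurable function. Uniqueness among all generalized solutions on $I_0$ (not only those lying in $\mathcal{B}$) follows from a standard Gronwall argument: any two continuous solutions with the same initial data are bounded on the compact $I_0$, the nonlinearity is locally Lipschitz on that bounded range, and the coefficients $1/C$ and $1/C^3$ are in $L^\infty$. The principal technical difficulty is the cubic nonlinearity $u^3$, which is not globally Lipschitz; this is handled routinely by performing the iteration inside the closed ball $\mathcal{B}$, where a uniform bound on $|u|$ yields a uniform Lipschitz constant for $u\mapsto u^3$. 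The only subtlety introduced by the measurability (rather than continuity) of $C$ is that the resulting solution is only absolutely continuous instead of $C^1$, which is precisely what the generalized-solution framework accommodates.
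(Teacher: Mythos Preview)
Your argument is correct. The paper's own proof is much shorter: it simply invokes Carath\'eodory's existence theorem (stated in the appendix) for existence, and then appeals to the local Lipschitz property of the right-hand side in $(u,w)$ for fixed $t$, citing \cite[Chapter 1, Theorem 5.3]{haleOrdinaryDifferentialEquations2009}, to obtain uniqueness. So the paper delegates both steps to classical references, whereas you give a self-contained Picard iteration on a closed ball in $C(I_0,\R^2)$ together with a Gronwall argument for global uniqueness. Your route has the advantage of being elementary and fully explicit, and it makes transparent why the solution is absolutely continuous rather than $C^1$. The Carath\'eodory route is slightly more flexible in principle (it only requires an $L^1$ majorant $m(t)$ rather than $L^\infty$ bounds on the coefficients), but here that extra generality is not needed since $1/C$ and $1/C^3$ are bounded by hypothesis; in this setting the two approaches are equivalent.
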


\begin{proof}
    The existence is an immediate consequence of Caratheodory's existence theorem (we include the precise statement in the appendix, as Theorem \ref{thm_caratheodory}, for completeness). The uniqueness follows from the right-hand side of the equation being locally Lipschitz in $(u,w)$ for any fixed $t$. The detailed arguments for this weaker version of the classical uniqueness result can be found in \cite[Chapter 1, Theorem 5.3]{haleOrdinaryDifferentialEquations2009}.
\end{proof}

To construct a global solution, we prove global boundedness for the problem of time-dependent capacitance. A key aspect of the family of admissible capacitance functions considered in this article is that they must be bounded above and below by a positive constant. Under this general setting, we can claim that generalized solutions of \eqref{system_variable_capacitance} are globally bounded and unique. This tells us that the problem is well-posed, at least in some mild sense. The proof of the proposition is based on an application of contracting rectangles similar to \cite{cerpaImpactHighFrequencybased2024}.

\begin{proposition}[Global boundedness of the solutions for system \eqref{regularized_problem}]\label{prop_global_boundedness}
Let $\varepsilon$, $\beta$, $\gamma$, $T >0$ and suppose that $C: [0,T]\to  (0,\infty)$ is an admissible capacitance function. Then given $(u_0,w_0)\in \R^2$,  there exists a unique absolutely continuous solution $(u,w)\in C([0,T]; \R^2)$ for the initial value problem given by system \eqref{regularized_problem}-\eqref{regularized_problem_initial_condition}, and constants $A,B > 0$ depending only on $\alpha_1=\inf_{s\in[0,T]} C(s)$, $\alpha_2=\sup_{s\in[0,T]} C(s)$, $\beta$, $\gamma$ such that the generalized solution to the initial value problem \eqref{regularized_problem}
satisfy    
\begin{equation}\label{estimate_global_bound_proposition}
\left|(u(t),w(t)) \right|_\infty \leq A \left|(u_0,w_0)\right|_\infty + B
, \quad \forall t\in [0,T].
\end{equation}

\end{proposition}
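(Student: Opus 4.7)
The plan is to combine local existence from Lemma \ref{carateodory_solution} with an \emph{a priori} bound obtained via the contracting-rectangles technique and then extend the solution to all of $[0,T]$ by a standard continuation argument. Uniqueness will follow from the right-hand side of \eqref{regularized_problem} being locally Lipschitz in $(u,w)$ for every fixed $t$, so I would focus the proof on existence and on the bound \eqref{estimate_global_bound_proposition}.

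First I would produce a local absolutely continuous solution on each open subinterval between consecutive jumps of $C$, which are isolated by admissibility. On such a subinterval $C$ is measurable with $\alpha_1\leq C(t)\leq\alpha_2$, so $1/C$ and $1/C^3$ are in $L^\infty$, and Lemma \ref{carateodory_solution} applies. The pieces concatenate into one absolutely continuous $(u,w)$ on a maximal interval of existence because the generalized-solution convention forces $u=Cv$ (the electric charge) to be continuous through every jump of $C$, so no Dirac-type correction is needed in this variable.

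The core step is the construction of positively invariant rectangles $R_{U,W}=[-U,U]\times[-W,W]$ for the flow of \eqref{regularized_problem}. On the sides $u=\pm U$, one has $\dot u = u/C - u^3/(3C^3) - w$; the worst case over $w\in[-W,W]$ and $C\in[\alpha_1,\alpha_2]$ is attained at $C=\alpha_2$ and $|w|=W$, and gives the inward-pointing condition
\[
\frac{U^{3}}{3\alpha_2^{3}}-\frac{U}{\alpha_2} \;>\; W.
\]
On the sides $w=\pm W$, one has $\dot w = \varepsilon(u/C - \gamma w + \beta)$; the worst case over $u\in[-U,U]$ and $C\in[\alpha_1,\alpha_2]$ is attained at $C=\alpha_1$ and $|u|=U$, and gives
\[
\gamma W \;>\; \frac{U}{\alpha_1}+\beta.
\]
Since the first condition is cubic in $U$ while the second is only linear, both are satisfied simultaneously by choosing $W=(U/\alpha_1+\beta)/\gamma+1$ once $U$ exceeds a threshold $U_0=U_0(\alpha_1,\alpha_2,\beta,\gamma)$.

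Given $(u_0,w_0)$ with $M=|(u_0,w_0)|_\infty$, I would pick $U=\max\{U_0,\,M,\,\alpha_1\gamma M\}$ and the associated $W$, so that $(u_0,w_0)\in R_{U,W}$ and both invariance inequalities hold strictly. A standard contracting-rectangle argument---verifying the inward-pointing property almost everywhere in $t$ on a slightly enlarged $R_{U+\eta,W+\eta}$ and passing to the limit as $\eta\to 0^+$---then shows that the solution cannot leave $R_{U,W}$, producing \eqref{estimate_global_bound_proposition} with constants $A,B$ depending only on $\alpha_1,\alpha_2,\beta,\gamma$. This rules out finite-time blow-up and yields a solution on all of $[0,T]$. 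The main technical obstacle is pairing the cubic condition on the $u$-sides with the linear condition on the $w$-sides in a way that simultaneously encloses arbitrary initial data and produces constants linear in $M$; the coupling $W\sim U/(\alpha_1\gamma)$ is what makes this consistent with the cubic estimate for large $U$.
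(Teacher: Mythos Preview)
Your proposal is correct and follows essentially the same strategy as the paper: local Caratheodory existence plus positively invariant rectangles obtained by pairing the cubic inward condition on the $u$-sides with the linear condition on the $w$-sides, then a continuation argument to reach $[0,T]$. The only cosmetic differences are that the paper parametrizes the rectangles as a nested family $R_k=[-k,k]\times[-\theta k,\theta k]$ (with fixed slope $\theta>\tfrac{1}{\gamma\alpha_1}$) rather than choosing one rectangle per initial datum, and it bounds $\sup_t 1/C$ and $\inf_t 1/C^3$ separately instead of optimizing jointly in $C$; your claim that the worst case on the $u$-side occurs at $C=\alpha_2$ is valid once $U>\alpha_2$, which is absorbed into your threshold $U_0$.
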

\begin{proof}
We say that a rectangle $R = [a,b]\times [c,d]$ is positively invariant for the ODE system $\dot{X} = F(t,X)$ if for every point $p \in \partial R$ and every outward normal vector $\vec{n}$ at the point $p$ there is a neighborhood $V_p$ of $p$ such that $F(t, z)\cdot \vec{n} <0$ for all $z\in V_p$, uniformly in $t$. We aim to construct a family of nested positively invariant rectangles $R_k$, satisfying $\cup_{k\geq k_0} R_k = \R^2$, for system \eqref{regularized_problem}. Let $R = [-M,M]\times [-N,N]$, and define $F_1(t,u,w) = \frac{1}{C(t)}u - \frac{1}{3C(t)^3}u^3 -  w$, $F_2(t,u,w) = \varepsilon(\frac{1}{C(t)}u - \gamma w + \beta)$, we let $X(t)=(u(t),w(t))$ and $F(t,X)=(F_1(t,X),F_2(t,X))$. We study $F\cdot \vec{n}$ at the sides of the rectangle. 
\begin{itemize}
    \item On $u =M$, $|w| \leq N$ we aim for 
    \begin{center}
        $F_1(t,u,w) \leq \sup_{s\in[0,T]} \frac{1}{C(s)} M - \inf_{s\in[0,T]} \frac{1}{3C(s)^3} M^3 + N <0$.
    \end{center}
    \item On $u =-M$, $|w| \leq N$ we aim for 
    \begin{center}
    $F_1(t,u,w) \geq -\sup_{s\in[0,T]} \frac{1}{C(s)} M + \inf_{s\in[0,T]} 
\frac{1}{3C(s)^3}M^3 - N >0$.
\end{center}
    \item On $|v|\leq M$, $w= N$ we aim for $F_2(t,u,w) \leq \varepsilon ( \sup_{s\in[0,T]}\frac{1}{C(s)}M - \gamma N + \beta ) < 0$.
    \item On $|v|\leq M$, $w= - N$ we aim for $F_2(t,u,w) \geq \varepsilon (- \sup_{s\in[0,T]}\frac{1}{C(s)}M + \gamma N + \beta ) > 0$.   
\end{itemize}
If these conditions are satisfied, it implies that $R$ is indeed positively invariant. These requirements boil down to the following two conditions for the pair $(M, N)$
\[
\inf_{s\in[0,T]}\frac{1}{3 C(s)^3}M^3 - \sup_{s\in[0,T]} \frac{1}{C(s)} M  > N, \quad  \gamma N >\sup_{s\in[0,T]} \frac{1}{C(s)} M + |\beta|.
\]
If we denote $N = \theta M$, these conditions can be combined as 
\[
\frac{|\beta|}{\gamma M} + \frac{1}{\gamma}\sup_{s\in[0,T]}\frac{1}{C(s)} < \theta < \inf_{s\in[0,T]}\frac{1}{3 C(s)^3} M^2 - \sup_{s\in[0,T]} \frac{1}{C(s)}.
\]
For any $\theta > \frac{1}{\gamma}\sup_{s\in[0,T]}\frac{1}{C(s)}$, since both $\inf_{s\in[0,T]}\frac{1}{C(t)}$ and $\sup_{s\in[0,T]}\frac{1}{C(t)}$ are finite and positive, it is easy to see that the inequalities above are satisfied for all $M>0$ large enough. In summary, given any $\theta >  \frac{1}{\gamma} \sup_{s\in[0,T]}\frac{1}{C(s)}$ there exists $k_0\in \N$ such that rectangles $R_k = [-M_k,M_k] \times [-N_k,N_k]$ with $M_k = k$, $N_k = \theta k$, $k \geq k_0$, are a family of nested positively invariant rectangles $R_k$ for system \eqref{regularized_problem} with $\cup_{k\geq k_0} R_k = \R^2$.

Next, we show that if the initial condition belongs to the interior of a positively invariant rectangle for the system \eqref{regularized_problem}, then a continuous solution exists for all $t\in[0,T]$ and it stays inside the rectangle. Indeed, given the initial condition $(u_0,w_0)\in \text{int}(R)$ where $R=[-M,M]\times[-N,N]$ is a positively invariant rectangle of \eqref{regularized_problem}, and $X(t)$ a solution of \eqref{regularized_problem}-\eqref{regularized_problem_initial_condition}, define $\tau=\sup\{t\in[0,T]: X(s)\in R, \forall s\in[0,t]\}$ be the maximal time for which the unique absolutely continuous solution of \eqref{regularized_problem} exists and remains inside the rectangle $R$. Lemma \ref{carateodory_solution} implies that $\tau>0$. It also implies that if $X(\tau)$ belongs to the interior of $R$, then $\tau=T$, otherwise we could extend the continuous solution $X(t)$ beyond $\tau$ remaining inside $R$, contradicting the definition of $\tau$.

To complete the proof, we show that $X(\tau)\in\partial R$ leads to a contradiction. If $X(\tau)\in\partial R$, let $p=X(\tau)$, let $\vec{n}$ be a outward normal to $R$ at $p$, and let $V_p$ be the neighbourhood corresponding to $p$ and the invariant rectangle $R$. Since $X(t)$ is continuous, there exist $t_1<\tau$ such that $X(t)\in V_p, \forall t\in[t_1,\tau]$, and therefore $F(t,X(t))\cdot \vec{n} <0, \forall t\in[t_1,\tau]$, in particular $\int_{t_1}^\tau F(t,X(t))dt\cdot \vec{n} < 0$.  Since $X(t)$ is a solution of \eqref{regularized_problem} we have that
$X(\tau)-X(t_1)=\int_{t_1}^\tau F(t,X(t)) dt$, implying that $(X(t_1)-X(\tau))\cdot \vec n > 0 $, but this is impossible because $X(\tau)\in\partial R, X(t_1) \in R$, $\vec{n}$ is a outward normal at $p$ and $R$ is convex.

We observe in the proof that the constants appearing in the estimate can be chosen as $A=\max\{\theta,1/\theta\}$ and $B=\max\{k_0,\theta k_0\}$.
\end{proof}

Lastly, we can use the conclusion of Proposition \ref{prop_global_boundedness} to obtain analogous results for system \eqref{system_variable_capacitance}.

\subsection{Proof of the existence of solutions in Theorem \ref{thm_cap_no_ap}}

The following corollary establishes the existence result and additionally quantifies the boundedness of the solution in terms of the initial condition and the parameters of the equation.

\begin{corollary}[Global boundedness of the solutions of system \eqref{system_variable_capacitance}]\label{cor_boundedness_system_variable_capacitance}
Let $\varepsilon$, $\beta$, $\gamma$, $T >0$, and $C: [0,T]\to (0,\infty)$ be an admissible capacitance function. Then given $(v_0,w_0) \in \R^2$, there exists a unique generalized solution $(v,w)$ of system \eqref{system_variable_capacitance}. Moreover, there exist positive constants $A, B$ depending on $\varepsilon$, $\gamma$, $\beta$, $\inf_{s\in[0,T]} C(s)$, $\sup_{s\in[0,T]} C(s)$ such that
\begin{equation}\label{estimate_variable_capacitance_problem}
\left|( v(t),w(t) )\right|_\infty \leq A \left|(v_0,w_0)\right|_\infty + B.
\end{equation}
\end{corollary}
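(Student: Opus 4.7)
The plan is to transfer the result of Proposition \ref{prop_global_boundedness} back to the variables $(v,w)$ via the change of variables $u = C(t) v$. First I would observe that, by Definition \ref{defi_generalized_solution} applied to system \eqref{system_variable_capacitance}, a generalized solution $(v,w)$ satisfies
\begin{equation*}
C(t) v(t) = C(0) v_0 + \int_0^t \bigl( v(\tau) - v(\tau)^3/3 - w(\tau) \bigr)\, d\tau,
\end{equation*}
so defining $u(t) = C(t) v(t)$ and substituting $v = u/C$ in the integrand yields exactly the integral formulation of the electric charge system \eqref{regularized_problem} with initial condition $u_0 = C(0) v_0$, together with the ODE for $w$. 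Between jumps of $C$ the function $u$ is absolutely continuous by Proposition \ref{prop_global_boundedness}, and across each jump $s$ of $C$ the identity $u(s^+) = u(s^-)$ holds because the integral is absolutely continuous in $t$; this means $v$ has only jump discontinuities inherited from those of $C$, with $v(s^+) = v(s^-) C(s^-)/C(s^+)$, consistent with the piecewise right-continuous framework of Definition \ref{defi_generalized_solution}.

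Given this correspondence, existence and uniqueness of the generalized solution $(v,w)$ follow directly from the existence and uniqueness of the absolutely continuous solution $(u,w)$ provided by Proposition \ref{prop_global_boundedness} applied with initial data $(u_0,w_0) = (C(0) v_0, w_0)$. For the boundedness estimate, let $\alpha_1 = \inf_{s\in[0,T]} C(s)$ and $\alpha_2 = \sup_{s\in[0,T]} C(s)$, both strictly positive by admissibility. Proposition \ref{prop_global_boundedness} yields constants $A',B'>0$ depending on $\alpha_1$, $\alpha_2$, $\beta$, $\gamma$ such that
\begin{equation*}
|(u(t),w(t))|_\infty \leq A' |(u_0,w_0)|_\infty + B', \quad t\in[0,T].
\end{equation*}
Since $|u_0|_\infty = |C(0) v_0| \leq \alpha_2 |v_0|$, we obtain $|(u_0,w_0)|_\infty \leq \max\{\alpha_2,1\} |(v_0,w_0)|_\infty$. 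Finally, $|v(t)| = |u(t)|/C(t) \leq |u(t)|/\alpha_1$, so
\begin{equation*}
|(v(t),w(t))|_\infty \leq \max\{1/\alpha_1, 1\} \, |(u(t),w(t))|_\infty,
\end{equation*}
and chaining these inequalities yields the desired estimate \eqref{estimate_variable_capacitance_problem} with $A = A' \max\{\alpha_2,1\}\max\{1/\alpha_1,1\}$ and $B = B' \max\{1/\alpha_1, 1\}$.

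I do not expect any genuinely hard step here; the only subtlety is making sure the generalized-solution bookkeeping across jumps of $C$ matches Definition \ref{defi_generalized_solution}, specifically that the charge $u = Cv$ is continuous at every jump of $C$, which is exactly what the integral formulation encodes. Once that identification is made, everything reduces to Proposition \ref{prop_global_boundedness} and the two-sided bound on $C$.
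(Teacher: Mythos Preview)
Your proposal is correct and follows essentially the same route as the paper: pass to the charge variable $u=C(t)v$, invoke Proposition \ref{prop_global_boundedness} for existence, uniqueness, and the bound on $(u,w)$, and then pull everything back via $v=u/C$ using the two-sided bounds $\alpha_1\le C\le\alpha_2$. You are simply more explicit than the paper about the jump bookkeeping and about how the constants $A,B$ arise from $A',B'$ and $\alpha_1,\alpha_2$.
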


\begin{proof}
    The change of variable $(u,w) = (Cv,w)$, leads to system \eqref{regularized_problem}. Proposition \ref{prop_global_boundedness} guarantees that there exists an absolutely continuous generalized solution of \eqref{regularized_problem} in the entire interval $[0,T]$ and satisfies estimate \eqref{estimate_global_bound_proposition}. A generalized solution for the original system can be recovered via a change of variables $(v,w) = (u/C,w)$, which is still globally bounded, and because $(u,w)$ is absolutely continuous, $v(t)$ can only have discontinuities whenever $C(t)$ does. The uniqueness comes from the fact that whenever we have a generalized solution $(v,w)$ of \eqref{system_variable_capacitance} with said properties, then $(u,w) = (Cv,w)$ is a generalized solution of \eqref{regularized_problem}, and Proposition \ref{prop_global_boundedness} tells us that such solution is unique. 
    \end{proof}

\subsection{Continuous dependence results: Proof of Theorem \ref{thm_cap_no_ap}.(i) and Theorem \ref{thm_continuity_L1L1}}

We establish a continuous dependence of the solution on the initial condition and the capacitance function. It is convenient to work with the $L^1$ norm for the capacitance since it captures the idea that discontinuous coefficients, which are different only for a short time interval, are not too different.
The analysis in this section is based on \cite[Chapter 6.3]{lebovitz1999ordinary}.

\begin{proposition}[An $L^1-L^\infty$ continuous dependence result for FHN]\label{continuous_dependence_proposition}
Let $\varepsilon$, $\beta$, $\gamma$, $T >0$. For $i=1,2$ let $a_i, b_i, c_i, d_i: [0,T] \to \R$ be bounded and right-continuous functions, let $(p_i, q_i) \in \R^2$, and let $(v_i,w_i): [0,T]\to \R^2$ be a solution to the initial value problem 
\begin{equation}\label{equation_continuous_dependence}
\left\{
\begin{array}{rl}
    \frac{d}{dt}v_i &= a_i(t) v_i + b_i(t) v_i^3 + c_i(t) w_i,\\
    \frac{d}{dt}w_i &= \varepsilon(d_i(t) v_i - \gamma w_i + \beta),\\
    v_i(0) &= p_i, ~ w_i(0) = q_i.
\end{array}
\right.
\end{equation}
Suppose that there exists $M>0$ such that $\sup_{t\in[0,T]}\max\{|v_i(t)|,|w_i(t)|\} \leq M$ for $i=1,2$, then we have the following estimate for the difference $(E_v, E_w) = (v_1-v_2, w_1-w_2)$
\begin{align*}
\left|(E_v (t), E_w(t))\right|_\infty &\leq \Big(\left|(E_v(0),E_w(0)\right|_\infty + \|a_1 - a_2\|_{L^1} \|v_2\|_{L^\infty}\\
&\hspace{0.5cm}+\|b_1 - b_2\|_{L^1(0,T)} \|v_2^3\|_{L^\infty}
+ \|c_1 - c_2\|_{L^1} \|w_2\|_{L^\infty}\\
&\hspace{0.5cm}+\varepsilon \|d_1 - d_2\|_{L^1} \|v_2\|_{L^\infty}\Big)\\
&\hspace{0.5cm}\times \exp\left[ T \sup_{s\in[0,T]} \left( |a_1(s)| + |b_1(s)| M^2 +|c_1(s)| + \varepsilon|d_1(s)| + \varepsilon\gamma \right)\right].
\end{align*}
\end{proposition}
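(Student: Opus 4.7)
The plan is a classical variation-of-constants argument: subtract the two systems, regroup so that each right-hand side splits as a \emph{Lipschitz-type} piece in $(E_v,E_w)$ plus a \emph{source} that depends only on the differences of coefficients, and finally close the estimate with Gronwall.

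\textbf{Step 1: Equations for the differences.} Subtracting the equations for $(v_2,w_2)$ from those of $(v_1,w_1)$, I would first write
\begin{align*}
E_v' &= a_1 E_v + b_1 (v_1^3 - v_2^3) + c_1 E_w + (a_1-a_2)v_2 + (b_1-b_2) v_2^3 + (c_1-c_2) w_2,\\
E_w' &= \varepsilon d_1 E_v - \varepsilon\gamma E_w + \varepsilon (d_1 - d_2)v_2,
\end{align*}
using the algebraic identity $a_1 v_1 - a_2 v_2 = a_1(v_1-v_2) + (a_1-a_2)v_2$, and similarly for the other linear pieces, and applying the factorisation $v_1^3 - v_2^3 = (v_1^2 + v_1 v_2 + v_2^2) E_v$ for the cubic term. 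The uniform bound $|v_i|\leq M$ then yields $|v_1^3-v_2^3|\leq 3 M^2 |E_v|$ (the factor $3$ can be absorbed by enlarging $M$ slightly if one wishes the statement exactly as written).

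\textbf{Step 2: Integral inequality.} Integrating from $0$ to $t$ and taking absolute values gives, for $G(t) = |E_v(t)| + |E_w(t)|$,
\begin{align*}
G(t) &\leq G(0) + \int_0^t \bigl( |a_1(s)| + |b_1(s)| M^2 + |c_1(s)| + \varepsilon |d_1(s)| + \varepsilon\gamma \bigr) G(s)\, ds \\
     &\quad + \|a_1-a_2\|_{L^1} \|v_2\|_{L^\infty} + \|b_1-b_2\|_{L^1} \|v_2^3\|_{L^\infty} \\
     &\quad + \|c_1-c_2\|_{L^1} \|w_2\|_{L^\infty} + \varepsilon \|d_1-d_2\|_{L^1} \|v_2\|_{L^\infty},
\end{align*}
where I have bounded each source integrand $|a_1(s)-a_2(s)|\,|v_2(s)|$ and its analogues by taking out the $L^\infty$ norm of the second factor.

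\textbf{Step 3: Gronwall and conclusion.} The bracket in the integrand is bounded above by $K := \sup_{s\in[0,T]}(|a_1(s)| + |b_1(s)|M^2 + |c_1(s)| + \varepsilon|d_1(s)| + \varepsilon\gamma)$, which is finite because all $a_i,b_i,c_i,d_i$ are bounded. The classical Gronwall lemma applied to $G(t)$ then yields
$$G(t) \leq \bigl[\,G(0) + \text{(sum of source terms above)}\,\bigr]\, e^{KT}.$$
Since $|(E_v(t),E_w(t))|_\infty \leq G(t)$ and $G(0) \leq 2|(E_v(0),E_w(0))|_\infty$ (which again can be absorbed into the constant, or one can instead run the argument with $\max\{|E_v|,|E_w|\}$ and Dini derivatives), this gives the desired estimate.

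\textbf{Main obstacle.} The only nontrivial point is the cubic term, which is handled by the factorisation $v_1^3-v_2^3=(v_1^2+v_1v_2+v_2^2)E_v$ together with the a priori bound $|v_i|\leq M$ assumed in the hypothesis; once that linearisation is in place, the rest is bookkeeping for the source terms and a direct application of Gronwall's inequality. A minor subtlety is that the coefficients are only right-continuous, so one should read the differential equations in their integrated (Carathéodory) form throughout Step 2, which is exactly what the generalised solution framework of Definition \ref{defi_generalized_solution} already provides.
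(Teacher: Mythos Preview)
Your proposal is correct and follows essentially the same route as the paper: subtract the two systems, use $v_1^3-v_2^3=(v_1^2+v_1v_2+v_2^2)E_v$ together with $|v_i|\leq M$ to linearise the cubic, integrate, separate a Lipschitz part from a source part, and apply Gronwall. The only cosmetic difference is that the paper works directly with $|(E_v,E_w)|_\infty$ in the integral inequality (bounding each component separately and then taking the max), whereas you pass through $G=|E_v|+|E_w|$ and note the harmless factor~$2$; your own remark that one could instead run the argument with the $\max$ norm is exactly what the paper does. Both you and the paper pick up the factor~$3$ in front of $|b_1|M^2$ that is (silently) dropped in the statement's exponent.
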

\begin{proof}
The difference $(E_v, E_w) = (v_1-v_2, w_1-w_2)$ satisfy the system
\begin{align*}
\frac{d}{dt}E_v &= a_1 E_v - (a_1 - a_2) v_2 + b_1E_v(v_1^2+v_1v_2+v_2^2) - (b_1-b_2)v_2^3 \\
&\hspace{1cm} + c_1 E_w - (c_1 - c_2) w_2,\\
\frac{d}{dt}E_w &= \varepsilon(d_1(t) E_v - \gamma E_w) - \varepsilon (d_1-d_2)v_2 .
\end{align*}
Taking absolute value, integrating, and using the assumption $\sup_{s\in[0,T]}|v_i(s)|\leq M$, ($i=1, 2$) we get
\begin{align*}
I &= \left|(E_v (t), E_w(t))\right|_\infty \\
&\leq \left|(E_v(0),E_w(0) )\right|_\infty\\
&\hspace{1cm}+ \int_0^t \sup_{s\in[0,T]} \left( |a_1| + 3|b_1| M^2 +|c_1| + \varepsilon |d_1|+ \varepsilon\gamma \right)\left|(E_v(\tau),E_w(\tau))\right|_\infty d\tau\\
& \hspace{1cm} + \|a_1 - a_2\|_{L^1} \|v_2\|_{L^\infty} +\|b_1 - b_2\|_{L^1} \|v_2^3\|_{L^\infty}\\
& \hspace{1cm}+ \|c_1 - c_2\|_{L^1(0,T)} \|w_2\|_{L^\infty} + \varepsilon\|d_1-d_2\|_{L^1} \|v_2\|_{L^\infty}.
\end{align*}    
Gronwall's inequality gives us the estimate we are looking for. 
\end{proof}

In Proposition \ref{continuous_dependence_proposition}, the boundedness of the solution is an important assumption since, for instance, if $c(t) = 0$, $a(t) = 0$,  $b(t) = 1$, there are solutions of \eqref{equation_continuous_dependence} that blow up in finite time. For our application, the boundedness will be provided by Proposition \ref{prop_global_boundedness} and follows from the structure of the equation and the boundedness of the capacitance.

The following corollary gives a stability result for system \eqref{system_variable_capacitance} in a $L^1$ norm; most importantly, this result is valid even if the capacitance function is only piecewise continuous as in Definition \ref{defi_admissible_capacitances}.

\begin{corollary}
Let $\varepsilon$, $\beta$, $\gamma$, $T >0$. For $i=1, 2$, let  $C_i: [0,T] \to \R^+$ be an admissible and differentiable capacitance function and let $(v_i,w_i): [0,T]\to \R^2$ be a solution of system \eqref{system_variable_capacitance}  with $C = C_i$. Suppose that $\sup_{t\in[0,T]}\max\{|v_i(t)|,|w_i(t)|\} \leq M$ for $i=1,2$, then the solutions satisfy
\begin{align*}
&\left|(v_1(t),w_1(t)) -(v_2(t),w_2(t)) \right|_\infty \leq \Biggl(\left|(v_1(0),w_1(0)) - (v_2(0),w_2(0))\right|_\infty\\
&\hspace{3cm}+ \left\|\frac{1-C_1'(t)}{C_1(t)} - \frac{1-C_2'(t)}{C_2(t)}\right\|_{L^1(0,T)} \|v_2\|_{L^\infty}\\
&\hspace{3cm}+ \frac{1}{3}\left\|\frac{1}{C_1} - \frac{1}{C_2}\right\|_{L^1} \|v_2^3\|_{L^\infty}+ \left\|\frac{1}{C_1} - \frac{1}{C_2}\right\|_{L^1} \|w_2\|_{L^\infty} \Biggl)\\
&\hspace{3cm}\times \exp\left[ T \sup_{s\in[0,T]} \left( \left|\frac{1-C'_1(s)}{C_1(s)}\right| + \frac{M^2}{|C_1(s)|}+\frac{1}{|C_1(s)|} + \varepsilon + \varepsilon\gamma \right)\right]
\end{align*} 
\end{corollary}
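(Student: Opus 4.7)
The plan is to apply Proposition~\ref{continuous_dependence_proposition} after rewriting system~\eqref{system_variable_capacitance} in the polynomial-in-$v$ form of~\eqref{equation_continuous_dependence}. Since each $C_i$ is assumed differentiable, the product rule gives $\frac{d}{dt}(C_i(t) v_i) = C_i'(t) v_i + C_i(t) v_i'$, so solving for $v_i'$ yields
\begin{equation*}
v_i' = \frac{1 - C_i'(t)}{C_i(t)} v_i - \frac{1}{3 C_i(t)} v_i^3 - \frac{1}{C_i(t)} w_i,
\end{equation*}
while the $w_i$ equation is already in the desired form with coefficient of $v_i$ equal to the constant $1$ and constant source term $\varepsilon\beta$.

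First I would read off the coefficient functions in the notation of~\eqref{equation_continuous_dependence} as $a_i(t) = (1 - C_i'(t))/C_i(t)$, $b_i(t) = -1/(3 C_i(t))$, $c_i(t) = -1/C_i(t)$, and $d_i(t) \equiv 1$. Admissibility of $C_i$, together with the differentiability assumption, ensures that each of these is bounded and right-continuous on $[0,T]$, so all hypotheses of Proposition~\ref{continuous_dependence_proposition} hold with the uniform bound $M$ supplied by assumption.

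Then I would substitute directly. Because $d_1 \equiv d_2 \equiv 1$, the term $\varepsilon \|d_1 - d_2\|_{L^1} \|v_2\|_{L^\infty}$ drops out. The other three $L^1$-differences become exactly those listed in the corollary: $\|a_1 - a_2\|_{L^1}$ is $\|(1-C_1')/C_1 - (1-C_2')/C_2\|_{L^1}$, $\|b_1 - b_2\|_{L^1}$ is $\tfrac{1}{3}\|1/C_1 - 1/C_2\|_{L^1}$, and $\|c_1 - c_2\|_{L^1}$ equals $\|1/C_1 - 1/C_2\|_{L^1}$. For the exponential factor, $3|b_1(s)| M^2 = M^2/|C_1(s)|$, $|c_1(s)| = 1/|C_1(s)|$, and $\varepsilon|d_1(s)| = \varepsilon$, reproducing the supremum inside the exponential of the stated bound. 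Collecting these pieces gives the claimed estimate.

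There is no genuine obstacle; the corollary is essentially a specialization of Proposition~\ref{continuous_dependence_proposition}, and the proof is bookkeeping. The only subtlety worth flagging is that differentiability of $C_i$ is used precisely to convert the charge-conservation form of~\eqref{system_variable_capacitance} into the polynomial-in-$v$ form required; without it one could no longer regard $(1-C_i')/C_i$ as a bounded coefficient and would have to argue instead through the charge variable $u = Cv$ and convert back, as is done in the proof of Corollary~\ref{cor_boundedness_system_variable_capacitance}.
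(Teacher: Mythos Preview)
Your proposal is correct and follows exactly the paper's approach: the paper's proof is the one-line remark that the estimate is ``an immediate consequence of Proposition~\ref{continuous_dependence_proposition} after applying chain rule to equation~\eqref{system_variable_capacitance} and solving for $dv/dt$,'' and your writeup simply fills in that bookkeeping. Your observation that $d_1\equiv d_2\equiv 1$ kills the last $L^1$ term, and that $3|b_1|M^2 = M^2/|C_1|$ recovers the exponent, matches the paper's intended specialization.
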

\begin{proof}
    This is an immediate consequence of Proposition \ref{continuous_dependence_proposition} after applying chain rule to equation \eqref{system_variable_capacitance} and solving for $dv/dt$.
\end{proof}

In the case of system \eqref{system_variable_capacitance}, we need to impose a condition in the derivative of the capacitance since we know fast variations of the capacitance correspond to fast variations of the voltage. To obtain an estimate that does not require derivatives of the capacitance (as in Theorem 1 case (i)), we have to work with different norms to measure the distance between two solutions; hence, it is helpful to consider the equation for the electrical charge.

\begin{corollary}[Continuous dependence for problem \eqref{regularized_problem}]\label{continuous_dependence_charge_equation}
Let $\varepsilon$, $\beta$, $\gamma$, $T >0$. For $i=1, 2$, let  $C_i: [0,T] \to \R^+$ be an admissible capacitance function and let $(u_i,w_i):[0,T]\to \R^2$ be a solution of system \eqref{regularized_problem}  with $C = C_i$. Suppose that $\sup_{t\in[0,T]}\max\{|u_i(t)|,|w_i(t)|\} \leq M$ for $i=1,2$, then the solutions satisfy
\begin{align*}
&\left|(u_1(t),w_1(t)) -(u_2(t),w_2(t)) \right|_\infty \leq \Biggl(\left|(u_1(0),w_1(0)) - (u_2(0),w_2(0))\right|_\infty\\
&\hspace{2.5cm}+ (1+\varepsilon)\left\|\frac{1}{C_1(t)} - \frac{1}{C_2(t)}\right\|_{L^1} \|u_2\|_{L^\infty}+ \frac{1}{3}\left\|\frac{1}{C_1^3} - \frac{1}{C_2^3}\right\|_{L^1} \|u_2^3\|_{L^\infty}
\Biggl)\\
&\hspace{2.5cm}\times \exp\left[ T \sup_{s\in[0,T]} \left( \left|\frac{1}{C_1(s)}\right| (1+\varepsilon) + \frac{M^2}{|C_1(s)|}+ \varepsilon\gamma \right)\right]
\end{align*}

\end{corollary}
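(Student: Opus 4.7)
The plan is to view system \eqref{regularized_problem} as a particular instance of the general system \eqref{equation_continuous_dependence} and then invoke Proposition \ref{continuous_dependence_proposition} directly, with essentially no new work beyond bookkeeping.

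First I would make the identification
$$a_i(t) = \tfrac{1}{C_i(t)}, \qquad b_i(t) = -\tfrac{1}{3 C_i(t)^3}, \qquad c_i(t) = -1, \qquad d_i(t) = \tfrac{1}{C_i(t)},$$
with $u_i$ playing the role of $v_i$ in Proposition \ref{continuous_dependence_proposition}. Admissibility of $C_i$ (bounded above and below by positive constants, piecewise right-continuous) ensures that all four coefficients are right-continuous and uniformly bounded on $[0,T]$, while the standing hypothesis $|u_i|,|w_i|\le M$ supplies the bound on the solutions required by the proposition.

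Next I would compute the coefficient differences
$$a_1 - a_2 = d_1 - d_2 = \tfrac{1}{C_1} - \tfrac{1}{C_2}, \qquad b_1 - b_2 = -\tfrac{1}{3}\bigl(\tfrac{1}{C_1^3} - \tfrac{1}{C_2^3}\bigr), \qquad c_1 - c_2 \equiv 0,$$
and substitute them into the prefactor of Proposition \ref{continuous_dependence_proposition}. The $a$-contribution and the $\varepsilon d$-contribution both produce $\|1/C_1 - 1/C_2\|_{L^1}\|u_2\|_{L^\infty}$ and so combine with coefficient $(1+\varepsilon)$; the $b$-contribution produces $\tfrac{1}{3}\|1/C_1^3 - 1/C_2^3\|_{L^1}\|u_2^3\|_{L^\infty}$; and the $c$-contribution is zero. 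This reproduces exactly the prefactor asserted in the statement.

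For the exponent, direct substitution gives $T$ times the supremum of $(1+\varepsilon)/C_1(s) + M^2/(3C_1(s)^3) + 1 + \varepsilon\gamma$. Using that $C_1$ is bounded below and above by positive constants, the cubic term $M^2/(3C_1^3)$ can be majorized by a term of the form $\tilde M^2/C_1$ (after absorbing $1/(3\alpha_1^2)$ into a redefined constant), and the additive $1$ can likewise be bounded by $\alpha_2/C_1$ and absorbed into the $(1+\varepsilon)/C_1$ term. After this cosmetic repackaging of constants one arrives at the stated exponent. The only mildly delicate step is this last absorption of constants, but it is routine once admissibility is invoked; the substantive estimate is already supplied by Proposition \ref{continuous_dependence_proposition}, so I do not anticipate any genuine obstacle.
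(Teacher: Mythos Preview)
Your proposal is correct and follows exactly the paper's approach, whose entire proof is the single line ``Apply Proposition \ref{continuous_dependence_proposition} to system \eqref{regularized_problem}.'' Your remark that the exponent requires a cosmetic repackaging is accurate---direct substitution gives $M^2/(3C_1(s)^3)$ and an additive $1$ rather than the stated $M^2/|C_1(s)|$, so the corollary's exponent is a mild simplification of what the proposition literally yields.
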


\begin{proof}
Apply Proposition \ref{continuous_dependence_proposition} to system  \eqref{regularized_problem}
\end{proof}

\subsubsection{Proof of Theorem \ref{thm_cap_no_ap} case (i) }

The following proposition gives us Theorem \ref{thm_cap_no_ap}. To conclude, it is enough to observe that the assumption that the equilibrium $(v_*,w_*)$ is stable for a constant capacitance $C= C^*$, implies that any solution starting near the equilibrium stays close to the equilibrium.

\begin{proposition}\label{corollary_capacitance_variation_small}
Let $\varepsilon, \gamma, \beta>0$ so that system \eqref{system_variable_capacitance} with a constant capacitance $C_1>0$ has a single equilibrium (i.e. $\beta^2/\gamma^2 > \frac{4}{9}(1-1/\gamma)^3$). Let $T>0$ and  $C_2: [0,T] \to \R$ be an admissible capacitance. Suppose that the solution $(v_1, w_1)$ of system \eqref{system_variable_capacitance} with constant capacitance $C_1$ and initial data $(v_0, w_0)$ satisfies that for some $B>0$
$$\left|(v_1(t),w_1(t)) - (v_*,w_*)\right|_\infty \leq B, \quad t \in [0,T].$$
Then given $\Theta > v_* + B$ there exists $\delta > 0$ such that if the capacitance $C_2$ satisfies
    $$\sup_{t\in[0,T]}\left|C_1(t) - C_2(t)\right| \leq \delta,$$
then the solution $(v_2, w_2)$ to the initial value problem \eqref{system_variable_capacitance} with initial condition $(v_0, w_0)$ and capacitance function $C_2$ satisfies 
$$\left|(v_2(t),w_2(t)) - (v_*,w_*)\right|_\infty \leq \Theta - v_*, \quad t \in [0,T].$$
\end{proposition}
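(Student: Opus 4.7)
The plan is to work through the charge variables $u_i = C_i v_i$, where solutions are continuous and Corollary \ref{continuous_dependence_charge_equation} applies cleanly. The sup-norm closeness of the capacitances translates into $L^1$ closeness of $1/C_1 - 1/C_2$ and $1/C_1^3 - 1/C_2^3$, as well as closeness of the charge initial conditions, which is exactly what Corollary \ref{continuous_dependence_charge_equation} requires.

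First, for $\delta \leq C_1/2$ the capacitance $C_2(t)$ is uniformly bounded above and below by constants $\alpha_1, \alpha_2>0$ independent of $\delta$, so Proposition \ref{prop_global_boundedness} gives a uniform bound $M$ on both $(u_1, w_1)$ and $(u_2, w_2)$. Using the elementary estimates
\[\left|\frac{1}{C_1} - \frac{1}{C_2(t)}\right| \leq \frac{|C_1 - C_2(t)|}{\alpha_1^2}, \qquad \left|\frac{1}{C_1^3} - \frac{1}{C_2(t)^3}\right| \leq \frac{3\alpha_2^2\, |C_1 - C_2(t)|}{\alpha_1^6},\]
the sup-norm hypothesis yields $L^1$ estimates of order $T\delta$. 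The initial data differ by $|u_1(0) - u_2(0)| = |v_0|\,|C_1 - C_2(0)| \leq |v_0|\delta$, while the $w$-components coincide.

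Plugging these bounds into Corollary \ref{continuous_dependence_charge_equation} produces a constant $K > 0$ (depending on $T, \varepsilon, \gamma, \beta, \alpha_1, \alpha_2, M$, but not on $\delta$) such that
\[\sup_{t\in[0,T]} |(u_1(t), w_1(t)) - (u_2(t), w_2(t))|_\infty \leq K \delta.\]
Converting back to voltage via
\[v_1(t) - v_2(t) = \frac{u_1(t) - u_2(t)}{C_1} + u_2(t)\left(\frac{1}{C_1} - \frac{1}{C_2(t)}\right),\]
the first term is at most $K\delta/C_1$ and the second at most $M\delta/\alpha_1^2$, giving $\sup_t |v_1(t) - v_2(t)| \leq K'\delta$ for some $\delta$-independent $K'$. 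Combined with the hypothesis $|(v_1(t), w_1(t)) - (v_*, w_*)|_\infty \leq B$ and the triangle inequality, this yields
\[|(v_2(t), w_2(t)) - (v_*, w_*)|_\infty \leq B + \max\{K, K'\}\delta,\]
so choosing $\delta \leq \min\{C_1/2,\, (\Theta - v_* - B)/\max\{K, K'\}\}$, which is strictly positive by hypothesis, completes the proof.

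The main obstacle I would expect is that one cannot apply Proposition \ref{continuous_dependence_proposition} directly to system \eqref{system_variable_capacitance} in the voltage variables without introducing a derivative of $C_2$, which for an admissible (possibly discontinuous) capacitance need not even exist. The charge-variable detour sidesteps this: the regularity of $u = Cv$ makes the continuous dependence theorem applicable without any smoothness assumption on $C$, and the final pointwise inversion $v = u/C$ only costs a harmless multiplicative constant controlled by the lower bound $\alpha_1$.
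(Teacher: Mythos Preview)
Your proof is correct and follows essentially the same route as the paper: pass to the charge variables $u_i = C_i v_i$, apply Corollary~\ref{continuous_dependence_charge_equation} (with the sup-norm bound on $|C_1-C_2|$ controlling both the $L^1$ differences of $1/C_i$, $1/C_i^3$ and the difference of charge initial data), and then convert back to voltage via $v = u/C$. Your preliminary restriction $\delta \le C_1/2$ is in fact a small improvement, since it makes $\alpha_1,\alpha_2$ and hence all subsequent constants independent of the particular $C_2$, whereas the paper's choice of $\alpha_1,\alpha_2$ in terms of $\inf C_2,\sup C_2$ leaves the final $\delta$ nominally depending on $C_2$.
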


\begin{proof}

Let $\alpha_1 = \min\{ C_1,\inf_{s\in[0,T]}  C_2(s)\} >0$, $\alpha_2 = \max\{ C_1,\sup_{s\in[0,T]}  C_2(s)\} >0$, let $K$ be the upper bound for $\left|(v_2(t),w_2(t))\right|_\infty$ given by Corollary \ref{cor_boundedness_system_variable_capacitance} and define $\alpha_3 = \max\{\left| (v_*,w_*)\right|_\infty + B, K \}>0$. Corollary \ref{continuous_dependence_charge_equation} implies there exists a constants $M_j = M_j(T,\alpha_1,\alpha_2,\alpha_3)>0$ ($j=1,2$) such that
\begin{multline*}
\left| (u_1(t),w_1(t)) - (u_2(t),w_2(t)) \right|_\infty \\\leq M_1|(u_1(0),w_1(0)) - (u_2(0),w_2(0))|_\infty + M_2 \|C_1 - C_2\|_{L^1}.
\end{multline*}
Choose 
$$\delta = (\Theta - v_* - B)\left(M_1 \max\left\{\frac{1}{\alpha_1},1\right\} |v_*| + M_2 \max\left\{\frac{1}{\alpha_1},1\right\} T + \frac{1}{\alpha_1^2}(|v_*| + B)\right)^{-1},$$
then we know that for all $t \in [0,T]$
\begin{align*}
&\left|(v_2(t),w_2(t)) - (v_*,w_*) \right|_\infty \\
&\hspace{3cm}\leq \left|(v_2(t),w_2(t)) - (v_1(t),w_1(t)) \right|_\infty + \left|(v_1(t),w_1(t)) - (v_*,w_*) \right|_\infty\\
&\hspace{3cm}\leq \Bigg|(\frac{1}{C_2} C_2 v_2(t),w_2(t)) - (\frac{1}{C_2} C_1 v_1(t),w_1(t))\\
&\hspace{4cm}+(\frac{1}{C_2} C_1 v_1(t),w_1(t))  - (\frac{1}{C_1} C_1 v_1(t),w_1(t)) \Bigg|_\infty\\
&\hspace{3.5cm}+ \left|(v_1(t),w_1(t)) - (v_*,w_*) \right|_\infty\\
&\hspace{3cm}\leq \max\left\{\sup_{s\in[0,T]}\frac{1}{C_2(s)},1\right\} \left| (u_2(t),w_2(t)) - (u_1(t),w_1(t))\right|_\infty\\
&\hspace{3.5cm}+ \sup_{s\in[0,T]} \left|\frac{1}{C_2(s)} - \frac{1}{C_1}\right|  \sup_{s\in[0,T]} |v_1(s)|\\
&\hspace{3.5cm}+ \left|(v_1(t),w_1(t)) - (v_*,w_*) \right|_\infty\\
&\hspace{3cm}\leq M_1 \max\left\{\frac{1}{\alpha_1},1\right\} \left| (u_2(0),w_2(0)) - (u_1(0),w_1(0))\right|_\infty\\
&\hspace{3.5cm}+ M_2 \max\left\{\frac{1}{\alpha_1},1\right\} \left\|C_1 - C_2\right\|_{L^1}\\
&\hspace{3.5cm}+ \frac{1}{\alpha_1^2} \sup_{s\in[0,T]} \left|C_1- C_2(s)\right|  (|v_*| + B)
+ B\\
&\hspace{3cm}\leq \left(M_1 \max\left\{\frac{1}{C_2},1\right\} |v_*| + M_2 \max\left\{\frac{1}{C_2},1\right\} T + \frac{1}{\alpha_1^2}(|v_*| + B)\right)\\
&\hspace{3.5cm}\times \sup_{s\in[0,T]}\left|C_1 - C_2(s)\right| + B\\
&\hspace{3cm}\leq M \delta + B = \Theta - v_*.
\end{align*}
\end{proof}

\subsubsection{Proof of Theorem \ref{thm_continuity_L1L1}}

Let $(v_1,w_1)$, $(v_2,w_2)$ be the solutions of system  \eqref{system_variable_capacitance} in a time interval $[0,T]$ with corresponding capacitance functions $C_1,C_2$. Then 
\begin{align*}
\|v_1 - v_2\|_{L^1} &= \left\|\frac{1}{C_1}(C_1 v_1 - C_1v_2)\right\|_{L^1}\\
&= \left\|\frac{1}{C_1}(C_1 v_1 - C_2 v_2) + \frac{1}{C_1}(C_2- C_1) v_2\right\|_{L^1}\\
&\leq \left\|\frac{1}{C_1}\right\|_{L^1} \sup_{t\in[0,T]}|C_1 v_1 - C_2 v_2|   + \sup_{t\in[0,T]}\frac{1}{|C_1|} \left\|C_2- C_1 \right\|_{L^1} \sup_{t\in[0,T]}| v_2(t)|
\end{align*}

We use Corollary \ref{continuous_dependence_charge_equation} to bound the first term. Since $\alpha_1
\leq C_i(t) \leq \alpha_2, 
\forall t\in[0,T]$ for $i=1,2$, 
Proposition \ref{prop_global_boundedness} allows us to bound $\sup_{s\in [0,T]}|C_2(s)v_2(s)|$ and $\sup_{s\in [0,T]}|v_2(s)|$ in terms of $\alpha_1,
\alpha_2, \alpha_3$ and the parameters of the equation. Also, we observe that
$$\left|\frac{1}{C_1} - \frac{1}{C_2}\right| =  \frac{|C_2-C_1|}{C_1 C_2}\leq \frac{1}{\alpha_1^2}|C_2 - C_1|,$$
$$\left|\frac{1}{C_1^3} - \frac{1}{C_2^3}\right| = \left(\frac{1}{C_1^2} + \frac{1}{C_1 C_2} + \frac{1}{C_2^2}\right)\left|\frac{1}{C_1}-\frac{1}{C_2}\right| \leq \frac{3}{\alpha_1^4}|C_2 -C_1|.$$
We conclude that there exists a constant $M = M(T, \alpha_1 , \alpha_2, \alpha_3, \beta, \gamma) >0$ such that
\begin{align*}
\|v_1 - v_2\|_{L^1} \leq  M \left( \left|(v_1(0),w_1(0)) - (v_2(0),w_2(0))\right|_\infty  + \|C_1 - C_2\|_{L^1} \right)
\end{align*}
This concludes the proof of Theorem \ref{thm_continuity_L1L1}.

\qed

\subsection{Slowly varying capacitance functions: Proof of theorem \ref{thm_cap_no_ap} case (ii)}

To prove Theorem \ref{thm_cap_no_ap} case (ii), we will prove this slightly more general proposition, which justifies the claim that the changes in the capacitance must be abrupt enough. We do this by computing an energy function for the system that allows us to control the voltage variable near the equilibrium.

\begin{proposition}[Slow  changing capacitance avoids action potentials]
Let $\varepsilon$, $\beta$, $\gamma >0$ with $\beta^2/\gamma^2 >\frac{4}{9}(1-1/\gamma)^3$, $(v_*, w_*)$ be the equilibrium of \eqref{system_variable_capacitance} for a constant capacitance, and $\Theta> v_*$ be a threshold. Let $C:[0,T] \to \R^+$  be a $C^1$ admissible capacitance function that satisfies
\begin{equation}\label{condition_smallness_capacitance}
\sup_{s\in[0,T]}|C'(s)| <
\frac{1}{2}\frac{\min\left\{(v_*^2-1)/(2|v_*|), \Theta- v_*\right\}\min\left\{v_*^2 -1, 2\varepsilon\gamma\alpha_1\right\} }{ |v_*| + \min\{(v_*^2-1)/(2|v_*|), \Theta- v_*\}},
\end{equation}
where $\alpha_1:=\inf_{s\in[0,T]}C(s)>0$. Let $M$ be such that  
\begin{equation}\label{condition_bounds_M}
\frac{\sqrt{2} |v_*| \sup_s|C'(s)|}{\min\{v_*^2 -1,2\varepsilon\gamma\alpha_1\} - 2\sup_s|C'(s)|} <M\leq  \frac{1}{\sqrt2}\min\left\{\frac{v_*^2-1}{2|v_*|},\Theta-v_*\right\}.
\end{equation}
Let $(v,w)$ be the solution of \eqref{system_variable_capacitance} with capacitance $C$. If the initial condition satisfies
\begin{equation}\label{condition_initial_data_slow_capacitance}
\frac{1}{2}(v(0) - v_*)^2 + \frac{1}{2\varepsilon C(0)}(w(0) - w_*)^2 \leq M^2,
\end{equation}
then
\begin{equation}\label{conclusion_slow_capacitance}
\frac{1}{2}(v(t) - v_*)^2 + \frac{1}{2\varepsilon C(t)}(w(t) - w_*)^2 \leq M^2,
\end{equation}
for all $t>0$. In particular, $v(t) \leq \Theta$ for all $t >0$.

\end{proposition}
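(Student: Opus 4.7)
The strategy is to use $E(t) = \tfrac{1}{2}(v-v_*)^2 + \tfrac{1}{2\varepsilon C(t)}(w-w_*)^2$ as a (non-autonomous) Lyapunov function and to show that the sublevel set $\{E \leq M^2\}$ is forward invariant. Since $E(0)\leq M^2$ by hypothesis \eqref{condition_initial_data_slow_capacitance}, this will give \eqref{conclusion_slow_capacitance}, and in turn, using the upper bound on $M$ in \eqref{condition_bounds_M}, $|v(t)-v_*|\leq \sqrt{2}\,M \leq \Theta - v_*$, hence $v(t)\leq \Theta$.

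The first step is to compute $\dot E$ along a solution. Using $\dot v = \frac{1}{C}(v - v^3/3 - w) - \frac{C'}{C}v$ and $\dot w = \varepsilon(v-\gamma w+\beta)$, the equilibrium identities $v_* - v_*^3/3 = w_*$ and $v_* + \beta = \gamma w_*$, and the factorization $v^2 + v v_* + v_*^2 = 3v_*^2 + 3v_*\tilde v + \tilde v^2$ with $\tilde v = v - v_*$, $\tilde w = w - w_*$, the mixed $\tilde v\tilde w/C$ terms cancel and I obtain
\[
\dot E = -\frac{(v_*^2-1)\tilde v^2}{C} - \frac{v_*\tilde v^3}{C} - \frac{\tilde v^4}{3C} - \frac{C'(v_*+\tilde v)\tilde v}{C} - \frac{\gamma\tilde w^2}{C} - \frac{C'\tilde w^2}{2\varepsilon C^2}.
\]
Inside $\{E\leq M^2\}$ one has $|\tilde v|\leq \sqrt{2}\,M$, and the upper bound on $M$ in \eqref{condition_bounds_M} yields $|v_*|\sqrt{2}\,M \leq (v_*^2-1)/2$, so the destabilizing cubic $-v_*\tilde v^3/C$ is absorbed into half of the linear dissipation, leaving $-(v_*^2-1)\tilde v^2/(2C)$. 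Discarding the nonpositive quartic and using the triangle inequality on the $C'$-dependent terms, the $\tilde v^2$ and $\tilde w^2$ quadratic contributions combine into a weighted form bounded below by $(\Lambda - 2\sup_s|C'(s)|)\,E/C$, with $\Lambda = \min\{v_*^2-1,\,2\varepsilon\gamma\alpha_1\}$; the only non-quadratic residue is the linear driving $-C' v_*\tilde v/C$, controlled on $\{E=M^2\}$ by $|v_*|\sqrt{2}\,M \sup|C'|/C$. Therefore, on this boundary,
\[
\dot E \leq -\frac{(\Lambda - 2\sup_s|C'|)M^2}{C} + \frac{|v_*|\sqrt{2}\,M\,\sup_s|C'|}{C}.
\]

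This is strictly negative exactly when $M > \sqrt{2}\,|v_*|\sup_s|C'|/(\Lambda - 2\sup_s|C'|)$, which is the lower bound on $M$ in \eqref{condition_bounds_M}; a short algebraic rearrangement verifies that the joint feasibility of the two bounds in \eqref{condition_bounds_M} is equivalent to the smallness hypothesis \eqref{condition_smallness_capacitance}. Forward invariance of $\{E\leq M^2\}$ then follows from a standard barrier argument: if $E$ were to first cross $M^2$ at some time $\tau$, continuity together with $\dot E(\tau)<0$ would yield a contradiction.

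The principal obstacle is the simultaneous control of the two non-quadratic terms in the energy identity. The destabilizing cubic $-v_*\tilde v^3/C$ forces the upper bound on $M$, so that it is dominated on the sublevel set by the quadratic dissipation; the order-one linear driving $-C' v_*\tilde v/C$ forces the lower bound on $M$, so that the same dissipation dominates the $|C'|$-small forcing on the boundary. The hypothesis \eqref{condition_smallness_capacitance} is precisely what makes the window between these two bounds nonempty, explaining why it takes the specific rational form given in the statement.
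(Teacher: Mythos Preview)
Your proof is correct and follows essentially the same route as the paper: the same Lyapunov function $E(t)=\tfrac12(v-v_*)^2+\tfrac{1}{2\varepsilon C(t)}(w-w_*)^2$, the same energy identity, the same absorption of the cubic $-v_*\tilde v^3/C$ into half of the quadratic dissipation via $|\tilde v|\le\sqrt{2}M\le (v_*^2-1)/(2|v_*|)$, and the same barrier argument on $\{E=M^2\}$. The only cosmetic difference is that the paper writes the final differential inequality as $\dot E\le -\tfrac{1}{C}(\Lambda-2\sup|C'|)(E^{1/2}-M_0)E^{1/2}$ valid on the whole sublevel set, whereas you evaluate directly on the boundary; both lead to the same invariance conclusion.
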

\begin{proof}
First, we notice that condition \eqref{condition_bounds_M} is non-empty if and only if 
$$\frac{\sqrt{2} |v_*| \sup_s|C'(s)|}{\min\{v_*^2 -1,2\varepsilon\gamma\alpha_1\} - 2\sup_s|C'(s)|} < \frac{1}{\sqrt2}\min \left\{\frac{v_*^2 -1}{2|v_*|},\Theta - v_*\right\},$$
which itself is equivalent to condition \eqref{condition_smallness_capacitance}.

Second, consider the change of variables $(V,W) = (v-v_*,w-w_*)$ which leads to the following system
    \begin{equation*}
\left\{\begin{array}{rl}
    \frac{d}{dt}(C(t) V) & = -(v_*^2-1)V -v_* V^2- V^3/3 - W - v_* C'(t),\\
    \frac{dW}{dt} &= \varepsilon(V-\gamma W),\\
    v(0) &= v_0,\quad w(0) = w_0.
\end{array}\right.
\end{equation*}
Let $M$ be chosen satisfying condition \eqref{condition_bounds_M}, we will show that $E(t) := \frac{1}{2} V^2 + \frac{1}{2\varepsilon C} W^2$ remains appropriately bounded by $M^2$. Using the chain rule and solving for $\frac{d}{dt}V$, we compute 
\begin{align*}
\frac{1}{2}\frac{d}{dt} V^2 = V \frac{d}{dt}V  &= -\frac{1}{C} (v_*^2-1)V^2 - \frac{v_*}{C} V^3  -  \frac{1}{3C}V^4 - \frac{1}{C} V W - \frac{C'}{C} (V+v_*) V,
\end{align*}
noting that if $E(t)\leq M^2$, then $|V| \leq \sqrt{2 E(t)}\leq (v_*^2-1)/(2|v_*|)$, and we have the bound
\begin{equation}\label{condition_smallness_voltage}
-\frac{1}{C}(v_*^2-1)V^2 - \frac{v_*}{C} V^3 = \frac{V^2|v_*|}{C}  \left( -\frac{ v_*^2-1}{|v_*|} - \frac{v_*}{|v_*|}V \right)\leq -\frac{1}{2C}(v_*^2-1)V^2.
\end{equation}
Similarly, using the equation for $W$, we compute
\begin{align*}
    \frac{d}{dt}\left(\frac{1}{2\varepsilon C} W^2\right) = \frac{1}{\varepsilon C}W \frac{d}{dt} W -\frac{C'}{2\varepsilon C^2}W^2 &= \frac{1}{C} V W  - \frac{\gamma}{C} W^2 - \frac{C'}{2 \varepsilon C^2} W^2.
\end{align*}
Together, whenever $E(t)\leq M^2$, we obtain the estimate
\begin{align}
\frac{d}{dt}E(t) &= \frac{d}{dt}\left( \frac{1}{2} V^2 + \frac{1}{2\varepsilon C} W^2 \right) \notag \\
&=  -\frac{1}{C}(v_*^2-1)V^2 - \frac{v_*}{C} V^3 - \frac{1}{3C} V^4 - \frac{\gamma}{C} W^2 - \frac{C'}{C}\left( (V+ v_*) V + \frac{1}{2\varepsilon C}W^2 \right) \notag  \\
&\leq -\frac{1}{2C}(v_*^2-1)V^2 - \frac{\gamma}{C} W^2 + \frac{|C'|}{C}\left( V^2 +\frac{1}{2\varepsilon C}W^2 + |v_*| |V| \right) \notag  \\
&\leq - \frac{1}{C} \min\{ v_*^2-1 ,2\varepsilon\gamma\alpha_1 \} E(t) + \frac{\sup_{s\in [0,T]}|C'(s)|}{C} \left(2 E(t) + \sqrt{2}|v_*|E(t)^{1/2} \right)\notag  \\
&= -\frac{(\min\{v_*^2-1,2\varepsilon\gamma\alpha_1\} - 2 \sup_s|C'(s)|)}{C} \left(  E(t)^{1/2} - M_0\right) E(t)^{1/2} \label{estimate_derivative_energy}
\end{align}
where \eqref{condition_smallness_capacitance} also implies that $(\min\{v_*^2-1,2\varepsilon\gamma\alpha_1\} - 2 \sup_s|C'(s)|) >0$, and were we defined
$$M_0 := \frac{\sqrt{2} |v_*| \sup_s|C'(s)|}{\min\{v_*^2 -1, 2\varepsilon\gamma\alpha_1\} - 2\sup_s|C'(s)|}<M.$$
Because of \eqref{estimate_derivative_energy}, whenever $M_0^2< E(t) \leq M^2$, we get $\frac{d}{dt}E(t) <0$, and therefore $E(0) \leq M^2$ implies $E(t) \leq M^2$ for all $t\in[0,T]$.

Lastly, we show that the condition $M \leq \frac{(\Theta - v_*)}{\sqrt2}$ implies the bound in the voltage $v$, indeed
\begin{equation}\label{bound_voltage_slow_capacitance}
v(t) = v_* + V \leq v_* + \sqrt{2} E(t)^{1/2} \leq v_* + \sqrt{2} M\leq v_* + \sqrt{2} \left(\frac{\Theta - v_*}{\sqrt2}\right) \leq \Theta.
\end{equation}
\end{proof}

\section{Piecewise constant capacitances: Proof of Lemma \ref{theorem_dirac_forcing}}\label{subsection_dirac_forcing}

Proposition \ref{prop_global_boundedness} tells us that system \eqref{regularized_problem} has a unique generalized solution $(u,w)$, which is absolutely continuous and globally bounded. The change variables $(u,w)  = (C v,w)$ implies that the unique generalized solutions of \eqref{system_variable_capacitance} satisfy that $Cv$ is continuous for all $t \in [0,T]$, this implies that for all $s\in (0,T)$ we have that $\lim_{t\to s^-} C(t) v(t) = \lim_{t\to s^+} C(t) v(t)  = C(s) v(s)$, which implies
\begin{equation}\label{value_at_discontinuity}
v(s) = \frac{C(s^-)}{C(s)}v(s^-).
\end{equation}

Next, we derive the equivalent formulation given by \eqref{system_dirac_forcing}. We look for generalized solutions of a system where the first equation has the form
\begin{equation}\label{equation_distributional}
\frac{d}{dt} v = \frac{1}{C(t)}(v- v^3/3 - w) + \sum_{s\in\Lambda} a_s \delta_s(t) ,
\end{equation}
where $\Lambda$ is the set of points of discontinuity of the capacitance function $C$, $a_s\in \R$, $s\in \Lambda$. For $t\in(0,T) \setminus \Lambda$, since the capacitance is piecewise constant we have $\frac{d}{dt}C (t) = 0$. Let $I = (a,b)$ an interval where the capacitance function is continuous (i.e., constant), then the generalized solution is, in fact, classical on $I$, which implies that for $t\in I$ 
$$\frac{d}{dt}(Cv) = v-v^3/3-w \Leftrightarrow \frac{d}{dt} v = \frac{1}{C(t)} (v-v^3/3-w ).$$

Because generalized solutions are right continuous, we conclude that the solution of \eqref{equation_distributional} agrees with the solution of \eqref{system_variable_capacitance} in the entire interval $I$ if and only if their values at left endpoint $t = a$ agree.

Second, we look at the formula for the generalized solution of \eqref{equation_distributional}
$$v(t) = v_0 + \int_{t_0}^t \frac{1}{C(t)}(v- v^3/3 - w)  d\tau +\sum_{s\in\Lambda} a_s H(t-s), \quad \forall t\in [0, T].$$
For $s_0\in \Lambda$ we compare the values of $v(s_0^-)$ and $v(s_0)$ in order to find the appropriate value for the coefficient $a_{s_0}$. Since $H(\cdot)$ is right continuous we have $v(s_0) = v(s_0^-) + a_{s_0}$. Using \eqref{value_at_discontinuity} we solve for $a_{s_0}$ to obtain
$$a_{s_0} = v(s_0) - v(s_0^-) = \frac{C(s_0^-)}{C(s)} v(s_0^-) - v(s_0^-) = \frac{C(s_0^-) - C(s)}{C(s)}v(s_0^-).$$
With this choice of $\{a_s\}_{s\in\Lambda}$ the values of the solution agree at the points of discontinuities, and away from them, the solution is, in fact, classical and right continuous, which shows that the equivalence is valid everywhere. This tells us that system \eqref{system_variable_capacitance} is equivalent to \eqref{system_dirac_forcing} for any admissible piecewise constant capacitance function. 
\qed

\subsection{Proof of Theorem \ref{cor_piecewise_constant_input}}

In this section, we provide a proof of Theorem \ref{cor_piecewise_constant_input}. The proof will be split into three lemmas: item (i) in Lemma \ref{lemma_positive_voltage}, item (ii) in Lemma \ref{lemma_voltage_larger_1}, and item (iii) in Lemma \ref{lemma_arbitrarily_large_voltage}. They describe explicit strategies to generate action potential using a time-dependent capacitance function.

\begin{lemma}\label{lemma_positive_voltage}
    Let $\varepsilon$, $\beta$, $\gamma, \mu >0$ with $\beta^2/\gamma^2>\frac{4}{9}(1-1/\gamma)^3$, and $\beta < \sqrt{3}$, and let $(v_*, w_*)$ be the unique equilibrium of system \eqref{system_variable_capacitance} for a constant capacitance. Then there exists $\delta>0$, such that given $\tau_1>0$, a piecewise constant admissible capacitance $C$ of the form \eqref{capacitance_corollary} with $C_1/C_2<\delta$ (and $C_2 = C_3$) satisfy that the solution $(v,w$) of system \eqref{system_variable_capacitance} with capacitance $C$ initialized at $(v_*,w_*)$ satisfy that $v(t^*) >0$ and $w_*<w(t^*) < 0$  for some $\tau_1<t^*<\tau_1+\mu$.
\end{lemma}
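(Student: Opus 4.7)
The plan is to combine Lemma \ref{theorem_dirac_forcing}, which gives an explicit formula for the post-jump state, with a continuous-dependence argument for the autonomous FHN system that governs the dynamics on $[\tau_1,\tau_1+\mu]$.

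On the interval $[0,\tau_1)$ the capacitance equals the constant $C_1$ and the solution is initialized at the equilibrium, so by uniqueness (Corollary \ref{cor_boundedness_system_variable_capacitance}) it stays at $(v_*,w_*)$. Lemma \ref{theorem_dirac_forcing} then produces the right-limit at $\tau_1$,
\[
(v(\tau_1),w(\tau_1)) \;=\; \Bigl(\tfrac{C_1}{C_2}\,v_*,\; w_*\Bigr),
\]
which tends to $(0,w_*)$ as $C_1/C_2 \to 0^+$ (note that no jump occurs in $w$).

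For $t\ge\tau_1$ the capacitance is the constant $C_2=C_3$, so the system is autonomous with a smooth polynomial vector field. Let $(\tilde v,\tilde w)$ denote the solution of this autonomous system starting at $(0,w_*)$ at time $\tau_1$. Using $\beta<\sqrt 3$, Lemma \ref{properties_equilibrium} gives $v_*<-\beta<0$ and $w_*<0$, and the equilibrium identity $v_*-\gamma w_*+\beta=0$ yields
\[
\dot{\tilde v}(\tau_1) \;=\; -\tfrac{w_*}{C_2}\;>\;0,\qquad \dot{\tilde w}(\tau_1)\;=\;\varepsilon(-\gamma w_*+\beta)\;=\;-\varepsilon v_*\;>\;0.
\]
Thus both components of the vector field point into the open region $\{v>0,\ w_*<w<0\}$, so by continuity there exist $\eta\in(0,\mu)$ and $\rho>0$ with
\[
\tilde v(\tau_1+\eta)\ge 3\rho,\qquad w_*+3\rho\le \tilde w(\tau_1+\eta)\le -3\rho.
\]
Because the flow after $\tau_1$ is time-translation invariant, $\eta$ and $\rho$ may be chosen independently of $\tau_1$.

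To conclude, I invoke continuous dependence of the autonomous flow on initial data. Specifically, Proposition \ref{continuous_dependence_proposition} applied with $C_1\equiv C_2$ (both equal to the constant $C_2$) on the bounded invariant set supplied by Corollary \ref{cor_boundedness_system_variable_capacitance} yields $\delta>0$ such that $(C_1/C_2)|v_*|<\delta$ implies
\[
\bigl|(v(\tau_1+\eta),w(\tau_1+\eta))-(\tilde v(\tau_1+\eta),\tilde w(\tau_1+\eta))\bigr|_\infty<\rho.
\]
Setting $t^*:=\tau_1+\eta\in(\tau_1,\tau_1+\mu)$ gives $v(t^*)\ge 2\rho>0$ and $w_*+2\rho\le w(t^*)\le -2\rho<0$, completing the proof. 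The main obstacle is verifying that the choice of $\eta$, $\rho$, and $\delta$ can be made uniformly in $\tau_1$; this is essentially automatic since after $\tau_1$ the ODE is autonomous, and the signs of $\dot{\tilde v}(\tau_1)$ and $\dot{\tilde w}(\tau_1)$ are determined by the equilibrium inequalities of Lemma \ref{properties_equilibrium}.
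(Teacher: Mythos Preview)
Your argument is correct, but the route differs from the paper's. The paper carries out an explicit phase-plane exit argument: it places the post-jump state $(\tfrac{C_1}{C_2}v_*,w_*)$ inside a rectangle $R=[-\eta,0]\times[w_*,w_*/2]$, checks $F_1,F_2>0$ on $R$, and then uses Cauchy's mean value theorem on the ratio $w'/v'$ to rule out exit through the top side, forcing exit through $\{v=0\}$ within time less than $\mu$. You instead pass to the limiting initial condition $(0,w_*)$, read off the signs $\dot{\tilde v}(\tau_1)=-w_*/C_2>0$ and $\dot{\tilde w}(\tau_1)=-\varepsilon v_*>0$ to enter the target region instantly, and then close with continuous dependence on initial data. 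Your approach is shorter and avoids the geometric casework; the paper's approach is more constructive (it produces explicit inequalities for $\eta$, hence for $\delta$, in terms of $w_*$, $\beta$, $\gamma$, $\varepsilon$, $C_2$, $\mu$) and, more importantly, the same rectangle-plus-Cauchy-MVT template is reused in the next lemma to push the trajectory all the way to $v=\Theta$ for $\Theta\in(0,\sqrt3]$, a step where a bare continuous-dependence comparison to a single reference orbit would not be enough.
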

\begin{proof}
Let $\eta \in (0,1]$ to be chosen later. Lemma \ref{theorem_dirac_forcing}, tells us that given $\tau_1>0$, if we consider an admissible capacitance $C$ of the form \eqref{capacitance_corollary} with $C_1/C_2<\delta= \frac{\eta}{|v_*|}$ (and $C_2 = C_3$) we can bring the system from $(v_*, w_*)$ to $(\frac{C_1}{C_2}v_*, w_*)$ at time $t=\tau_1$. The idea is that our choice of $C_1$, $C_2 >0$ implies that $0<\frac {C_1}{C_2}|v_*|\leq \eta$ is suitably small, and using a derivative estimate, we show that after a short time, the solution crosses to the positive voltages.

In what follows $t>\tau_1$, which implies $C(t) = C_2$. For a constant capacitance, after dividing the first equation in \eqref{system_variable_capacitance} by $C = C_2$, denote the right-hand side of system \eqref{system_variable_capacitance}  by $F_1(v,w) = \frac{1}{C}(v-v^3/3-w)$, $F_2(v,w)=  \varepsilon(v-\gamma w + \beta)$. 
Let $\Omega= w_*/2$ and take 
\begin{equation}\label{choice_eta_1}
\eta < \min \left\{1, \beta, \frac{|\Omega|}{2},\frac{w_*^2}{8 C \varepsilon(-\gamma w_*+ \beta)}, \frac{\mu |\Omega|}{2C}\right\}.
\end{equation}
Consider the rectangle $R = [-\eta,0]\times [w_*, \Omega]$ and denote its sides $S_1 = \{-\eta\} \times [w_*,\Omega]$, $S_2 = [-\eta,0]\times \{w_*\}$, $S_3 = [-\eta,0]\times \{\Omega\}$, $S_4 = \{0\}\times [w_*,\Omega]$. For all $(v,w) \in R$ we have the bounds for $F_1$ and $F_2$
$$F_1(v,w) \geq \frac{1}{C}\left(-\eta + \eta^3/3 - \Omega\right), \quad \varepsilon (-\eta - \gamma \Omega + \beta) \leq F_2(v,w)\leq \varepsilon( - \gamma w_*+\beta).$$
Since $\eta< \Omega/2$ we can guarantee that $F_1(v,w) >0$, and because $\eta < \beta$ we can guarantee that $F_2(v,w)>0$. Consider a trajectory $\Gamma: [0,T] \to \R^2$ defined by $\Gamma(t) = (v(\tau + t),w(\tau + t))$ of the system with starting point in $[-\eta,0)\times \{w_*\}$, and let $T = \inf\{t >0: \Gamma (t) \notin R\}$. Because the vector field $F = (F_1,F_2)$ region has no stationary points inside $R$, then $T<\infty$, also since $F_1(v,w) >0$ and $F_2(v,w) >0$ we know that $T>0$. It is immediate from the definition that $\Gamma(T) \in \partial R$. If $\Gamma(T) \in S_1$, then the function $\left.v\right|_{[\tau,\tau+T]}$ has a local minima at $t = \tau + T$, but that is a contradiction with $v'(\tau + T)= F_1(\Gamma(T)) >0$. A similar argument shows $\Gamma(T) \notin S_2$. Also, because $v'(s) >0$ and $w'(s) >0$ for all $s\in (\tau,\tau+T]$ we know that $v(\tau+T) > v(\tau)$ and $w(\tau + T) > w(\tau)$.

    Apply Cauchy's mean value theorem to the scalar-valued functions $v, w:[\tau, \tau + T]\to\R$. It tells us that there exists $\hat{t}\in (0,T)$ such that
    \begin{equation}\label{generalized_mean_value1}
    \frac{w(\tau + T) - w(\tau)}{v(\tau + T)-v(\tau)} = \frac{w'(\tau + \hat{t})}{v'(\tau + \hat{t})} = \frac{F_2(\Gamma(\hat{t}))}{F_1(\Gamma(\hat{t}))}.
    \end{equation}
    Assume by contradiction that $\Gamma(T)\in S_3$, since both numerator and denominator are positive, we have the estimate
    $$\frac{w(\tau + T) -w(\tau)}{v(\tau+T)- v(\tau)} \geq \frac{\Omega - w^*}{\eta} = -\frac{w_*}{2\eta}$$
    on the other hand, we have that for all $(v,w) \in R$
    $$0<\frac{F_2(v,w)}{F_1(v,w)}\leq \frac{C\varepsilon (-\gamma w_* + \beta)}{-\eta + \eta^3/3 - \Omega} \leq \frac{2 C \varepsilon}{-\Omega} (-\gamma w_* + \beta) = \frac{4C \varepsilon }{-w_*}(-\gamma w_* + \beta) ,$$       
    substituting in \eqref{generalized_mean_value1} leads to
    $$-\frac{w_*}{2\eta} \leq \frac{4C\varepsilon}{-w_*}(-\gamma w_* + \beta) \Rightarrow \eta  \geq \frac{w_*^2}{8 C \varepsilon(-\gamma w_*+ \beta)},$$
    which is a contradiction with \eqref{choice_eta_1}. We conclude that $\Gamma(T)\notin S_3$. Since the trajectory must leave $R$ but cannot do so by crossing $S_1$, $S_2$, nor $S_3$, we conclude that $\Gamma(T) \in S_4$.

    The last thing we need is the estimate of the time. Since we know that at time $t = \tau_1$ we are inside the rectangle, we only need to estimate how long it takes to travel from the left side to the right side. For this purpose, we estimate the time $T$ using the condition that $\eta < \frac{\mu|\Omega|}{2C}$
    \begin{multline*}
    T \leq \frac{\text{width of the rectangle}}{\text{minimum speed in the $v$ axis}} = \frac{0+\eta}{\inf_{(v,w)\in R} F_1(v,w)} \\
    \leq \frac{C \eta}{-\eta + \eta^3/3 -\Omega}\leq \frac{2 C \eta}{|\Omega|} < \mu,
    \end{multline*}
    We conclude that $T<\mu$, and therefore, we can guarantee that we will cross $S_4$ in a time strictly less than $t= \tau_1 + \mu$.

\end{proof}

\begin{lemma}\label{lemma_voltage_larger_1}
    Let $\varepsilon, \gamma, \beta >0$ with $\beta^2/\gamma^2>\frac{4}{9}(1-1/\gamma)^3$, $\beta < \sqrt{3}$, and let $(v_*,w_*)$ be the unique equilibrium of \eqref{system_variable_capacitance} with constant capacity. Let $0 < \Theta \leq \sqrt{3}$, then there exists $\delta_i = \delta_i(\beta,\gamma, \varepsilon,\Theta) >0$ ($i=1,2$) such that if $C_1/C_2 < \delta_1$ and $C_2 \varepsilon < \delta_2$, then the solution $(v,w)$ of \eqref{system_variable_capacitance}  with a capacitance of the form \eqref{capacitance_corollary} (with $C_2 = C_3$) starting at $(v_*,w_*)$ satisfy that there $\exists t^*>\tau_1$ such that $v'(s)>0$ and $w^*<w(s)<0, \forall s\in[\tau_1,t^*]$, with $v(t^*)=\Theta$.
\end{lemma}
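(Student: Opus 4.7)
The plan is to combine the Dirac-forcing description from Lemma \ref{theorem_dirac_forcing} with a phase-plane trapping argument after the jump, exploiting the time-scale separation that arises when $\varepsilon C_2$ is small. By Lemma \ref{theorem_dirac_forcing}, the jump at $t=\tau_1$ maps $(v_*,w_*)$ to $(v_0,w_*)$ with $v_0:=v_* C_1/C_2$, and for $t>\tau_1$ the dynamics are those of the constant-capacitance FHN system with $C=C_2$. Lemma \ref{properties_equilibrium}(iii) gives $v_*\in(-\sqrt{3},-\beta)$ and $w_*<0$, so I fix $\delta_1=\delta_1(\beta,\gamma,\varepsilon)$ small enough that $|v_0|<\min\{1,\beta\}$ and, writing $g(v):=v-v^3/3$, also $g(v_0)\geq w_*/2$, giving a definite gap $g(v_0)-w_*\geq |w_*|/2>0$.

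I then work inside the trapping strip $S:=[v_0,\Theta]\times[w_*,\tilde{w}]$ with $\tilde{w}:=(w_*+g(v_0))/2\in(w_*,0)$. Three facts drive the argument. First, $\dot v\geq \alpha/C_2$ throughout $S$, where $\alpha:=g(v_0)-\tilde{w}=\tilde{w}-w_*>0$: indeed $\min_{[v_0,\Theta]}g=g(v_0)$ (since $g\geq 0$ on $[0,\sqrt{3}]$ and $g$ is increasing on $[v_0,0]\subset(-1,0)$), so $\dot v=(g(v)-w)/C_2\geq (g(v_0)-\tilde{w})/C_2$. Second, $\dot w>0$ throughout $S$, since $v-\gamma w+\beta\geq v_0-\gamma\tilde{w}+\beta>v_0+\beta>0$ using $\tilde{w}<0$ and $|v_0|<\beta$; in particular $w(s)>w_*$ for all $s>\tau_1$. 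Third, $|\dot w|\leq\varepsilon K_1$ with $K_1:=\Theta+|v_0|+\gamma|w_*|+\beta$, depending only on $\beta,\gamma,\Theta$ once $v_0$ is fixed.

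From the first fact, the time for $v$ to reach $\Theta$ from $v_0$ is at most $T_v:=C_2(\Theta-v_0)/\alpha$, while from the third, the time for $w$ to reach $\tilde{w}$ starting from $w_*$ is at least $T_w:=(\tilde{w}-w_*)/(\varepsilon K_1)$. Imposing $T_v<T_w$ yields the required bound $\varepsilon C_2<\delta_2$ with
$$\delta_2:=\frac{\alpha(\tilde{w}-w_*)}{K_1(\Theta-v_0)}.$$
Under this constraint the trajectory remains in $S$ until the first time $t^*$ with $v(t^*)=\Theta$, and the three facts above deliver $v'>0$ on $[\tau_1,t^*]$ and $w_*<w(s)<0$ on $(\tau_1,t^*]$. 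For the moreover assertion, by additionally shrinking $|v_0|$ proportionally to $\Theta$ (still within $(0,|v_*|\delta_1)$), the numerator $\alpha(\tilde{w}-w_*)$ remains bounded away from zero while $K_1(\Theta-v_0)\to 0$, so $\delta_2\to\infty$ as $\Theta\to 0$.

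\textbf{Main obstacle.} The delicate point is the bookkeeping of dependencies so that $\delta_2$ relies only on $(\beta,\gamma,\Theta)$ after $\delta_1$ has been fixed in terms of $(\beta,\gamma,\varepsilon)$; once the strip $S$ and the threshold $\tilde{w}$ are set up, the rest is a fairly standard invariant-strip and time-comparison scheme.
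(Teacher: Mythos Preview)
Your proof is correct and follows essentially the same rectangle-trapping strategy as the paper: show $\dot v>0$, $\dot w>0$ inside a box and compare the $v$- and $w$-speeds to force exit through the right edge $v=\Theta$. The only differences are cosmetic---the paper first reaches $v=0$ (via the preceding lemma) and then works in $[0,\Theta]\times[w_*,w_*/4]$, using Cauchy's mean value theorem instead of your direct time estimates; this two-stage split makes the claimed dependence of $\delta_2$ on $(\beta,\gamma,\Theta)$ alone (rather than implicitly on $v_0$) a bit more transparent than in your one-stage version.
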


\begin{proof}
    Because of the proof of Lemma \ref{lemma_positive_voltage}, there exists $\delta_1 >0$ such that if $C_1/C_2 < \delta_1$, then after a short time $t_0 > \tau_1$ we can reach a point in the segment $\{0\} \times [w_*, w_*/2]$. 

    Let $\Omega = w_*/4$.  Consider the rectangle $R = [0,\Theta]\times [w_*, \Omega]$ and denote its sides $S_1 = \{0\} \times [w_*,\Omega]$, $S_2 = [0,\Theta]\times \{w_*\}$, $S_3 = [0,\Theta]\times \{\Omega\}$, $S_4 = \{\Theta\}\times [w_*,\Omega]$. We show that given a trajectory of the system starting in the segment $\{0\}\times [w_*,w_*/2]$, it will leave the rectangle $R$ by crossing the segment $S_4$. 
    
    In what follows, we look at what happens for times $t>\tau_1$, set $C= C_2$ and let $F_1$, $F_2$ as in the proof of Lemma \ref{lemma_positive_voltage}. Fix a starting point in the segment $(v_0,w_0)\in \{0\}\times[w_*,w_*/2]$ and denote by $\Gamma(t) = (v(\tau_1 + t), w(\tau_1 + t))$ the trajectory of the system starting at said point, i.e. 
    $$\frac{d}{dt} v= F_1(v,w) , \quad\frac{d}{dt} w = F_2(v,w),\quad v(\tau) = v_0,\quad w(\tau) = w_0.$$ 
    Let $T = \inf\{t>0: \Gamma(t)\notin R\}$. Continuity implies $\Gamma(T) \in \partial R$. For any $(v,w)\in R$ we have the following bounds for $F_1$, $F_2$
    \begin{equation}
       \frac{-\Omega}{C} \leq F_1(v,w),\quad 
       \varepsilon(-\gamma\Omega +   \beta) \leq  F_2(v,w) \leq \varepsilon(\Theta-\gamma w_* + \beta).
    \end{equation}    
    
    Because $v-v^3/3\geq0$ for $v\in [0,\Theta]$ and $\Omega <0$, we get that $F_1(v,w) >0$ and $F_2(v,w)>0$ for all $(v,w)\in R$. Similar arguments as in the proof of Lemma \ref{lemma_positive_voltage} imply that $0<T<\infty$, and $\Gamma(T)\notin S_1 \cup S_2$, , $v(\tau_1+ T)>v(\tau_1)$, and $w(\tau_1+T)>w(\tau_1)$. 
    
    Next, we show $\Gamma(T)\notin S_3$. Apply the Cauchy's mean value theorem to the scalar-valued functions $v, w:[\tau,\tau+T]\to\R$. It tells us that there exists $\hat{t}\in (0,T)$ such that \eqref{generalized_mean_value1}. Suppose by contradiction that $\Gamma(T)\in S_3$, then
    \begin{equation}
        \frac{w(\tau_1+T)-w(\tau_1)}{v(\tau_1+T)-v(\tau_1)} \geq \frac{\Omega - w_*/2}{\Theta}
    \end{equation}
    on the other hand we have for all $(v,w)\in R$
    $$\frac{F_2(v,w)}{F_1(v,w)} \leq \frac{\varepsilon (\Theta-\gamma w_* + \beta)}{-\Omega/C}$$
    substituting in \eqref{generalized_mean_value1} this leads to 
    $$\frac{\Omega - w_*/2}{\Theta} \leq \frac{\varepsilon (\Theta-\gamma w_* + \beta)}{-\Omega/C} \Rightarrow C \geq \frac{(\Omega-w_*/2)(-\Omega)}{\varepsilon \Theta( \Theta - \gamma w_* + \beta)} = \frac{(w_*/4)^2}{\varepsilon \Theta( \Theta - \gamma w_* + \beta)} =  \delta_2.$$
    Since by hypothesis $C = C_2< \delta_2$, this leads to a contradiction and therefore $\Gamma(T) \notin S_3$. We conclude that the trajectory can only leave $R$ crossing $S_4$, which concludes the proof of the lemma.

\end{proof}

\begin{lemma}\label{lemma_arbitrarily_large_voltage}
    Let $\varepsilon, \gamma, \beta >0$ with $\beta^2/\gamma^2 > \frac{4}{9}(1-1/\gamma)^3$, $\beta <\sqrt{3}$, and let $(v_*,w_*)$ be the unique equilibrium of \eqref{system_variable_capacitance} with a constant capacity. Given $\Theta >0$ and $\tau>0$ there exists some $0<\tau_1<\tau_2$, $C_1$, $C_2, C_3>0$ so that if we consider a capacitance $C:[0,T] \to \R^+$ of the form \eqref{capacitance_corollary}. Let $(v,w)$ be the solution of system \eqref{system_variable_capacitance} with capacitance $C$ starting at $(v_*,w_*)$, then $v(\tau_2) > \Theta$.
\end{lemma}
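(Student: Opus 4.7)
The plan is to chain Lemma \ref{lemma_positive_voltage} with the multiplicative amplification provided by Lemma \ref{theorem_dirac_forcing}. The first capacitance jump at $\tau_1$ drives the voltage from $v_{*}$ across zero into the region $v>0$; the second jump at $\tau_2$ then rescales this positive value by the factor $C_2/C_3$, which can be made as large as we wish by taking $C_3$ small relative to $C_2$.

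First I would fix any $\mu>0$ and apply Lemma \ref{lemma_positive_voltage} with this $\mu$ to obtain the corresponding threshold $\delta>0$. I would then pick any $\tau_1>0$ and any $C_1,C_2>0$ satisfying $C_1/C_2<\delta$, leaving $C_3$ to be chosen later. Because the dynamics of \eqref{system_variable_capacitance} on the interval $[0,\tau_2)$ depend only on the restriction of the capacitance to that interval, the value $C_3$ is irrelevant for the behavior of the solution up to time $\tau_2$. Applying Lemma \ref{lemma_positive_voltage} (formally to the auxiliary two-piece capacitance equal to $C_1$ on $[0,\tau_1)$ and $C_2$ on $[\tau_1,\infty)$) therefore yields some $t^{*}\in(\tau_1,\tau_1+\mu)$ at which the solution starting at $(v_{*},w_{*})$ satisfies $v(t^{*})>0$.

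I would then set $\tau_2:=t^{*}$. Since $v$ is continuous on $(\tau_1,\tau_2)$, the left limit is $v(\tau_2^-)=v(t^{*})>0$. By Lemma \ref{theorem_dirac_forcing}, the jump at $\tau_2$ transforms the voltage according to
\[
v(\tau_2)=\frac{C_2}{C_3}\,v(\tau_2^-).
\]
Choosing
\[
C_3:=\frac{v(\tau_2^-)\,C_2}{\Theta+1}>0
\]
yields $v(\tau_2)=\Theta+1>\Theta$, as required, and the resulting three-piece function $C$ is an admissible capacitance since $C_1,C_2,C_3$ are all strictly positive.

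There is essentially no obstacle: the two previous lemmas do all of the heavy lifting, and the argument is fully constructive. The only point worth underscoring is that producing an arbitrarily large positive $v(\tau_2)$ requires the second jump to be \emph{downward}, i.e.\ $C_3<C_2$, so that the amplification factor $C_2/C_3$ can be taken as large as needed; the positivity of $v(\tau_2^-)$ supplied by Lemma \ref{lemma_positive_voltage} ensures that the prescribed $C_3$ is well-defined and strictly positive.
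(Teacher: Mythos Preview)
Your argument is correct and essentially identical to the paper's own proof: both use Lemma~\ref{lemma_positive_voltage} to drive the voltage into the region $v>0$ at some time $t^*>\tau_1$, then set $\tau_2=t^*$ and invoke the jump formula of Lemma~\ref{theorem_dirac_forcing} to amplify $v(\tau_2^-)>0$ by the factor $C_2/C_3$, choosing $C_3$ small enough to exceed $\Theta$. Your write-up is in fact slightly more explicit than the paper's about the independence of the dynamics on $[0,\tau_2)$ from $C_3$ and about the precise choice of $C_3$.
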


\begin{proof}
    Because of the proof of Lemma \ref{lemma_positive_voltage}, we know how to use a capacitance function to go from $(v_*, w_*)$ after a time $\tau_2 >\tau_1$ to a point $(v(\tau_2),w(\tau_2))$ with $v(\tau_2) = \eta >0$, $w_*\leq w(\tau_2)\leq 2 p \eta + q<0$. By considering a second capacitance jump at time $t = \tau_2$, we can drive the solution of the system to a point $(\frac{C_2}{C_3}v(\tau_2^-), w(\tau_2^-))$. Since $v(\tau_2^-)>0$, by taking $C_3$ small enough, we can make the ratio $\frac{C_2}{C_3}$ arbitrarily large so that $v(\tau_2) = \frac{C_2}{C_3} v(\tau_2^-)> \Theta.$
\end{proof}

\section{Periodic Capacitances: Proof of Theorem \ref{thm_high_frequency}}\label{sec_high_frequency}

In this section, we will show how the stability of a system with constant capacitance allows us to draw conclusions for the system with time-varying coefficients. The results here use the classical averaging theorem (See \cite{Khalil2002}), which uses the stability properties of the averaged system to establish the approximation of the corresponding non-autonomous system.

We will use the averaging technique on the non-autonomous system \eqref{regularized_problem}, the system for the electric charge, with time-dependent capacitance $C=C_\tau$. Denote by $f_1(u,w,t)$ and $f_2(u,w,t)$ the right-hand sides of \eqref{regularized_problem}, averaging the right-hand sides gives
\begin{equation*}
    \hat{f}_1(u,w) = \frac{1}{\tau}\int_{0}^{\tau} f_1(u,w,s) ds = \left(\frac{1}{\tau}\int_0^{\tau}\frac{ds}{C_\tau(s)}\right) u - \frac{1}{3}\left(\frac{1}{\tau}\int_0^{\tau}\frac{ds}{C_\tau(s)^3}\right) u^3 - w,
\end{equation*}
and 
\begin{equation*}
    \hat{f}_2(u,w) = \frac{1}{\tau} \int_{0}^{\tau} f_2(u,w,s) ds 
    = \varepsilon \left(\frac{1}{\tau}\int_0^{\tau}\frac{ds}{C_\tau(s)}\right) u - \varepsilon \gamma w + \varepsilon \beta. 
\end{equation*}
Denote by $A_k = \frac{1}{\tau}\int_0^{\tau}\frac{ds}{C_\tau(s)^k}= \frac{1}{T}\int_0^{T}\frac{d s}{C_T(s)^k}$, then the averaged system corresponding to \eqref{regularized_problem} is
\begin{equation}\label{averaged_system}
\left\{
\begin{array}{rl}
\dot{U} &= \hat{f}_1(U,W)= A_1 U  - \frac{A_3}{3} U^3 -W,\\
\dot{W} &= \hat{f}_2(U,W)=\varepsilon(A_1 U  - \gamma W + \beta).
\end{array}\right.
\end{equation}
This averaged system has the same equilibrium as \eqref{regularized_problem} in the case of constant capacitance.

To establish the stability of system \eqref{averaged_system}, it is enough to study the stability of the system
\begin{equation}\label{averaged_system_linearized}
\left\{
\begin{array}{rl}
\dot{\Delta u} &= (A_1 - A_3 u_0^2)\Delta u - \Delta w,\\
\dot{\Delta w} &= \varepsilon(A_1 \Delta u  - \gamma \Delta w),
\end{array}\right.
\end{equation}
its linearization about the point $(u_0,w_0)\in \R^2$, solution of the algebraic equation
\begin{equation}\label{averaged_system_equilibrium}
\left\{
\begin{array}{rl}
0 &= A_1 u_0  - \frac{A_3}{3} u_0^3 -w_0\\
0 &= A_1 u_0  - \gamma w_0 + \beta.
\end{array}\right.
\end{equation}
We first observe that $u_0$ is uniquely determined. Indeed, $u_0$ satisfies the cubic equation
\begin{equation}\label{cubic_equation_averaging}
\mathcal{P}(u_0):=u_0^3- \frac{3A_1}{A_3} (1-1/\gamma) u_0 + \frac{3\beta}{A_3\gamma} =0,
\end{equation}
which has a unique root if and only if
\begin{equation}\label{uniqueness_cubic}
9\left(\frac{\beta}{\gamma}\right)^2>4\left(\frac{A_1^3}{A_3}\right)\left(1-1/\gamma\right)^3.
\end{equation}
This last condition is satisfied due to the hypotheses of Theorem \ref{thm_high_frequency} and because $A_1^3 \leq A_3$ (with equality if and only if $C_T(t)$ is constant, by Jensen's inequality). Since $3\beta/A_3\gamma>0$, the uniqueness of $u_0$ also implies $u_0<0$ (by evaluating the cubic in $u_0=0$ and the intermediate value theorem).

Returning to the stability of system \eqref{averaged_system_linearized}, we have that the linear autonomous system \eqref{averaged_system_linearized} is exponentially stable if an only if $A_1 - A_3 u_0^2 <\min\{\varepsilon \gamma, \frac{A_1}{\gamma}\}$. Since $u_0<0$, this condition becomes equivalent to
\begin{equation*}
1 <\min\{\frac{\varepsilon \gamma}{A_1}, 1/\gamma \} \quad \text{ or } \quad  u_0 < -\sqrt{\frac{A_1}{A_3}}\left(1 - \min\{\frac{\varepsilon \gamma}{A_1}, 1/\gamma\}\right)^{1/2}=:r.
\end{equation*}
But since the monic polynomial $\mathcal{P}$ in \eqref{cubic_equation_averaging} has a unique root, then $u_0<r$ is equivalent to $\mathcal{P}(u_0)=0<\mathcal{P}(r)$, i.e.
\begin{align*}
u_0<r &\Leftrightarrow -\frac{A_1^{3/2}}{A_3^{3/2}}(1 - \min\{\frac{\varepsilon \gamma}{A_1}, 1/\gamma\})^{3/2}\\
&\hspace{2cm}+ \frac{3A_1^{3/2}}{A_3^{3/2}} (1-\frac{1}{\gamma})  (1 - \min\{\frac{\varepsilon \gamma}{A_1}, 1/\gamma\})^{1/2} + \frac{3}{A_3}\frac{\beta}{\gamma} >0 \notag \\
&\Leftrightarrow\quad \frac{A_1^{3/2}}{A_3^{1/2}}\left(2- \frac{3}{\gamma} +\min\{\frac{\varepsilon \gamma}{A_1}, \frac{1}{\gamma}\} \right) (1 - \min\{\frac{\varepsilon \gamma}{A_1}, \frac{1}{\gamma}\})^{1/2} + 3\frac{\beta}{\gamma} >0.
\end{align*}
Summarizing, system \eqref{averaged_system_linearized} is stable if and only if
\begin{equation}\label{cond_u0}
1 <\min\{\frac{\varepsilon \gamma}{A_1}, 1/\gamma \} \, \text{ or }
\, \frac{A_1^{3/2}}{A_3^{1/2}}\left(2- \frac{3}{\gamma} +\min\{\frac{\varepsilon \gamma}{A_1}, \frac{1}{\gamma}\} \right) (1 - \min\{\frac{\varepsilon \gamma}{A_1}, \frac{1}{\gamma}\})^{1/2} + 3\frac{\beta}{\gamma} >0.
\end{equation}

Recall that in Theorem \ref{thm_high_frequency} we assume the hpothesis that system \eqref{system_variable_capacitance}, with constant capacitance $C_*=A_1^{-1}$, has a unique and stable equilibrium $(v_*,w_*)$. This is equivalent to assume $1-v_*^2<\min\{\varepsilon\gamma/A_1,1/\gamma\}$, where $u_*$ is the only root of the cubic equation $v_*^3-3(1-1/\gamma)v_*+3\beta/\gamma=0$. By an analogous argument to the one above, the stability of $(v_*,w_*)$ is equivalent to assume
\begin{equation}\label{cond_u*}
1 <\min\{\frac{\varepsilon \gamma}{A_1}, 1/\gamma \} \quad \text{ or } \quad
\left(2- \frac{3}{\gamma} +\min\{\frac{\varepsilon \gamma}{A_1}, \frac{1}{\gamma}\} \right) (1 - \min\{\frac{\varepsilon \gamma}{A_1}, \frac{1}{\gamma}\})^{1/2} + 3\frac{\beta}{\gamma} >0.
\end{equation}

As mentioned before, $A_1^3/A_3 \leq 1$ for any $C_T(t)$ and therefore \eqref{cond_u*} implies \eqref{cond_u0}. We conclude that system \eqref{averaged_system_linearized}, and therefore system \eqref{averaged_system}, are stable around the equilibrium $(u_0, w_0)$.

Once we have established the exponential stability of the averaged system, the averaging theorem \cite[Theorem 10.4]{Khalil2002} implies that there exists $\hat{\delta}>0$ and $M > 0$ such that, given $\delta \in (0,\hat{\delta})$, there exists $\hat{\tau}>0$ such that $\forall \tau\geq\hat{\tau}$ the solution $(u,w)$ of system \eqref{regularized_problem}, with capacitance $C(t)= C_\tau(t)$, satisfies
$$\left|(u(0),w(0)) - (u_0,w_0)\right|_\infty \leq M \delta \quad \Rightarrow\quad 
\left|(u(t),w(t)) - (u_0,w_0)\right|_\infty \leq \delta 
\quad \text{ for all $t>0$.}$$
Doing a change of variable to the original setting, we finish the proof of Theorem \ref{thm_high_frequency}.

\section{Numerical experiments}\label{sec_numerical_experiments}

For the experiments, given $T>0$ and $0 < \kappa_1 \leq \kappa_2=0.5 < \kappa_3 \leq \kappa_4 = 1$, consider a capacitance that is a $T-$periodic function given by 
\begin{equation}\label{variable_capacitance_structure}
C(t) = \left\{ 
\begin{array}{lcl}
C_0 & , & 0 \leq t < \kappa_1 T,\\
C_0+ \frac{(t - \kappa_1 T)}{(\kappa_2-\kappa_1)T} (C_1 - C_0)  & , & \kappa_1 T \leq t < \kappa_2 T,\\
C_1 & , & \kappa_2 T \leq t < \kappa_3 T,\\
C_1 + \frac{(t - \kappa_3 T)}{(\kappa_4-\kappa_3)T} (C_0 - C_1) & , & \kappa_3 T \leq t \leq \kappa_4 T.
\end{array}
\right. 
\end{equation}

\begin{figure}
    \centering
    \includegraphics[width = 0.5\textwidth]{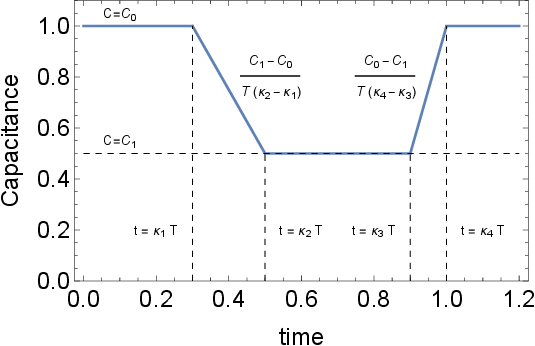}
    \caption{\textbf{Time-dependent membrane capacitance as described by equation \eqref{variable_capacitance_structure}}. Note that the magnitude of the slopes of the downward and upward variations are inversely proportional to $\kappa_2 - \kappa_1$ and $\kappa_4 - \kappa_3$, respectively. For fixed values of $\kappa_2$ and $\kappa_4$, the slopes becomes steeper when $\kappa_1 \nearrow \kappa_2$ and $\kappa_3 \nearrow \kappa_4$. In this figure, $T = 1$, $C_0=1$, $C_1 =0.5$, $\kappa_1 = 0.3$, $\kappa_2 = 0.5$, $\kappa_3 = 0.9$, $\kappa_4 = 1$}
    \label{figure_capacitance}
\end{figure}

Figure \ref{figure_capacitance} shows an example of such capacitance. The simulations were performed using Mathematica 12.3. Unless specified otherwise, we consider the solution of \eqref{system_variable_capacitance} starting from its equilibrium (with constant capacitance). To avoid double counting action potentials, we only count peaks in the voltage curve above the threshold $\Theta_{ap}=1$ that survive Gaussian filtering with $\sigma = 2$. All heat maps use rounding to the nearest 0.5 for clarity.

\begin{itemize}
    \item \textbf{Experiment 1}: Illustrates the results of Theorem \ref{thm_cap_no_ap} by quantifying the effects of a time-dependent capacitance, $C(t)$, on the membrane voltage, $v(t)$. Consider the solution of the FHN system~\eqref{system_variable_capacitance}, with parameters $\varepsilon = 0.08$, $\beta = 0.65$, and $\gamma = 0.7$, starting from  $(v_*,w_*)$ solution of \eqref{equation_equilibrium}. 
    
    For the first part of this experiment, we let $T = 100$, $C_0 =1$, $C_1= 0.4$, and we let the parameters  $(\kappa_1,\kappa_3)$ in the range $[0.2, 0.5]\times [0.7, 1]$, therefore looking at different slopes for the variations of $C(t)$. Figure~\ref{fig3a} summarizes the results of Experiment 1: it shows the spikes per cycle, \textit{i.e.}, the number of action potentials observed in $t\in[0,1000]$  divided by the number of cycles of the capacitance. Note that more action potentials are observed when the slopes become steeper. Furthermore, depending on the magnitude of the slopes, we observe no action potentials (Figure~\ref{fig3b}),  action potentials for the downward variations only (Figure ~\ref{fig3c}),  action potentials for upward variations only (Figure ~\ref{fig3d}), and action potentials for both downward and upward variations (Figure~\ref{fig3e}). The results of Experiment 1 are consistent with Theorem \ref{thm_cap_no_ap}, which states that the variations must be large enough to elicit action potentials.

    As a second part of Experiment 1, for three different values of $T$ ($T=100, 30, 3$), we explored the effects of the magnitude and the slopes of the variations of the capacitance on neuron responses. In this case, the magnitude of the variations is given by $\Delta C = C_0 - C_1$. We observe that: first, if the variations are not large enough, no action potentials are observed, regardless of the values of $\kappa_1$ and $T$ (Figures~\ref{fig4b}, \ref{fig4c}, \ref{fig4g}, \ref{fig4h}, \ref{fig4l}, and \ref{fig4m}). Second, if the variations are large and abrupt enough, it is possible to observe persistent action potentials (Figures~\ref{fig4d}, \ref{fig4e}, \ref{fig4i}, and \ref{fig4j}). And third, more frequent changes are not necessarily accompanied by more action potentials (Figures~\ref{fig4n} and \ref{fig4o}). This last outcome is better explained by Theorem \ref{thm_high_frequency}.

\begin{figure}
    \centering
    \subfloat[]{
\label{fig3a}\includegraphics[width=0.2\textwidth]{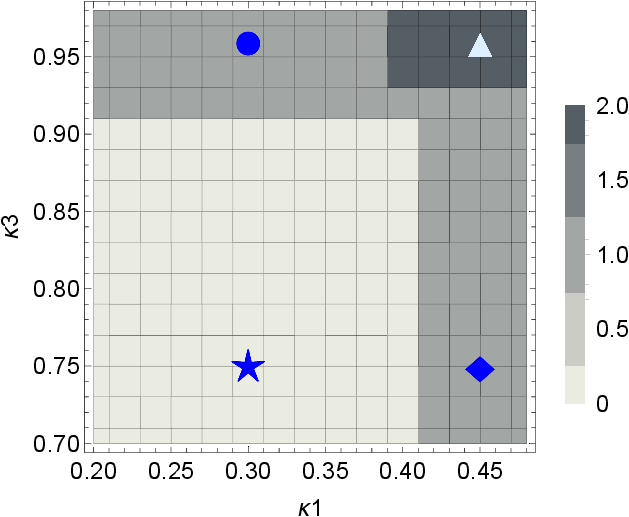}}
    \subfloat[$\bigstar$]{\label{fig3b}\includegraphics[width=0.19\textwidth]{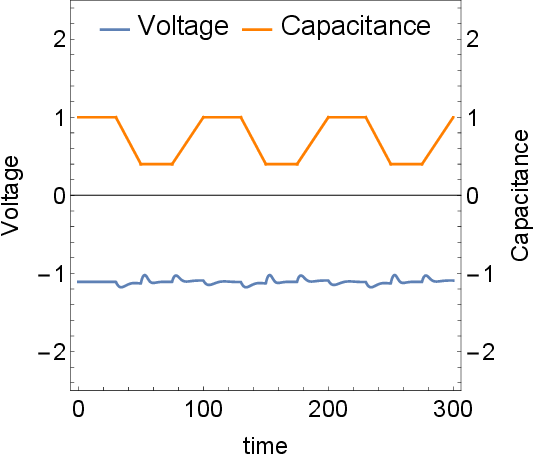}}    
    \subfloat[$\blacklozenge$]{\label{fig3c}\includegraphics[width=0.19\textwidth]{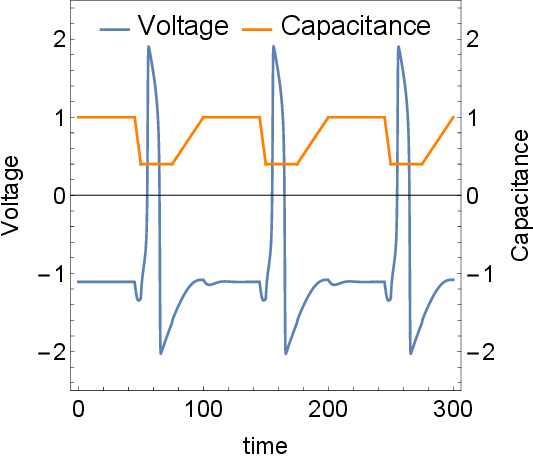}}
    \subfloat[$\bullet$]{\label{fig3d}\includegraphics[width=0.19\textwidth]{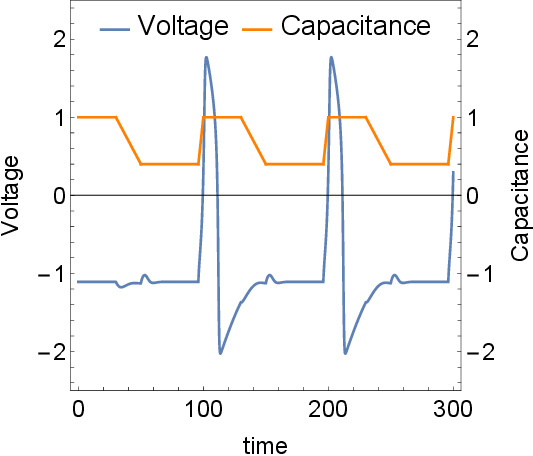}}
    \subfloat[$\blacktriangle$]{\label{fig3e}\includegraphics[width=0.19\textwidth]{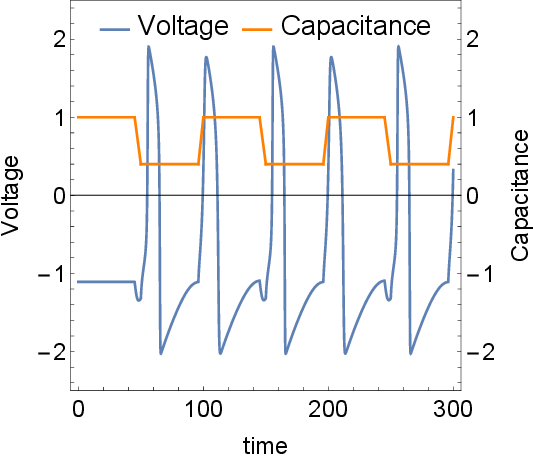}}\\

    \caption{\textbf{Capacitance changes must be steep enough to elicit action potentials}. (a) Spikes per cycle, \textit{i.e.}, the number of action potentials divided by the number of cycles of the capacitance in the $t\in[0,1000]$ interval for different values of $\kappa_1$, and $\kappa_3$, which are related to the slope in the capacitance. Note that as the downward and/or upward slope increases, it is possible to observe action potentials due to the variations in the capacitance. Plots (b)--(e) show the voltage and the capacitance for: (b) $(\kappa_1,\kappa_3)=(0.3,0.75)$, (c) $(\kappa_1,\kappa_3)=(0.3,0.96)$, (d) $(\kappa_1,\kappa_3)=(0.45,0.96)$, and (e) $(\kappa_1,\kappa_3)=(0.45,0.75)$. Here, $T= 100$, $C_0 =1$, $C_1 = 0.4$}
    \label{fig3}
\end{figure}    

\begin{figure}
    \centering
    \subfloat[]{\label{fig4a}\includegraphics[width=0.2\textwidth]{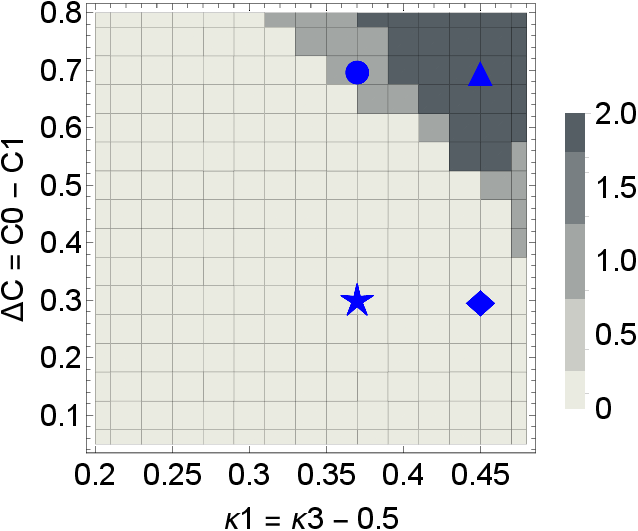}}
    \subfloat[$\bigstar$]{\label{fig4b}\includegraphics[width=0.19\textwidth]{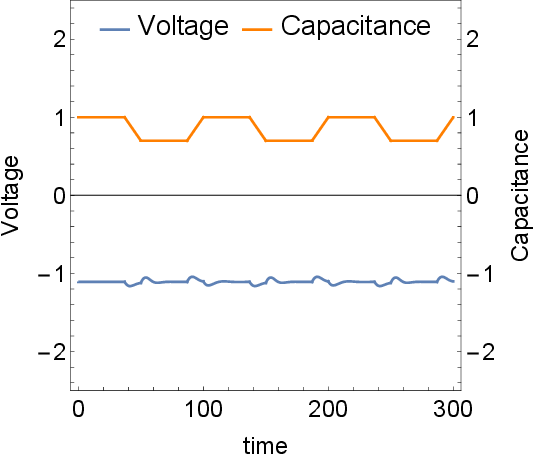}}
    \subfloat[$\blacklozenge$]{\label{fig4c}\includegraphics[width=0.19\textwidth]{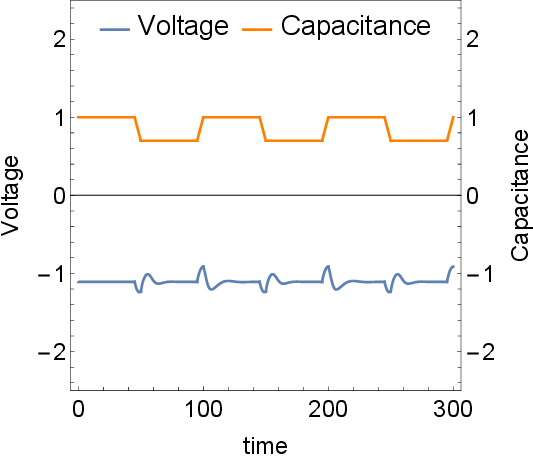}}
    \subfloat[$\bullet$]{\label{fig4d}\includegraphics[width=0.19\textwidth]{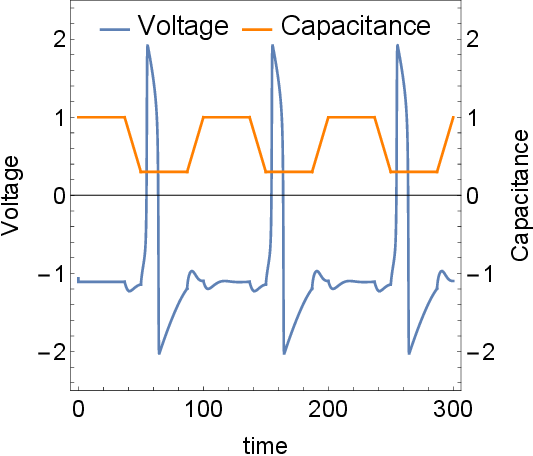}}
    \subfloat[$\blacktriangle$]{\label{fig4e}\includegraphics[width=0.19\textwidth]{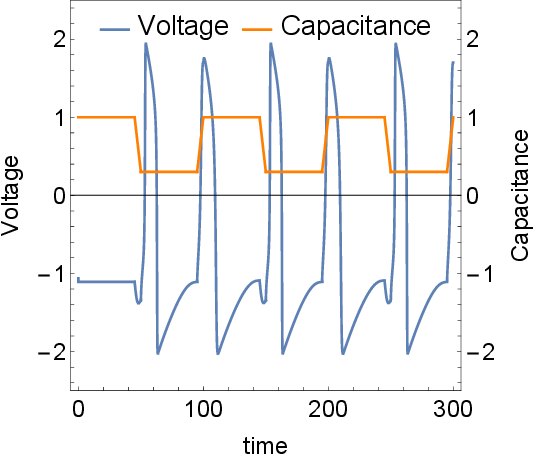}}\\
    
    \subfloat[]{\label{fig4f}\includegraphics[width=0.2\textwidth]{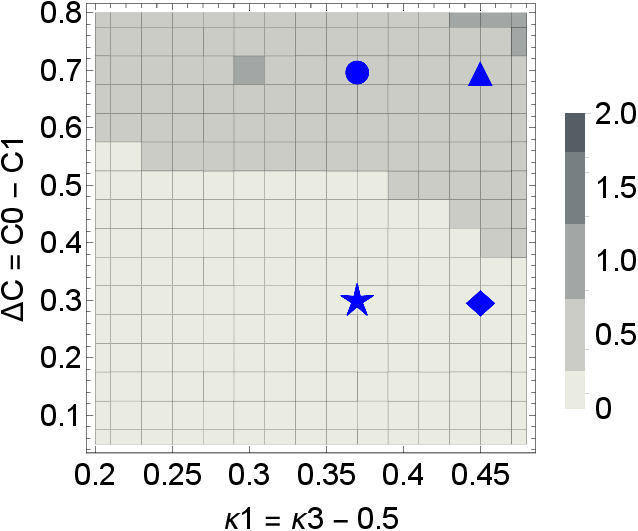}}
    \subfloat[$\bigstar$]{\label{fig4g}\includegraphics[width=0.19\textwidth]{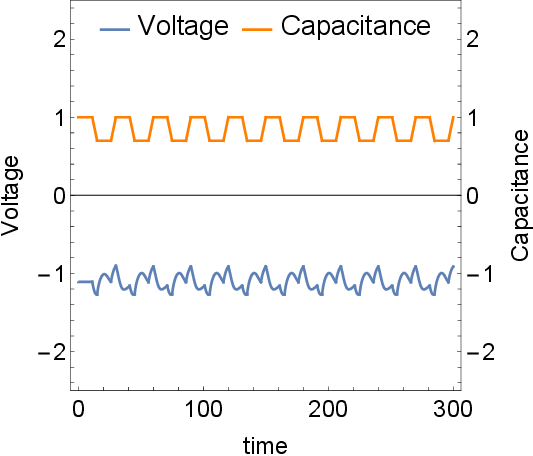}}
    \subfloat[$\blacklozenge$]{\label{fig4h}\includegraphics[width=0.19\textwidth]{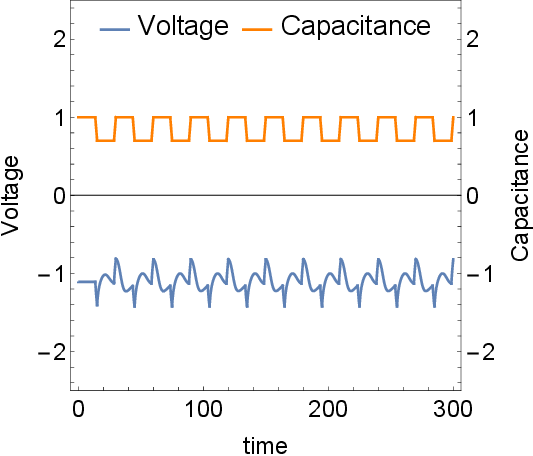}}
    \subfloat[$\bullet$]{\label{fig4i}\includegraphics[width=0.19\textwidth]{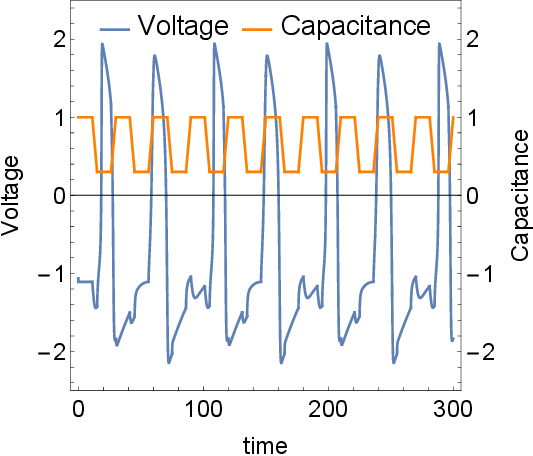}}
    \subfloat[$\blacktriangle$]{\label{fig4j}\includegraphics[width=0.19\textwidth]{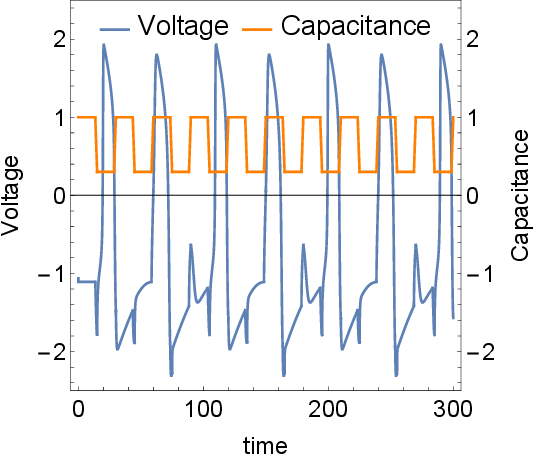}}\\
    \subfloat[]{\label{fig4k}\includegraphics[width=0.2\textwidth]{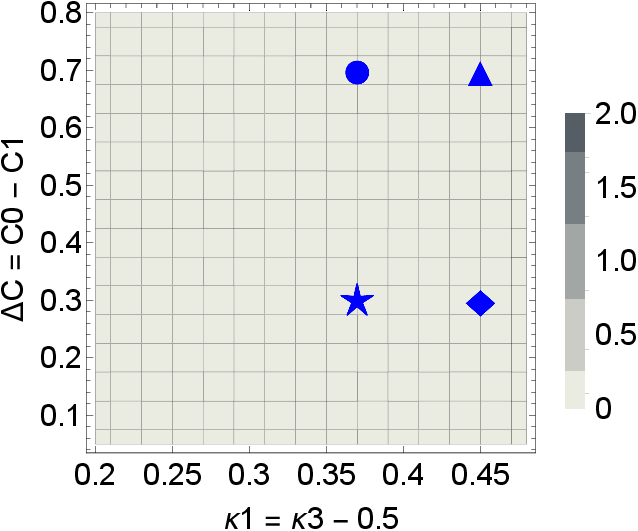}}
    \subfloat[$\bigstar$]{\label{fig4l}\includegraphics[width=0.19\textwidth]{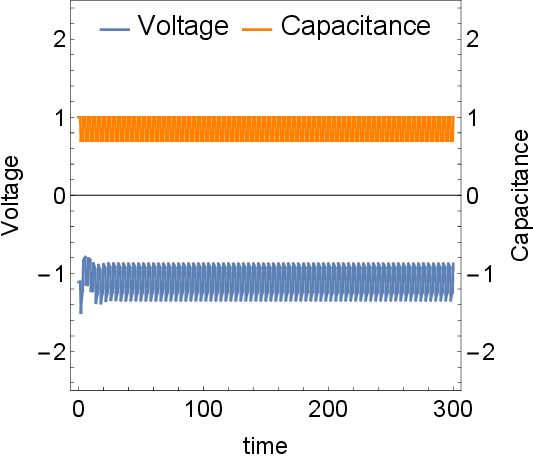}}
    \subfloat[$\blacklozenge$]{\label{fig4m}\includegraphics[width=0.19\textwidth]{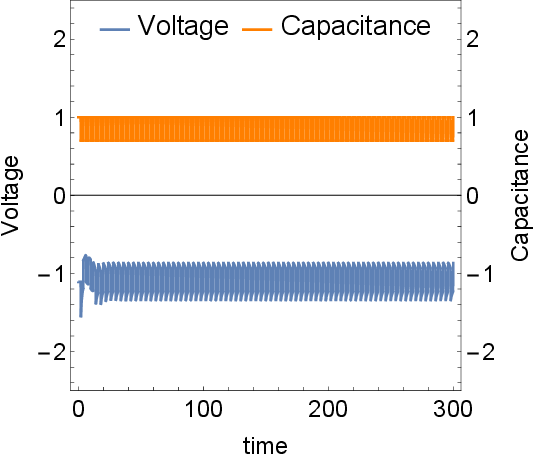}}
    \subfloat[$\bullet$]{\label{fig4n}\includegraphics[width=0.19\textwidth]{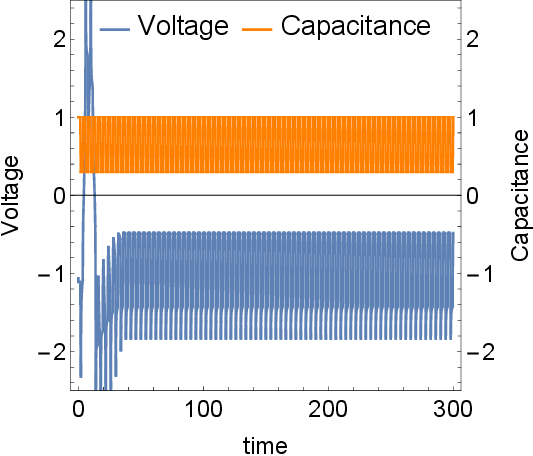}}
    \subfloat[$\blacktriangle$]{\label{fig4o}\includegraphics[width=0.19\textwidth]{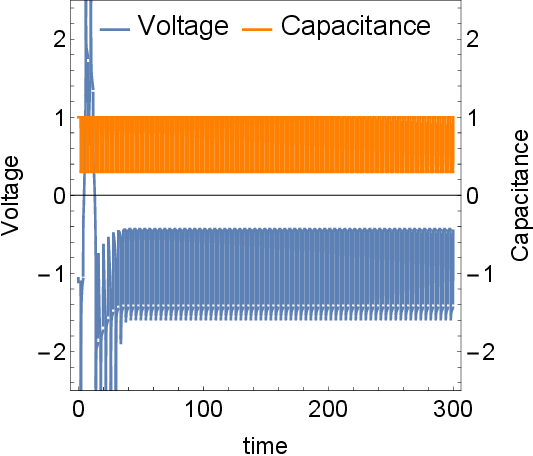}}\\

    \caption{\textbf{Action potentials triggered by steep enough, large enough, and not-so-fast changes}. Each row of plots illustrates the effects of time-dependent capacitance with a period of (a)--(e) $T=100$, (f)--(j) $T = 30$, and (k)--(o) $T=4$. (a), (f) and (k) show the spikes per cycle in the $t\in[0,1000]$ interval. Here, $\kappa_4-\kappa_3 = \kappa_2-\kappa_1$ so that the downward and upward slopes of the capacitance have the same magnitude. In addition, $\Delta C = C_0 - C_1$ (with $C_0 =1$), so that larger $\Delta C$ indicates a bigger jump in the capacitance. (b)--(e), (g)--(j), and (l)--(o) show the voltage and the capacitance of selected cases as marked on the heatmap (a), (f), and (k), respectively. For (b), (g), and (l) $\kappa_1 = 0.37$, $\kappa_3 = 0.87$, $ \Delta C = 0.3$. For (c), (h), and (m) $\kappa_1 = 0.45$, $\kappa_3 = 0.95$, $\Delta C = 0.3$. For (d), (i), and (n) $\kappa_1 = 0.37$, $\kappa_3 = 0.87$, $\Delta C = 0.7$. For (e), (j), and (o)  $\kappa_1 = 0.45$, $\kappa_3 = 0.95$, $\Delta C = 0.7$ }
    \label{fig4}

\end{figure}

    \item \textbf{Experiment 2}: We designed a capacitance function using the results from Theorem \ref{cor_piecewise_constant_input} and Theorem \ref{thm_cap_no_ap}. First, we construct a capacitance containing a single abrupt variation from a smaller value $C_A$ to a larger value $C_B$ with $C_A / C_B < \delta$ (Figure~\ref{fig5a}). Second, we wait a time long enough so the voltage returns close to the equilibrium value $v_*$, and then, using a slight slope, we slowly change the values of the capacitance from $C_B$ to $C_A$ (See Figure \ref{fig5b}). The existence of such slope for the capacitance variation is guaranteed by Theorem \ref{thm_cap_no_ap}. Third, after we reach the capacitance $C_A$ and the voltage is close enough to the equilibrium, we can repeat the process to generate persistent action potentials, as shown in Figure~\ref{fig5c}. Once we have constructed a capacitance function that generates one action potential per cycle, due to the continuity of the voltage with respect to the capacitance, we know that we can slightly speed up the capacitance cycle and still observe a single action potential per cycle (Figure~\ref{fig5d}).
    
\begin{figure}
    \centering   
    \subfloat[]{\label{fig5a}\includegraphics[width=0.24\textwidth]{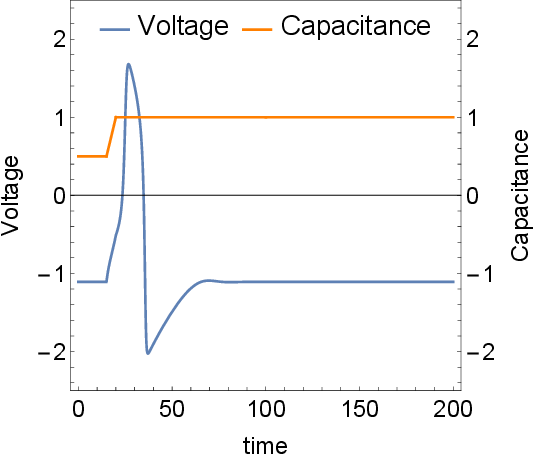}} ~  
    \subfloat[]{\label{fig5b}\includegraphics[width=0.24\textwidth]{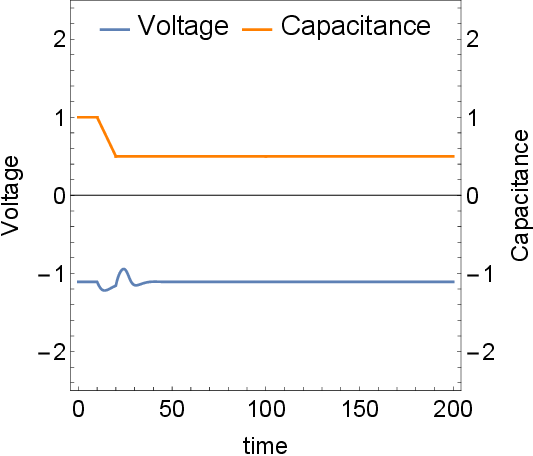}} ~
    \subfloat[]{\label{fig5c}\includegraphics[width=0.24\textwidth]{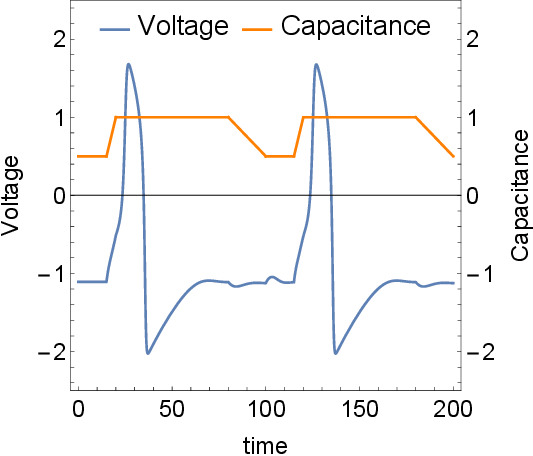}}  ~ 
    \subfloat[]{\label{fig5d}\includegraphics[width=0.24\textwidth]{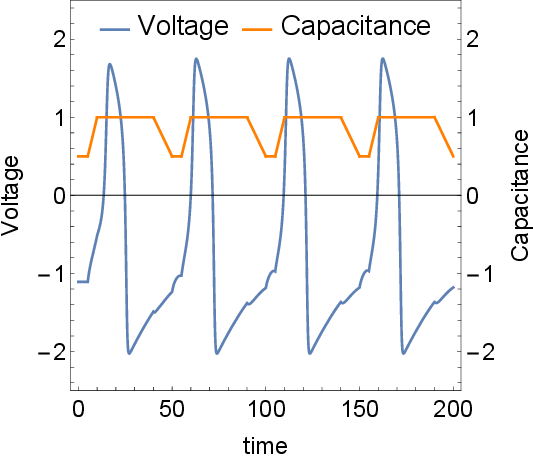}} \\
    \caption{\textbf{Designing a capacitance function for action potential generation}. (a) A steep upward variation from $C_A$ to $C_B$ generates an action potential. (b) For an initial state close to equilibrium, action potentials will not be triggered if the variation from $C_B$ to $C_A$ has a small slope. (c) By combining (a) and (b) in a periodic profile, action potentials are persistently triggered. (d) A speeded-up version of the profile constructed in (c) also generates one spike per cycle. In (c), $T = 100$, $\kappa_1 = 0.1$, $\kappa_2 = 0.2$, $\kappa_3 = 0.8$, $\kappa_4= 1$. (d) as in (c), but with $T=50$. In all cases, $C_A = 0.5$, $C_B = 1$
    }\label{figure_experiments3}
\end{figure}

    \item \textbf{Experiment 3}: Illustrates the results of Theorem \ref{thm_high_frequency}, and  quantifies the effects of changing the capacitance period. Let $C: [0,\infty) \to \R$ be the capacitance constructed in Experiment 2, with period $T=50$, and consider the $\tau$-periodic capacitance $C_\tau (t) = C(\frac{T}{\tau} t)$ for different values of $\tau>0$.  We set as initial condition the equilibrium point $(v_1, w_1)$ of the averaged system described by \eqref{averaged_system_equilibrium_intro}, and we measure the deviation $E(t) = |C(t) v(t) - C_* v_1|$. First, for $\tau = 50$, we observe one spike per cycle (Figure~\ref{fig6a}). 
    Second, for $\tau = 10$, we observe only one action potential every four cycles of the capacitance (Figure~\ref{fig6b}). However, for smaller $\tau$ (shorter periods of the capacitance), the membrane voltage oscillates around $v=v_1$, as illustrated for $\tau = 5$ (Figure~\ref{fig6c}), and $\tau = 2$ (Figure~\ref{fig6d}).
    This behavior is predicted by Theorem~\ref{thm_high_frequency}. In these Figures, we can also appreciate how the deviation $E(t)$ decreases as the period $\tau$ is shortened, making it evident that the scaling considered in Theorem \ref{thm_high_frequency}, \textit{i.e.}, studying the electrical charge $C(t)v(t)$, is fundamental to describe the asymptotic behavior.

\begin{figure}
    \centering   
    \subfloat[]{\label{fig6a}\includegraphics[width=0.24\textwidth]{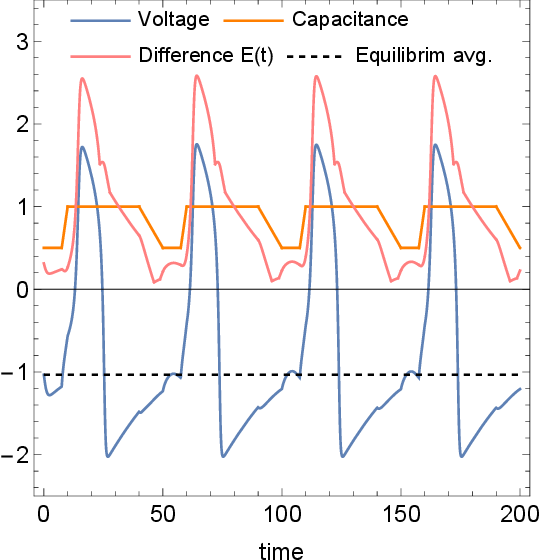}} ~
    \subfloat[]{\label{fig6b}\includegraphics[width=0.24\textwidth]{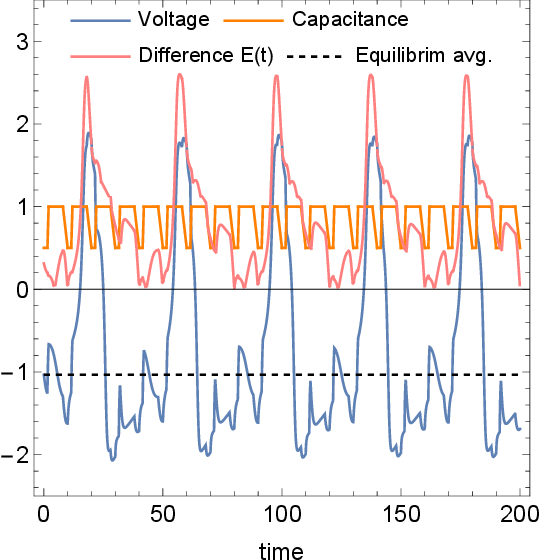}} ~
    \subfloat[]{\label{fig6c}\includegraphics[width=0.24\textwidth]{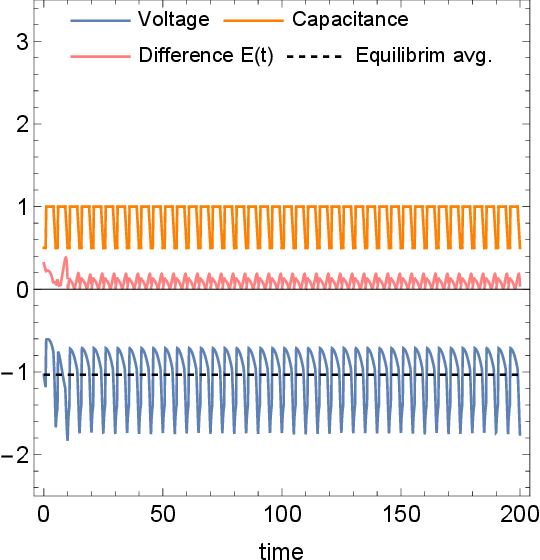}} ~
    \subfloat[]{\label{fig6d}\includegraphics[width=0.24\textwidth]{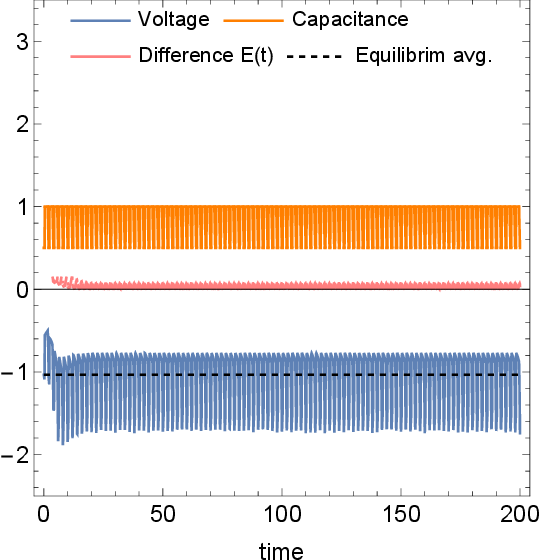}} \\
    \caption{\textbf{Action potential generation depends on the period of time-dependent capacitance}. Voltage (blue), capacitance (orange), equilibrium average (dotted black), and deviation $E(t) = |C(t) v(t) - C_* v_1|$ (pink), for decreasing values of the period $\tau$ in Theorem 5 ($T=50$). (a) $\tau=50$, (b) $\tau=10$, (c) $\tau=5$ , and (d) $\tau=2$ . In this case, action potentials are not generated for $\tau\leq 10$
    }\label{fig6}
\end{figure}    
    
    \item \textbf{Experiment 4}: Here, we reproduce Experiment 1 for the more complete HH model~\cite{Hodgkin1952a}. We illustrate how a similar prescription of neural response can be obtained for the HH model. Indeed, Figures \ref{fig7} and \ref{fig8}, which illustrate the results of this Experiment, exhibit outcomes analogous to those observed in Figures \ref{fig3} and \ref{fig4} for the FHN model.

    We incorporated a time-dependent capacitance, as in \eqref{variable_capacitance_structure}, to a space-clamped HH neuron model (Appendix \ref{appendix_hh_equations}). In these simulations, we fixed  $C_0 = 1~\mu F/cm^2$, $\kappa_2 = 0.5$, $\kappa_4 =1$, and we varied the values of $T$, $C_1$, $\kappa_1$, and $\kappa_3$. We measured the spikes per cycle in the $t\in [0, 1000]$ interval.       

\begin{figure}
    \subfloat[]{\label{fig7a}\includegraphics[width=0.2\textwidth]{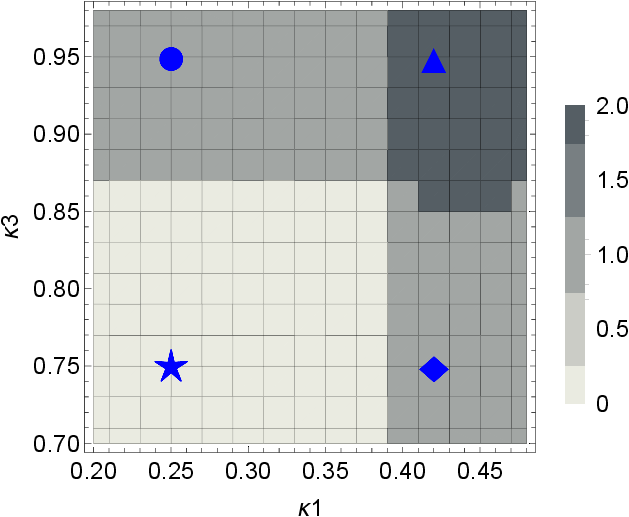}}
    \subfloat[$\bigstar$]{
    \label{fig7b}\includegraphics[width=0.19\textwidth]{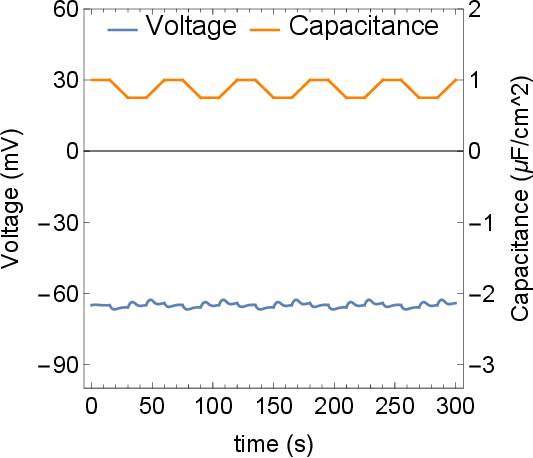}}
    \subfloat[$\blacklozenge$]{
    \label{fig7c}\includegraphics[width=0.19\textwidth]{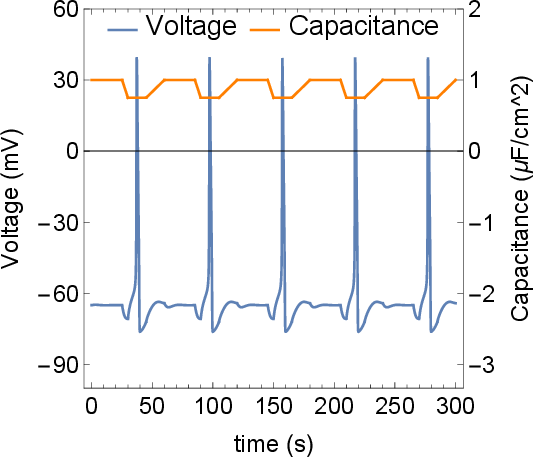}}
    \subfloat[$\bullet$]{
    \label{fig7d}\includegraphics[width=0.19\textwidth]{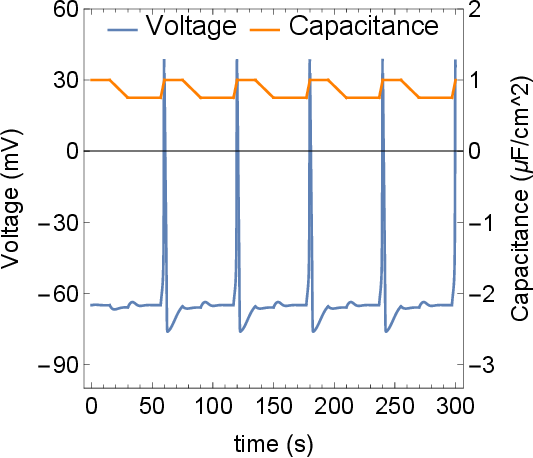}}
    \subfloat[$\blacktriangle$]{
    \label{fig7e}\includegraphics[width=0.19\textwidth]{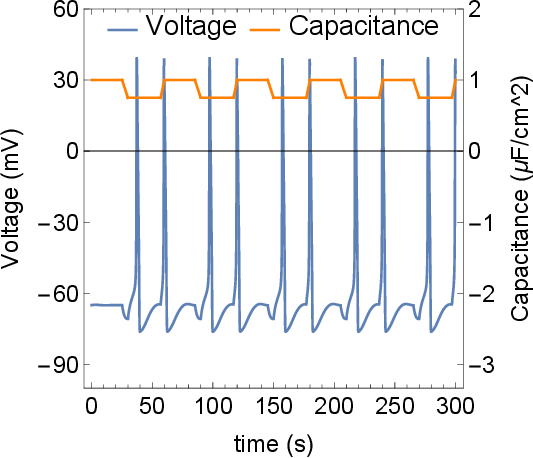}}\\
    \caption{\textbf{An HH neuron exhibits qualitatively similar behavior to an FHN neuron}. (a) Spikes per cycle for different values of $\kappa_1$ and $\kappa_3$. Note that as the downward and/or upward slope increases, it is possible to observe action potentials due to the variations in the capacitance. Plots (b)--(e) show the voltage and the capacitance for: (b)  $(\kappa_1,\kappa_3)=(0.25,0.75)$, (c) $(\kappa_1,\kappa_3)=(0.42,0.75)$, (d) $(\kappa_1,\kappa_3)=(0.25,0.95)$, and (e)  $(\kappa_1,\kappa_3)=(0.42,0.95)$. Here, $\kappa_2 = 0.5$, $\kappa_4 = 1$, $T= 60$~$ms$, $C_0=1~\mu F/cm^2$, $C_1 = 0.75~\mu F/cm^2$ 
    }
    \label{fig7}
\end{figure}

    As explained in Subsection \ref{section_extension_HH}, abrupt variations in the capacitance may induce membrane depolarization and hyperpolarization, both illustrated in Figure~\ref{fig7}. Figure~\ref{fig7a} summarizes the spikes per cycle for $T=60$~$ms$, $C_0 = 1~\mu F/cm^2$, and $C_1 = 0.75~\mu F/cm^2$, for different values of $\kappa_1$ and $\kappa_3$. Like Experiment 1, depending on the magnitude of the slopes, we observe no action potentials (Figure~\ref{fig7b}),  action potentials for the downward variations only (Figure~\ref{fig7c}),  action potentials for upward variations only (Figure ~\ref{fig7d}), and action potential for both downward and upward variations (Figure~\ref{fig7e}). We note that for abrupt downward variations, action potentials are elicited after a brief period of hyperpolarization (Figure~\ref{fig7c}), similar to the anode break excitation phenomenon~\cite{Guttman1972}, but with much shorter dynamics. In addition, for upward variations, action potentials follow a brief period of depolarization (Figure~\ref{fig7d}). 

\begin{figure}
    \subfloat[]{\label{fig8a}\includegraphics[width=0.2\textwidth]{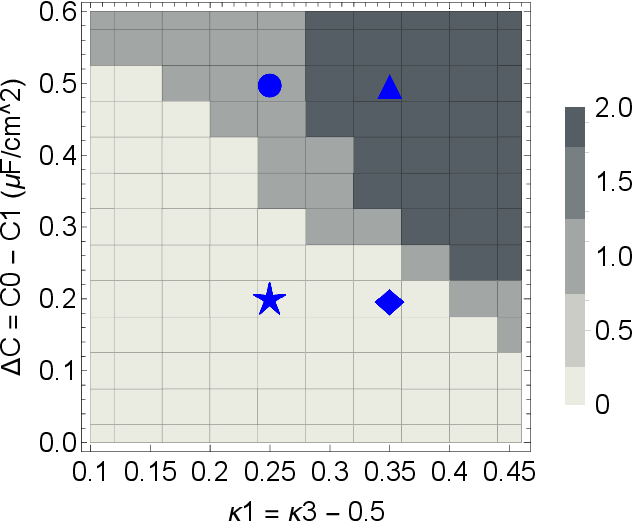}}
    \subfloat[$\bigstar$]{
    \label{fig8b}\includegraphics[width=0.19\textwidth]{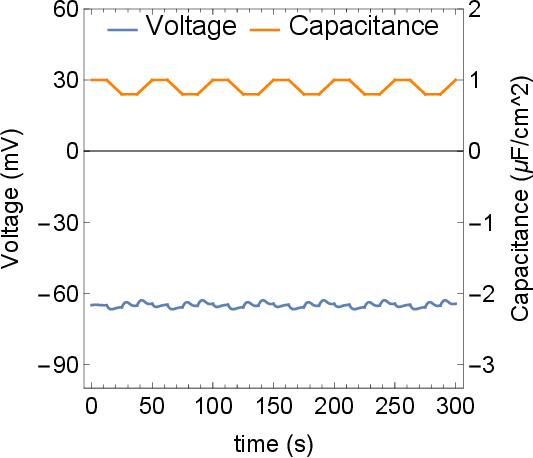}}
    \subfloat[$\blacklozenge$]{
    \label{fig8c}\includegraphics[width=0.19\textwidth]{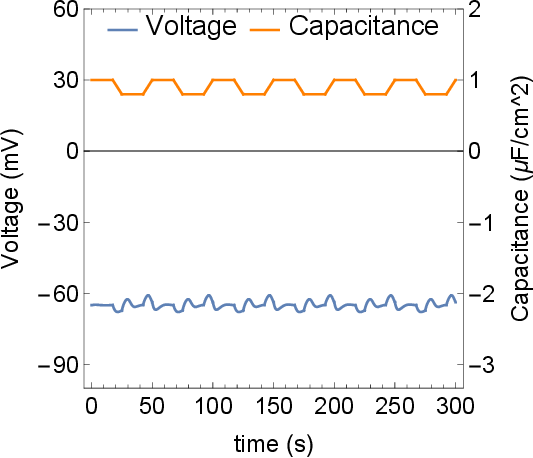}}
    \subfloat[$\bullet$]{
    \label{fig8d}\includegraphics[width=0.19\textwidth]{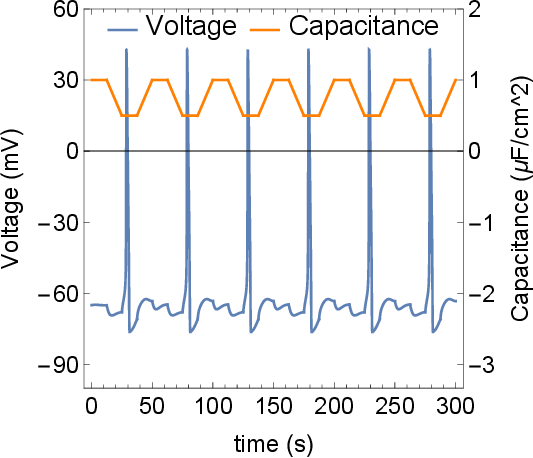}}
    \subfloat[$\blacktriangle$]{
    \label{fig8e}\includegraphics[width=0.19\textwidth]{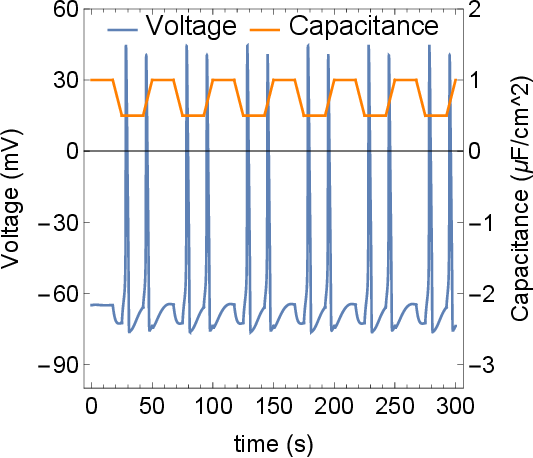}}\\
    \subfloat[]{\label{fig8f}\includegraphics[width=0.2\textwidth]{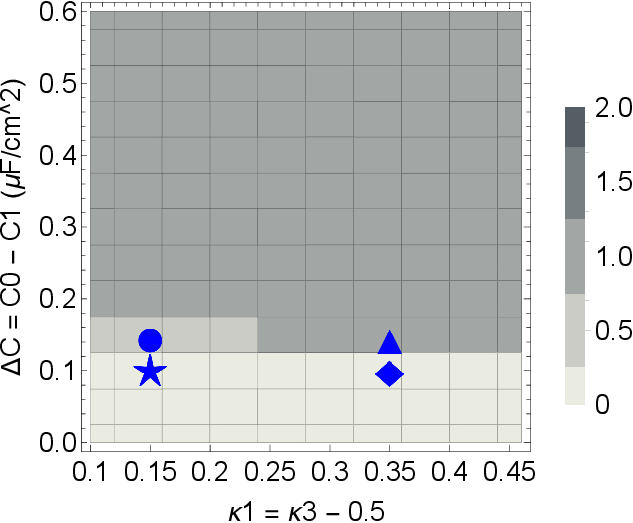}}
    \subfloat[$\bigstar$]{
    \label{fig8g}\includegraphics[width=0.19\textwidth]{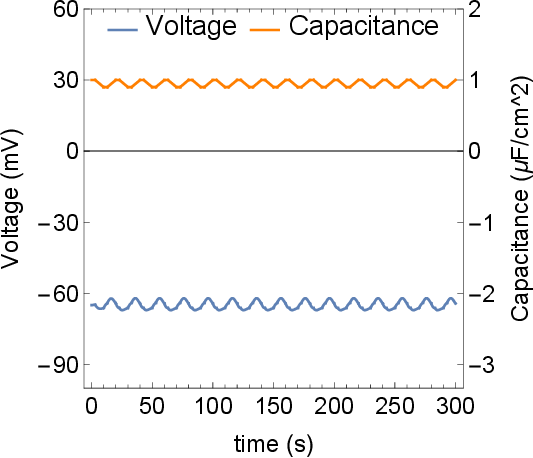}}
    \subfloat[$\blacklozenge$]{
    \label{fig8h}\includegraphics[width=0.19\textwidth]{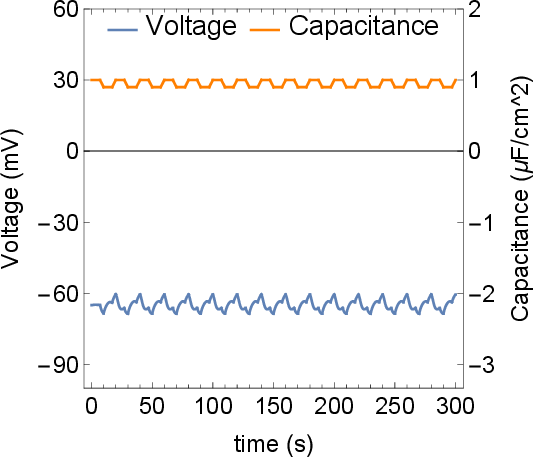}}
    \subfloat[$\bullet$]{
    \label{fig8i}\includegraphics[width=0.19\textwidth]{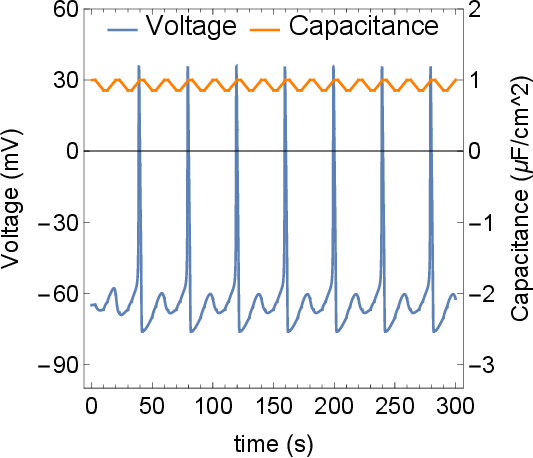}}
    \subfloat[$\blacktriangle$]{
    \label{fig8j}\includegraphics[width=0.19\textwidth]{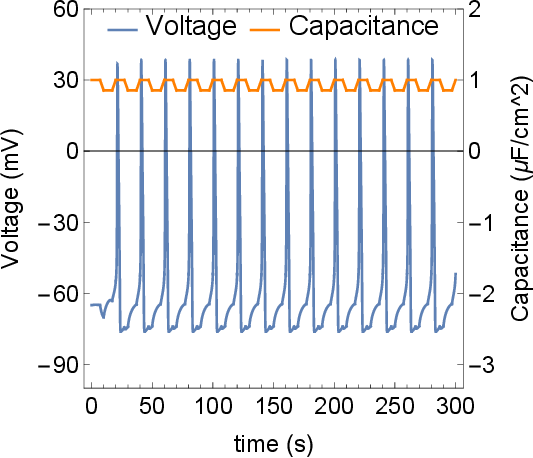}}\\
    \subfloat[]{\label{fig8k}\includegraphics[width=0.2\textwidth]{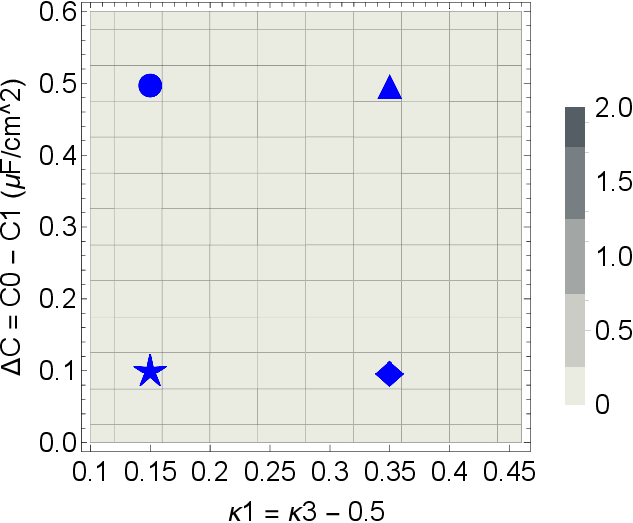}}
    \subfloat[$\bigstar$]{
    \label{fig8l}\includegraphics[width=0.19\textwidth]{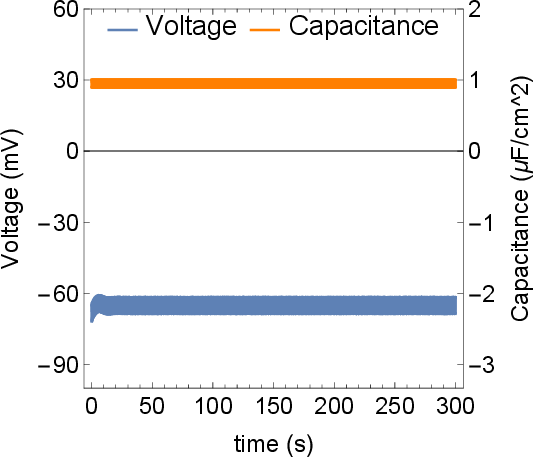}}
    \subfloat[$\blacklozenge$]{
    \label{fig8m}\includegraphics[width=0.19\textwidth]{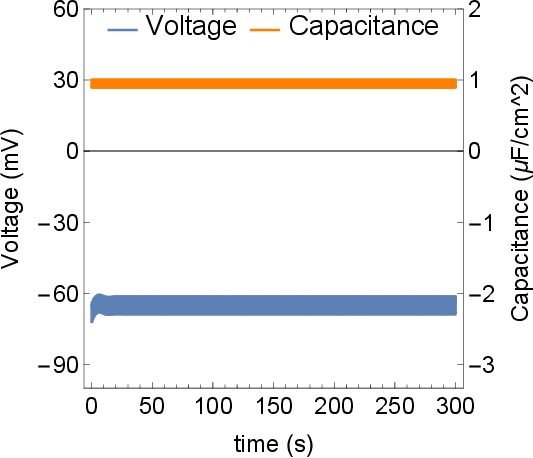}}
    \subfloat[$\bullet$]{
    \label{fig8n}\includegraphics[width=0.19\textwidth]{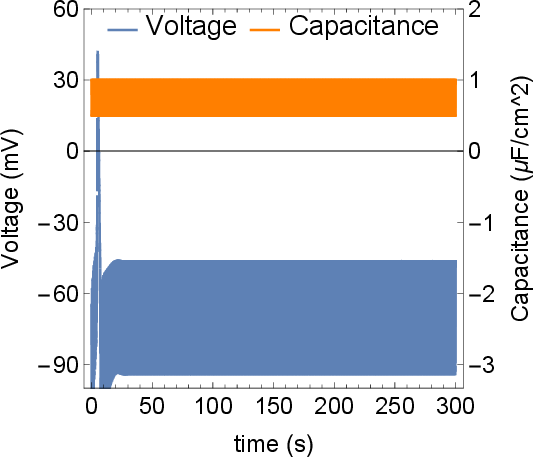}}
    \subfloat[$\blacktriangle$]{
    \label{fig8o}\includegraphics[width=0.19\textwidth]{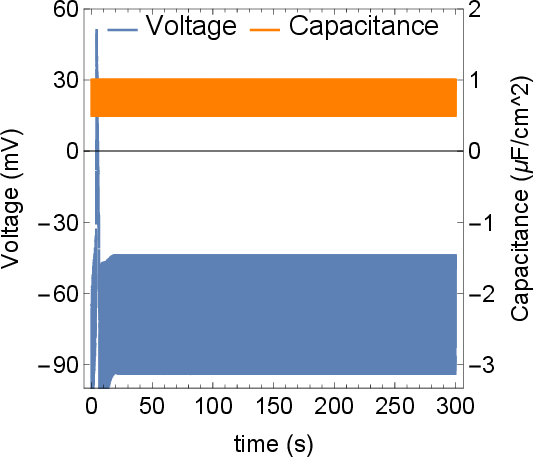}}\\
    \caption{\textbf{Action potentials due to capacitance changes in an HH neuron}. Each row of plots illustrates the effects of time-dependent capacitance with a period of (a)--(e) $T=50~ms$, (f)--(j) $T = 20~ms$, and (k)--(o) $T=0.5~ms$. (a), (f) and (k) show the spikes per cycle in the $t\in[0,1000]$ interval. Here, $\kappa_4-\kappa_3 = \kappa_2-\kappa_1$ so that the downward and upward slopes of the capacitance have the same magnitude. In addition, $\Delta C = C_0 - C_1$ (with $C_0 =1$), so that larger $\Delta C$ indicates a larger variation in the capacitance. (b)--(e), (g)--(j), and (l)--(o) show the voltage and the capacitance of selected cases as marked on the heatmap (a), (f), and (k), respectively. (b) $\kappa_1=0.15$, $\kappa_3 = 0.65$, $C_1 = 0.8$, (c) $\kappa_1=0.35$, $\kappa_3 = 0.85$, $C_1 = 0.8$, (d) $\kappa_1=0.15$, $\kappa_3 = 0.65$, $C_1 = 0.5$, (e) $\kappa_1=0.35$, $\kappa_3 = 0.85$, $C_1 = 0.5$, (g) $\kappa_1=0.15$, $\kappa_3 = 0.65$, $C_1 = 0.9$, (h) $\kappa_1=0.35$, $\kappa_3 = 0.85$, $C_1 = 0.9$, (i) $\kappa_1=0.15$, $\kappa_3 = 0.65$, $C_1 = 0.855$, (j) $\kappa_1=0.35$, $\kappa_3 = 0.85$, $C_1 = 0.855$, (l) $\kappa_1=0.15$, $\kappa_3 = 0.65$, $C_1 = 0.9$, (m) $\kappa_1=0.35$, $\kappa_3 = 0.85$, $C_1 = 0.9$, (n) $\kappa_1=0.15$, $\kappa_3 = 0.65$, $C_1 = 0.5$, (o) $\kappa_1=0.35$, $\kappa_3 = 0.85$, $C_1 = 0.5$
    }\label{fig8}
\end{figure}    
    
    In the second part of this experiment, for three different values of $T$ ($T=50~ms, 20~ms, 0.5~ms$), we let $\kappa_2 = 0.5$, $\kappa_4 = 1$, and we varied $\kappa_1$ ($= \kappa_3-0.5$) and the magnitude of the variations, $\Delta C = C_0 - C_1$ (Figure \ref{fig8}). Since the value of $T$ is large enough compared to the duration of an action potential, it is possible to observe zero, one, or two spikes per cycle, depending on the values of $\kappa_1$ and $\Delta C$ (Figures \ref{fig8a} -- \ref{fig8e}). Further, since the value of $T$ is comparable with the duration of the action potential, for a wide range of values of $\kappa_1$ and $\Delta C$, no more than one spike per cycle is observed (Figures \ref{fig8f} -- \ref{fig8j}). This observation can be explained by the refractory period, a characteristic of HH neurons with an inactivation gate~\cite{Hodgkin1952a}. Lastly, in cases for which the oscillations of the capacitance are much faster than the duration of the action potential, the HH neuron mimics the behavior of the corresponding averaged system, and no persistent action potentials are observed (Figures~\ref{fig8k} -- \ref{fig8o}). Taken together, our findings demonstrate that incorporating a time-dependent membrane capacitance may result in similar membrane voltage responses for both FHN (Figure~\ref{fig4}) and HH (Figure~\ref{fig8}) neurons.  
\end{itemize}

\section{Conclusions}
In this study, we used a novel formulation of an FHN neuron with time-dependent membrane capacitance to rigorously demonstrate that action potentials can be generated if the capacitance is forced to vary significantly and abruptly enough. We also provide a precise mechanism for designing time-dependent membrane capacitance with controlled neural responses, which may be exploited to generate action potentials using external stimuli to modify the capacitance. Novel neuromodulation techniques, \textit{e.g.}, ultrasound stimulation, aim at activating nerve fibers by altering their membrane capacitance. Thus, our results can have great implications for the design and analysis of such technologies.

\section*{Aknowledgment}
The authors thank the neurostimulation group at the ACIP Millennium Nucleus for their fruitful discussion of ultrasonic neurostimulation.

\section*{Declarations}

\begin{itemize}
\item Conflict of interest: The authors have no conflict of interests to declare that are relevant to the content of this article.
\item Data availability: All code used for the simulations and the construction of the figures is available in the GitHub repository \\
\url{https://github.com/estebanpaduro/FHN_variable_capacitance}
\end{itemize}

\appendix

\section{Carateodory's Theorem}
Caratheodory's existence theorem; see \cite[Chapter 2 Theorem 1.1]{coddingtonTheoryOrdinaryDifferential2012}. 
    \begin{theorem}[Caratheodory's Theorem]\label{thm_caratheodory}
Let $f$ be defined on 
$$R = \{(t,x) \in \R \times \R^n, |t-\tau|\leq a, |x- \xi|\leq b\},$$
where $(\tau, \xi)\in \R^{n+1}$ is a fixed point and $a$, $b> 0$. Suppose $f$ is measurable in $t$ for each fixed $x$, continuous in $x$ for each fixed $t$. If there exists a Lebesgue-integrable function $m$ on the interval $|t-\tau|\leq a$ such that 
$$|f(t,x)|\leq m(t), \quad (t,x)\in R,$$
then there exists a solution  $\varphi$ of 
$$
x' = f(t,x),\quad x(\tau) = \xi,
$$
in the extended sense on some interval $|t-\tau |\leq \beta$, ($\beta >0$), satisfying $\varphi(\tau) = \xi$. i.e. the function $x = \varphi(t)$ satisfy
$$x(t) = \xi + \int_\tau^t f(s,x(s)) ds.$$

    \end{theorem}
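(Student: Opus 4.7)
The plan is to prove Carathéodory's theorem by constructing approximate solutions using Tonelli's delayed-argument method and then extracting a convergent subsequence via the Arzelà--Ascoli theorem. The integral equation $x(t)=\xi+\int_\tau^t f(s,x(s))\,ds$ is the natural notion of ``solution in the extended sense'' to aim for.

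First, I would shrink the interval. Since $m$ is Lebesgue integrable on $|t-\tau|\leq a$, absolute continuity of its integral lets me pick $\beta\in(0,a]$ with $\int_\tau^{\tau+\beta} m(s)\,ds\leq b$ (and analogously on $[\tau-\beta,\tau]$). For each $n\in\N$ I would define a delayed approximate solution $x_n$ on $[\tau,\tau+\beta]$ by setting $x_n(t)=\xi$ on $[\tau,\tau+\beta/n]$ and $x_n(t)=\xi+\int_\tau^{t-\beta/n}f(s,x_n(s))\,ds$ on $[\tau+\beta/n,\tau+\beta]$. The recursion is well posed on successive subintervals of length $\beta/n$ because the integrand only depends on values already defined.

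Next, I would verify three properties of the family $\{x_n\}$. Using $|f(s,x)|\leq m(s)$ for $(s,x)\in R$, an induction on the successive subintervals gives $|x_n(t)-\xi|\leq\int_\tau^t m(s)\,ds\leq b$, so that $(s,x_n(s))\in R$ throughout and the integrand is always legitimately dominated. For $\tau\leq t_1<t_2\leq\tau+\beta$, the estimate $|x_n(t_2)-x_n(t_1)|\leq\int_{t_1-\beta/n}^{t_2-\beta/n} m(s)\,ds$ yields equicontinuity uniform in $n$, again by absolute continuity of $\int m$. Arzelà--Ascoli then produces a subsequence $x_{n_k}$ converging uniformly to a continuous $x:[\tau,\tau+\beta]\to\R^n$ with values in the closed ball $\overline{B}(\xi,b)$.

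Finally, I would pass to the limit in the integral equation. For each fixed $s$, $x_{n_k}(s)\to x(s)$, and continuity of $f(s,\cdot)$ gives $f(s,x_{n_k}(s))\to f(s,x(s))$ pointwise; the domination $|f(s,x_{n_k}(s))|\leq m(s)$ allows the dominated convergence theorem to pass to the limit in $\int_\tau^{t-\beta/n_k}f(s,x_{n_k}(s))\,ds$, while the vanishing delay $\beta/n_k$ in the upper limit contributes at most $\int_{t-\beta/n_k}^t m(s)\,ds\to 0$. Running the symmetric construction on $[\tau-\beta,\tau]$ completes the proof. The main obstacle is precisely this limiting step under the weak Carathéodory hypotheses: since $f$ is only measurable in $t$, no equicontinuity of $f$ itself is available and one must lean carefully on continuity of $f$ in the spatial variable together with the integrable majorant $m$; this is also what forces the a priori shrinking of $\beta$, so that all the approximate values stay in $R$ and $m$ dominates $|f(s,x_n(s))|$ at every step.
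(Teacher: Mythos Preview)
Your proof via Tonelli's delayed-argument approximations is correct and is one of the standard routes to Carath\'eodory's theorem; the only minor omission is an explicit remark that $s\mapsto f(s,x_n(s))$ and $s\mapsto f(s,x(s))$ are measurable (which follows from the Carath\'eodory conditions and the continuity of $x_n$, $x$), but this is routine.

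There is, however, nothing to compare against in the paper: Theorem~\ref{thm_caratheodory} is stated in the appendix purely as a quoted result, with a citation to \cite[Chapter~2, Theorem~1.1]{coddingtonTheoryOrdinaryDifferential2012}, and no proof is given. For the record, the proof in Coddington--Levinson proceeds via a piecewise-constant (Euler-type) approximation scheme rather than Tonelli's delay, but both methods share the same skeleton (approximate solutions staying in $R$ by the choice of $\beta$, equicontinuity from the integrable bound $m$, Arzel\`a--Ascoli, and dominated convergence to pass to the limit). Your approach is equally valid; Tonelli's construction has the mild advantage that the approximants are defined by an explicit recursion without needing to partition the time interval.
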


\section{HH equations}\label{appendix_hh_equations}
In this paper, we use the HH neuron model with a capacitive displacement current due to temporal capacitance variations, as in \cite{plaksinIntramembraneCavitationPredictive2014}, with the parameters estimated in \cite{Hodgkin1952a}. Here, $v = V_\text{rest}-V$ denotes the negative depolarization, and $V$ denotes the membrane voltage in mV. We set the resting potential at $V_\text{rest} = -65 mV$. Explicitly, the HH system is given by
\begin{equation*}
\frac{d}{dt} \left(C(t)V\right) = -g_{Na} m^3 h (V - V_\text{rest} -E_{Na}) - g_K n^4 (V - V_\text{rest} - E_{K}) - g_L (V - V_\text{rest} -E_L),
\end{equation*}
\begin{equation*}
\quad \frac{d}{dt} x = \alpha_x(V_\text{rest}-V)(1-x) -\beta_x(V_\text{rest}-V) x,\quad x= m, n, h,
\end{equation*}
where
\begin{align*}
\alpha_m(v) &=0.1 (v+25)/\left(\exp \frac{v+25}{10}-1 \right) , &\beta_m(v) &=  4 \exp (v/18),\\
\alpha_h(v) &= 0.07 \exp (v/20), &\beta_h(v) &= 1/\left( \exp\frac{v+30}{10}+1\right),\\
\alpha_n(v) &= 0.01 (v +10)/(\exp \frac{v+10}{10}-1), &\beta_n(v) &=0.125 \exp(v/80) , 
\end{align*}
and $g_{Na} = 120~m S/cm^2$, $g_K = 36~mS/cm^2$, $g_L = 0.3~mS/cm^2$, $E_{Na} = 115~mV$, $E_K = -12~mV$, $E_L = 10.613~mV$. Initial values are chosen so that,  for constant $C$, the system is at rest at $t=0$.

\bibliographystyle{habbrv}
\bibliography{bibliography}

\begin{thebibliography}{10}
\expandafter\ifx\csname url\endcsname\relax
  \def\url#1{\texttt{#1}}\fi
\expandafter\ifx\csname doi\endcsname\relax
  \def\doi#1{\burlalt{doi:#1}{http://dx.doi.org/#1}}\fi
\expandafter\ifx\csname urlprefix\endcsname\relax\def\urlprefix{URL }\fi
\expandafter\ifx\csname href\endcsname\relax
  \def\href#1#2{#2}\fi
\expandafter\ifx\csname burlalt\endcsname\relax
  \def\burlalt#1#2{\href{#2}{#1}}\fi

\bibitem{blackmoreUltrasoundNeuromodulationReview2019}
J.~Blackmore, S.~Shrivastava, J.~Sallet, C.~R. Butler, and R.~O. Cleveland.
\newblock Ultrasound {{Neuromodulation}}: {{A Review}} of {{Results}},
  {{Mechanisms}} and {{Safety}}.
\newblock {\em Ultrasound in Medicine \& Biology}, 45(7):1509--1536, 2019.
\newblock \doi{10.1016/j.ultrasmedbio.2018.12.015}.

\bibitem{branickyAnalyzingContinuousSwitching1994}
M.~Branicky.
\newblock Analyzing continuous switching systems: Theory and examples.
\newblock In {\em Proceedings of 1994 American Control Conference - ACC '94},
  volume~3, pages 3110--3114 vol.3, 1994.
\newblock \doi{10.1109/ACC.1994.735143}.

\bibitem{branickyContinuityODESolutions1994}
M.~S. Branicky.
\newblock Continuity of {{ODE}} solutions.
\newblock {\em Applied Mathematics Letters}, 7(5):57--60, 1994.
\newblock \doi{10.1016/0893-9659(94)90073-6}.

\bibitem{cerpaImpactHighFrequencybased2024}
E.~Cerpa, N.~Corrales, M.~Courdurier, L.~E. Medina, and E.~Paduro.
\newblock {The Impact of High-Frequency-Based Stability on the Onset of Action
  Potentials in Neuron Models}.
\newblock {\em SIAM Journal on Applied Mathematics}, 84(5):1910--1936, 2024,
  \burlalt{2402.05886}{http://arxiv.org/abs/2402.05886}.
\newblock \doi{10.1137/24m1645632}.

\bibitem{cerpaApproximationStabilityResults2023a}
E.~Cerpa, M.~Courdurier, E.~Hern{\'a}ndez, L.~E. Medina, and E.~Paduro.
\newblock Approximation and stability results for the parabolic
  {{FitzHugh-Nagumo}} system with combined rapidly oscillating sources, 2023,
  \burlalt{2305.00123}{http://arxiv.org/abs/2305.00123}.

\bibitem{coddingtonTheoryOrdinaryDifferential2012}
E.~A. Coddington and N.~Levinson.
\newblock {\em Theory of Ordinary Differential Equations}.
\newblock Tata {{McGraw-Hill}} Edition. Tata McGraw-Hill, New Delhi, 34th repr
  edition, 2012.

\bibitem{Darmani2022}
G.~Darmani, T.~Bergmann, K.~B. Pauly, C.~Caskey, L.~d. Lecea, A.~Fomenko,
  E.~Fouragnan, W.~Legon, K.~Murphy, T.~Nandi, M.~Phipps, G.~Pinton,
  H.~Ramezanpour, J.~Sallet, S.~Yaakub, S.~Yoo, and R.~Chen.
\newblock {Non-invasive transcranial ultrasound stimulation for
  neuromodulation}.
\newblock {\em Clinical Neurophysiology}, 135:51--73, 2022.
\newblock \doi{10.1016/j.clinph.2021.12.010}.

\bibitem{Golowasch2009}
J.~Golowasch, G.~Thomas, A.~L. Taylor, A.~Patel, A.~Pineda, C.~Khalil, and
  F.~Nadim.
\newblock {Membrane capacitance measurements revisited: dependence of
  capacitance value on measurement method in nonisopotential neurons.}
\newblock {\em Journal of neurophysiology}, 102:2161--2175, 2009.
\newblock \doi{10.1152/jn.00160.2009}.

\bibitem{Guttman1972}
R.~Guttman and L.~Hachmeister.
\newblock {Anode Break Excitation in Space-Clamped Squid Axons}.
\newblock {\em Biophysical Journal}, 12(5):552--563, 1972.
\newblock \doi{10.1016/s0006-3495(72)86103-4}.

\bibitem{haleOrdinaryDifferentialEquations2009}
J.~Hale.
\newblock {\em Ordinary Differential Equations}.
\newblock Dover Books on Mathematics Series. Dover Publications, 2009.

\bibitem{Hodgkin1952a}
A.~L. Hodgkin and A.~F. Huxley.
\newblock {A quantitative description of membrane current and its application
  to conduction and excitability in nerve}.
\newblock {\em J. Physiol. (London)}, 117:500--544, 1952.

\bibitem{Howell2015a}
B.~Howell, L.~E. Medina, and W.~M. Grill.
\newblock {Effects of frequency-dependent membrane capacitance on neural
  excitability}.
\newblock {\em Journal of Neural Engineering}, 12:056015, 10 2015.
\newblock \doi{10.1088/1741-2560/12/5/056015}.

\bibitem{Khalil2002}
H.~K. Khalil.
\newblock {\em Nonlinear Systems, Third Edition}.
\newblock Prentice Hall, 2002.
\newblock \doi{10.1016/j.physa.2006.08.011}.

\bibitem{lebovitz1999ordinary}
N.~Lebovitz.
\newblock {\em Ordinary Differential Equations}.
\newblock Mathematics Series. {Brooks/Cole}, 1999.

\bibitem{lemaireUnderstandingUltrasoundNeuromodulation2019}
T.~Lemaire, E.~Neufeld, N.~Kuster, and S.~Micera.
\newblock Understanding ultrasound neuromodulation using a computationally
  efficient and interpretable model of intramembrane cavitation.
\newblock {\em Journal of Neural Engineering}, 16(4):046007, 2019.
\newblock \doi{10.1088/1741-2552/ab1685}.

\bibitem{lemaireMorphoSONICMorphologicallyStructured2021}
T.~Lemaire, E.~Vicari, E.~Neufeld, N.~Kuster, and S.~Micera.
\newblock {{MorphoSONIC}}: {{A}} morphologically structured intramembrane
  cavitation model reveals fiber-specific neuromodulation by ultrasound.
\newblock {\em iScience}, 24(9):103085, 2021.
\newblock \doi{10.1016/j.isci.2021.103085}.

\bibitem{Monajjemi2015}
M.~Monajjemi.
\newblock {Cell membrane causes the lipid bilayers to behave as variable
  capacitors: A resonance with self-induction of helical proteins}.
\newblock {\em Biophysical Chemistry}, 207:114--127, 2015.
\newblock \doi{10.1016/j.bpc.2015.10.003}.

\bibitem{murphy2024}
K.~R. Murphy, J.~S. Farrell, J.~Bendig, A.~Mitra, C.~Luff, I.~A. Stelzer,
  H.~Yamaguchi, C.~C. Angelakos, M.~Choi, W.~Bian, T.~DiIanni, E.~M. Pujol,
  N.~Matosevich, R.~Airan, B.~Gaudillière, E.~E. Konofagou, K.~Butts-Pauly,
  I.~Soltesz, and L.~d. Lecea.
\newblock {Optimized ultrasound neuromodulation for non-invasive control of
  behavior and physiology}.
\newblock {\em Neuron}, 112(19):3252--3266.e5, 2024.
\newblock \doi{10.1016/j.neuron.2024.07.002}.

\bibitem{Najem2019}
J.~S. Najem, M.~S. Hasan, R.~S. Williams, R.~J. Weiss, G.~S. Rose, G.~J.
  Taylor, S.~A. Sarles, and C.~P. Collier.
\newblock {Dynamical nonlinear memory capacitance in biomimetic membranes}.
\newblock {\em Nature Communications}, 10(1):3239, 2019.
\newblock \doi{10.1038/s41467-019-11223-8}.

\bibitem{plaksinIntramembraneCavitationPredictive2014}
M.~Plaksin, S.~Shoham, and E.~Kimmel.
\newblock Intramembrane {{Cavitation}} as a {{Predictive Bio-Piezoelectric
  Mechanism}} for {{Ultrasonic Brain Stimulation}}.
\newblock {\em Physical Review X}, 4(1):011004, 2014,
  \burlalt{1307.7701}{http://arxiv.org/abs/1307.7701}.
\newblock \doi{10.1103/PhysRevX.4.011004}.

\bibitem{Rauch1978}
J.~Rauch and J.~Smoller.
\newblock Qualitative theory of the {{FitzHugh-Nagumo}} equations.
\newblock {\em Advances in Mathematics}, 27(1):12--44, 1978.
\newblock \doi{10.1016/0001-8708(78)90075-0}.

\bibitem{Rossy2014}
J.~Rossy, Y.~Ma, and K.~Gaus.
\newblock {The organisation of the cell membrane: do proteins rule lipids?}
\newblock {\em Current Opinion in Chemical Biology}, 20:54--59, 2014.
\newblock \doi{10.1016/j.cbpa.2014.04.009}.

\bibitem{severinDailyOscillationsNeuronal2024}
D.~Severin, C.~Moreno, T.~Tran, C.~Wesselborg, S.~Shirley, A.~Contreras,
  A.~Kirkwood, and J.~Golowasch.
\newblock Daily oscillations of neuronal membrane capacitance.
\newblock {\em Cell Reports}, 43(10):114744, 2024.
\newblock \doi{10.1016/j.celrep.2024.114744}.

\bibitem{yuUltrasoundInducedMembraneHyperpolarization2023}
F.~Yu, W.~S. M{\"u}ller, G.~Ehnholm, Y.~Okada, and J.-W. Lin.
\newblock Ultrasound-{{Induced Membrane Hyperpolarization}} in {{Motor Axons}}
  and {{Muscle Fibers}} of the {{Crayfish Neuromuscular Junction}}.
\newblock {\em Ultrasound in Medicine \& Biology}, 49(12):2527--2536, 2023.
\newblock \doi{10.1016/j.ultrasmedbio.2023.08.016}.

\end{thebibliography}
\end{document}